\def\noheaderplainsetup{

\topmargin=0pt \headheight=0pt \headsep=0pt  \oddsidemargin=0pt \evensidemargin=0pt  \textheight=8.9truein \textwidth=6.2truein}
\begin{document}


\newcommand{\code}[1]{\ulcorner #1 \urcorner}
\newcommand{\mldi}{\hspace{2pt}\mbox{\footnotesize $\vee$}\hspace{2pt}}
\newcommand{\mlci}{\hspace{2pt}\mbox{\footnotesize $\wedge$}\hspace{2pt}}
\newcommand{\emptyrun}{\langle\rangle} 
\newcommand{\oo}{\bot}            
\newcommand{\pp}{\top}            
\newcommand{\xx}{\wp}               
\newcommand{\legal}[2]{\mbox{\bf Lr}^{#1}_{#2}} 
\newcommand{\win}[2]{\mbox{\bf Wn}^{#1}_{#2}} 
 \newcommand{\one}{\mbox{\sc One}}
 \newcommand{\two}{\mbox{\sc Two}}
 \newcommand{\three}{\mbox{\sc Three}}
 \newcommand{\four}{\mbox{\sc Four}}
 \newcommand{\first}{\mbox{\sc Derivation}}
 \newcommand{\second}{\mbox{\sc Second}}
 \newcommand{\uorigin}{\mbox{\sc Org}}
 \newcommand{\image}{\mbox{\sc Img}}
 \newcommand{\limitset}{\mbox{\sc Lim}}
 \newcommand{\fif}{\mbox{\bf CL15}}
\newcommand{\col}[1]{\mbox{$#1$:}}

\newcommand{\sti}{\mbox{\raisebox{-0.02cm}
{\scriptsize $\circ$}\hspace{-0.121cm}\raisebox{0.08cm}{\tiny $.$}\hspace{-0.079cm}\raisebox{0.10cm}
{\tiny $.$}\hspace{-0.079cm}\raisebox{0.12cm}{\tiny $.$}\hspace{-0.085cm}\raisebox{0.14cm}
{\tiny $.$}\hspace{-0.079cm}\raisebox{0.16cm}{\tiny $.$}\hspace{1pt}}}
\newcommand{\costi}{\mbox{\raisebox{0.08cm}
{\scriptsize $\circ$}\hspace{-0.121cm}\raisebox{-0.01cm}{\tiny $.$}\hspace{-0.079cm}\raisebox{0.01cm}
{\tiny $.$}\hspace{-0.079cm}\raisebox{0.03cm}{\tiny $.$}\hspace{-0.085cm}\raisebox{0.05cm}
{\tiny $.$}\hspace{-0.079cm}\raisebox{0.07cm}{\tiny $.$}\hspace{1pt}}}

\newcommand{\seq}[1]{\langle #1 \rangle}           

\newcommand{\pstb}{\mbox{\raisebox{-0.01cm}{\large $\wedge$}\hspace{-5pt}\raisebox{0.26cm}{\small $\mid$}\hspace{4pt}}}
\newcommand{\pcostb}{\mbox{\raisebox{0.22cm}{\large $\vee$}\hspace{-5pt}\raisebox{0.02cm}{\footnotesize $\mid$}\hspace{4pt}}}

\newcommand{\sst}{\mbox{\raisebox{-0.07cm}{\scriptsize $-$}\hspace{-0.2cm}$\pst$}} 

\newcommand{\scost}{\mbox{\raisebox{0.20cm}{\scriptsize $-$}\hspace{-0.2cm}$\pcost$}} 

\newcommand{\uvalid}{\mbox{$\vdash\hspace{-5pt}\vdash\hspace{-5pt}\vdash$}}


\newcommand{\mla}{\mbox{{\Large $\wedge$}}}
\newcommand{\mle}{\mbox{{\Large $\vee$}}}

\newcommand{\pst}{\mbox{\raisebox{-0.01cm}{\scriptsize $\wedge$}\hspace{-4pt}\raisebox{0.16cm}{\tiny $\mid$}\hspace{2pt}}}
\newcommand{\gneg}{\neg}                  
\newcommand{\mli}{\rightarrow}                     
\newcommand{\cla}{\mbox{\large $\forall$}}      
\newcommand{\cle}{\mbox{\large $\exists$}}        
\newcommand{\mld}{\vee}    
\newcommand{\mlc}{\wedge}  
\newcommand{\ade}{\mbox{\Large $\sqcup$}}      
\newcommand{\ada}{\mbox{\Large $\sqcap$}}      
\newcommand{\add}{\sqcup}                      
\newcommand{\adc}{\sqcap}                      

\newcommand{\tlg}{\bot}               
\newcommand{\twg}{\top}               
\newcommand{\st}{\mbox{\raisebox{-0.05cm}{$\circ$}\hspace{-0.13cm}\raisebox{0.16cm}{\tiny $\mid$}\hspace{2pt}}}
\newcommand{\cst}{{\mbox{\raisebox{-0.05cm}{$\circ$}\hspace{-0.13cm}\raisebox{0.16cm}{\tiny $\mid$}\hspace{1pt}}}^{\aleph_0}} 
\newcommand{\cost}{\mbox{\raisebox{0.12cm}{$\circ$}\hspace{-0.13cm}\raisebox{0.02cm}{\tiny $\mid$}\hspace{2pt}}}
\newcommand{\ccost}{{\mbox{\raisebox{0.12cm}{$\circ$}\hspace{-0.13cm}\raisebox{0.02cm}{\tiny $\mid$}\hspace{1pt}}}^{\aleph_0}} 
\newcommand{\pcost}{\mbox{\raisebox{0.12cm}{\scriptsize $\vee$}\hspace{-4pt}\raisebox{0.02cm}{\tiny $\mid$}\hspace{2pt}}}

\newcommand{\psti}{\mbox{\raisebox{-0.02cm}{\tiny $\wedge$}\hspace{-0.121cm}\raisebox{0.08cm}{\tiny $.$}\hspace{-0.079cm}\raisebox{0.10cm}
{\tiny $.$}\hspace{-0.079cm}\raisebox{0.12cm}{\tiny $.$}\hspace{-0.085cm}\raisebox{0.14cm}
{\tiny $.$}\hspace{-0.079cm}\raisebox{0.16cm}{\tiny $.$}\hspace{1pt}}}

\newcommand{\pcosti}{\mbox{\raisebox{0.08cm}{\tiny $\vee$}\hspace{-0.121cm}\raisebox{-0.01cm}{\tiny $.$}\hspace{-0.079cm}\raisebox{0.01cm}
{\tiny $.$}\hspace{-0.079cm}\raisebox{0.03cm}{\tiny $.$}\hspace{-0.085cm}\raisebox{0.05cm}
{\tiny $.$}\hspace{-0.079cm}\raisebox{0.07cm}{\tiny $.$}\hspace{1pt}}}


\newtheorem{theoremm}{Theorem}[section]
\newtheorem{conditionss}{Condition}[section]
\newtheorem{thesiss}[theoremm]{Thesis}
\newtheorem{definitionn}[theoremm]{Definition}
\newtheorem{lemmaa}[theoremm]{Lemma}
\newtheorem{notationn}[theoremm]{Notation}\newtheorem{corollary}[theoremm]{Corollary}
\newtheorem{propositionn}[theoremm]{Proposition}
\newtheorem{conventionn}[theoremm]{Convention}
\newtheorem{examplee}[theoremm]{Example}
\newtheorem{remarkk}[theoremm]{Remark}
\newtheorem{factt}[theoremm]{Fact}
\newtheorem{exercisee}[theoremm]{Exercise}
\newtheorem{questionn}[theoremm]{Open Problem}
\newtheorem{conjecturee}[theoremm]{Conjecture}

\newenvironment{exercise}{\begin{exercisee} \em}{ \end{exercisee}}
\newenvironment{definition}{\begin{definitionn} \em}{ \end{definitionn}}
\newenvironment{theorem}{\begin{theoremm}}{\end{theoremm}}
\newenvironment{lemma}{\begin{lemmaa}}{\end{lemmaa}}
\newenvironment{proposition}{\begin{propositionn} }{\end{propositionn}}
\newenvironment{convention}{\begin{conventionn} \em}{\end{conventionn}}
\newenvironment{remark}{\begin{remarkk} \em}{\end{remarkk}}
\newenvironment{proof}{ {\bf Proof.} }{\  \rule{2.5mm}{2.5mm} \vspace{.2in} }
\newenvironment{idea}{ {\bf Idea.} }{\  \rule{1.5mm}{1.5mm} \vspace{.15in} }
\newenvironment{example}{\begin{examplee} \em}{\end{examplee}}
\newenvironment{fact}{\begin{factt}}{\end{factt}}
\newenvironment{notation}{\begin{notationn} \em}{\end{notationn}}
\newenvironment{conditions}{\begin{conditionss} \em}{\end{conditionss}}
\newenvironment{question}{\begin{questionn}}{\end{questionn}}
\newenvironment{conjecture}{\begin{conjecturee}}{\end{conjecturee}}

\title{The countable versus uncountable branching recurrences in computability logic\thanks{Supported by NNSF (60974082) of China.}}
\author{Wenyan Xu \ and \ Sanyang Liu}
\date{}
\maketitle

\begin{abstract} This paper introduces a new simplified version of the countable branching recurrence $\cst$ of Computability Logic, proves its equivalence to the old one, and shows that the basic logic induced by $\cst$ (i.e., the one in the signature $\{\neg,\wedge,\vee,\cst,\ccost\}$) is a proper superset of the basic logic induced by the uncountable branching recurrence $\st$ (i.e., the one in the signature $\{\neg,\wedge,\vee,\st,\cost\}$). A further result of this paper is showing that $\cst$ is strictly weaker than $\st$ in the sense that $\st F$ logically implies $\cst F$ but not vice versa.

\end{abstract}

\noindent {\em MSC}: primary: 03B47; secondary: 03B70; 68Q10; 68T27; 68T15.

\

\noindent {\em Keywords}: Computability logic; Cirquent calculus; Interactive computation; Game semantics; Resource semantics.


\section{Introduction}\label{ssintr}

{\em Computability logic} (CoL), introduced by G. Japaridze \cite{Jap03,Japfin}, is a formal theory of interactive computational problems, understood as games between a machine and its environment (symbolically named as $\top$ and $\bot$, respectively). Formulas in it represent such problems, logical operators stand for operations on them, and ``truth" means existence of an algorithmic solution, i.e. $\top$'s effective winning strategy.

Among the most important operators of CoL are {\em recurrence operators}, in their overall logical spirit reminiscent of the exponentials of linear logic. Recurrences, in turn, come in several flavors, two most natural and basic sorts of which are {\em countable branching recurrence} $\cst$ and {\em uncountable branching recurrence} $\st$, together with their duals $\ccost,\cost$  defined by $\ccost F=\neg\cst\neg F$ and $\cost F=\neg\st\neg F$. Intuitive discussions and elaborations on the two sorts of recurrences and the relations between them were given in \cite{Japfour,Japsep,Mosc}. However, finding syntactic characterizations  of the logic induced by recurrences had remained among the greatest challenges in CoL until the recent work \cite{Japtam,Japtam2}, where a sound and complete axiomatization, called {\bf CL15},  for the basic $(\neg,\wedge,\vee,\st,\cost)-$fragment of computability logic was constructed.\footnote{The soundness part was proven in \cite{Japtam}, and the completeness part in \cite{Japtam2}.} At the same time, the logical behavior of countable branching recurrence $\cst$ still remains largely ununderstood. It is not even known whether the set of principles validated by $\cst$ is recursively enumerable. The present paper brings some initial light into this otherwise almost completely dark picture. It introduces a new simplified definition of $\cst$ and proves that the new version of $\cst$ is logically equivalent to the old one originally introduced in \cite{Japfour}. Relying on this equivalence, the paper then shows that the set of principles validated by $\cst,\ccost$ in combination with the basic operations $\neg,\wedge,\vee$ is a proper superset of the set of those validated by $\st,\cost$. This is achieved by positively settling Conjecture 6.4 of \cite{Japtam}, according to which {\bf CL15} continues to be sound---but not complete--- with $\cst$ and $\ccost$ instead of $\st$ and $\cost$. Further, to make our investigation of
the relationship between $\cst$ and $\st$ more complete, at the end of this paper we also prove that $\cst$ is strictly weaker than $\st$ in the sense that  $\st F$ logically implies $\cst F$ but not vice versa.


{\bf CL15} is a system built in {\em cirquent calculus}. The latter is a
refinement of sequent calculus. Unlike the more traditional proof theories that manipulate tree-like objects (formulas, sequents, hypersequents, etc.), cirquent calculus deals with graph-style structures called {\em cirquents} (the term is a combination of ``CIRcuit'' and ``seQUENT''), with its main characteristic feature being allowing to explicitly account for possible {\em sharing} of subcomponents between different subcomponents. The approach was introduced by Japaridze
in \cite{Cirq} as a new deductive tool for CoL and was further developed in \cite{Cirdeep,fromto,wenyan1,wenyan2} where a number of advantages of this novel sort of proof theory were revealed, such as high expressiveness, flexibility and efficiency.

In order to make this paper reasonably self-contained, in Section 2 we reproduce the basic concepts from \cite{Japfin,Japtam} on which the later parts of the paper will rely, including the old version of $\cst$ and its dual $\ccost$. An interested reader may consult \cite{Japfin,Japtam} for detailed explanations, illustrations and examples. In Section 3 we define the earlier-mentioned simplified version of $\cst$, and prove its equivalence to the old one. In Section 4 we prove that the set of principles validated by the operators $\{\neg,\wedge,\vee,\cst,\ccost\}$ is a proper superset of the set of those validated by $\{\neg,\wedge,\vee,\st,\cost\}$. Finally, in Section 5, we show that $\cst$ is strictly weaker than $\st$.

\section{Preliminaries}

 The letter $\wp$ is used as a variable ranging over $\{\top,\bot\}$, with $\neg\wp$ meaning $\wp$'s adversary. A {\bf move} is a finite string over standard keyboard alphabet. A {\bf labmove} is a move prefixed (``labeled'') with $\top$ or $\bot$. A {\bf run} is a finite or infinite sequence of labmoves, and a {\bf position} is a finite run. Runs are usually delimited by ``$\langle$" and ``$\rangle$", with $\langle\rangle$ thus denoting the {\bf empty run}. For any run $\Gamma$, $\neg\Gamma$ is the same as $\Gamma$, with the only difference that every label $\wp$ is changed to $\neg\wp$.

A {\bf game}\footnote{The concept of a game considered in CoL is more general than the one defined here, with games in the present sense called {\em constant games}. Since we (for simplicity) only consider constant games in this paper, we omit the word ``constant" and just say ``game".} is a pair $A=({\bf Lr}^{A},{\bf Wn}^{A})$, where: (1) ${\bf Lr}^{A}$ is a set of runs satisfying the condition that a finite or infinite run $\Gamma$ is in ${\bf Lr}^{A}$ iff so are all of $\Gamma$'s nonempty finite initial segments.\footnote{This condition can be seen to imply that
the empty run $\langle\rangle$ is always in ${\bf Lr}^{A}$.} If $\Gamma\in{\bf Lr}^{A}$, then $\Gamma$ is said to be a {\bf legal run} of $A$; otherwise $\Gamma$ is an {\bf illegal run} of $A$. A move $\alpha$ is a {\bf legal move} for a player $\wp$ in a position $\Phi$ of $A$ iff $\langle\Phi,\wp\alpha\rangle\in{\bf Lr}^{A}$; otherwise $\alpha$ is an {\bf illegal move}. When the last move of the shortest illegal initial segment of $\Gamma$ is $\wp$-labeled, $\Gamma$ is said to be a $\wp${\bf -illegal run} of $A$. (2) ${\bf Wn}^{A}$ is a function that sends every run $\Gamma$ to one of the players $\top$ or $\bot$, satisfying the condition that if $\Gamma$ is a $\wp$-illegal run of $A$, then ${\bf Wn}^{A}\langle\Gamma\rangle=\neg\wp$. When ${\bf Wn}^{A}\langle\Gamma\rangle=\wp$, $\Gamma$ is said to be a $\wp${\bf -won} run of $A$.\vspace{2mm}

The game operations dealt with in  the present paper are $\neg$ (negation), $\vee$ (parallel disjunction), $\wedge$ (parallel conjunction), $\cst$ (countable branching recurrence), $\ccost$ (countable branching corecurrence), $\st$ (uncountable branching recurrence) and $\cost$ (uncountable branching corecurrence).

Intuitively, $\neg$ is a role switch operator: $\neg A$ is the game $A$ with the roles of $\top$ and $\bot$ interchanged ($\top$'s legal moves and wins become those of $\bot$, and vice versa). Both $A\wedge B$ and $A\vee B$ are games playing which means playing the two components $A$ and $B$ simultaneously (in parallel). In $A\wedge B$, $\top$ is the winner if it wins in both components, while in $A\vee B$ winning in just one component is sufficient.

Next, as originally defined in \cite{Japfour}, a play of $\cst A$ (resp. $\ccost A$) starts as an ordinary play of $A$. At any time, however, the player $\bot$ (resp. $\top$) may make a ``replicative move" to create two copies of the current position $\Phi$ of $A$. This makes the game turn into two parallel games that continue from the same position $\Phi$.  The bits $0$ and $1$ are used to denote those two threads. Generally, at any time, $\bot$ (resp. $\top$) may (further) split any existing thread $w$ into two threads $w0$ and $w1$. Each thread in the eventual run of the game will be thus denoted by a (possibly infinite) bitstring, where a {\bf bitstring}\footnote{For bitstrings $x$ and $y$, we write $x\preceq y$ to mean that $x$ is a (not necessarily proper) initial segment (i.e. prefix) of $y$.} is a finite or infinite sequence of bits 0,1.  A bitstring $w$ is said to be {\bf essentially finite} if it contains only a finite number of ``1"s; otherwise $w$ is said to be {\bf essentially infinite}.  In $\cst A$, $\top$ is the winner if it wins $A$ in all infinite but essentially finite threads, while in $\ccost A$ winning in just one such thread is sufficient. Since there are only countably many essentially finite bitstrings, only countably many runs of $A$ are relevant when playing $\cst A$ or $\ccost A$. This is the intuitive explanation of what we called ``the old version" of $\cst$ (resp. $\ccost$) in the introduction.

Finally, the game $\st A$ (resp. $\cost A$) is the same as the game $\cst A$ (resp. $\ccost A$), with the only difference that, when determining the winner, all---essentially finite or essentially infinite---threads are relevant. Since there are uncountably many infinite bitstrings, uncountably many parallel runs of $A$ may be generated when playing $\st A$ or $\cost A$. We also call this version of $\st$ (resp. $\cost$) the ``old" version found in \cite{Jap03,Japfin}. Because recently a new simplified version of uncountable branching (co)recurrence was introduced in \cite{Japface}. It is different from yet equivalent to (in all relevant respects) the above old version. Specifically, both (the new versions of) $\st A$ and $\cost A$ are games playing which means simultaneously playing a continuum of copies (or ``threads") of $A$. Each copy/thread is denoted by an infinite bitstring and vice versa. Making a move $w.\alpha$, where $w$ is a finite bitstring, means making the move $\alpha$ simultaneously in all threads of the form $wy$. In $\st A$, $\top$ is the winner iff it wins in all threads of $A$, while in $\cost A$ winning in just one thread is sufficient. It should be noted that, when dealing with the uncountable branching (co)recurrence in this paper, we exclusively employ the new version of it.

Let $\Gamma$ be a run and $\alpha$ be a move. The notation $\Gamma^{\alpha}$
will be used to indicate the result of deleting from $\Gamma$ all moves (together with their labels) except those that look like $\alpha\beta$ for some move $\beta$, and then further deleting the prefix ``$\alpha$" from such moves. For instance, $\langle\top 1.\alpha,\ \bot 2.\beta,\ \top 1.\gamma,\ \bot 2.\delta\rangle^{1.}=\langle\top\alpha,\ \top\gamma\rangle$.

Let $\Theta$ be a run and $x$ be an infinite bitstring. The notation $\Theta^{\preceq x}$ will be used to indicate the result of deleting from $\Theta$ all moves (together with their labels) except those that look like $u.\beta$ for some move $\beta$ and some finite initial segment $u$ of $x$, and then further deleting the prefix ``u." from such moves. For instance, $\langle\bot 10.\alpha,\ \top 111.\beta,\ \bot 1.\gamma,\ \bot 00.\alpha\rangle^{\preceq 111\ldots}=\langle\top\beta,\ \bot\gamma\rangle$.


The earlier-outlined intuitive characterizations of the game operators are captured by the following formal definition.
Below, $A$, $A_1$, $A_2$ are arbitrary games, $\alpha$ ranges over moves, $i\in\{1,2\}$, $s$ ranges over finite bitstrings, $x$ ranges over infinite bitstrings, $\Gamma$
is an arbitrary run, and $\Omega$ is any legal run of the game that is being defined.\vspace{2mm}\\
1. $\neg A$ ({\bf negation}) is defined by:\vspace{2mm}

{\bf (i)} $\Gamma\in{\bf Lr}^{\neg A}$ iff $\neg\Gamma\in{\bf Lr}^{A}$.

{\bf (ii)} ${\bf Wn}^{\neg A}\langle\Omega\rangle=\top$ iff ${\bf Wn}^{A}\langle\neg\Omega\rangle=\bot$.\vspace{2mm}\\
2. $A_1\wedge A_2$ ({\bf parallel conjunction}) is defined by:\vspace{2mm}

{\bf (i)} $\Gamma\in{\bf Lr}^{A_1\wedge A_2}$ iff every move of $\Gamma$ is $i.\alpha$ for some $i$,$\alpha$ and, for both $i$, $\Gamma^{i.}\in{\bf Lr}^{A_i}$.

{\bf (ii)} ${\bf Wn}^{A_1\wedge A_2}\langle\Omega\rangle=\top$ iff, for both $i$, ${\bf Wn}^{A_i}\langle\Omega^{i.}\rangle=\top$.\vspace{2mm}\\
3. $A_1\vee A_2$ ({\bf parallel disjunction}) is defined by:\vspace{2mm}

{\bf (i)} $\Gamma\in{\bf Lr}^{A_1\vee A_2}$ iff every move of $\Gamma$ is $i.\alpha$ for some $i$,$\alpha$ and, for both $i$, $\Gamma^{i.}\in{\bf Lr}^{A_i}$.

{\bf (ii)} ${\bf Wn}^{A_1\vee A_2}\langle\Omega\rangle=\top$ iff, for some $i$, ${\bf Wn}^{A_i}\langle\Omega^{i.}\rangle=\top$.\vspace{2mm}\\
4. $\st A$ ({\bf uncountable branching recurrence})\footnote{This is the formal definition of the new simplified version of $\st$ introduced in \cite{Japface}. The same applies to the following definition of $\cost$.} is defined by: \vspace{2mm}

{\bf (i)} $\Gamma\in{\bf Lr}^{\sti A}$ iff every move of $\Gamma$ is $s.\alpha$ for some $s$,$\alpha$ and, for all $x$, $\Gamma^{\preceq x}\in{\bf Lr}^{A}$.

{\bf (ii)} ${\bf Wn}^{\sti A}\langle\Omega\rangle=\top$ iff, for all $x$, ${\bf Wn}^{A}\langle\Omega^{\preceq x}\rangle=\top$.\vspace{2mm}\\
5. $\cost A$ ({\bf uncountable branching corecurrence}) is defined by: \vspace{2mm}

{\bf (i)} $\Gamma\in{\bf Lr}^{\costi A}$ iff every move of $\Gamma$ is $s.\alpha$ for some $s$,$\alpha$ and, for all $x$, $\Gamma^{\preceq x}\in{\bf Lr}^{A}$.

{\bf (ii)} ${\bf Wn}^{\costi A}\langle\Omega\rangle=\top$ iff, for some $x$, ${\bf Wn}^{A}\langle\Omega^{\preceq x}\rangle=\top$.\vspace{2mm}\\
6. $\cst A$ ({\bf countable branching recurrence})\footnote{This is the formal definition of the ``old" version of $\cst$.  The same applies to the following definition of $\ccost$.} is defined as follows. There are two types of legal moves in legal positions of $\cst A$: {\bf replicative} and {\bf non-replicative}. What is called a {\bf node of the underlying BT-structure} of $\langle\Phi\rangle\cst A$, where $\Phi$ is a position, is a bitstring $w$ such that $w$ is either empty, or else is $u0$ or $u1$ for some bitstring $u$ such that $\Phi$ contains the move $u$:. Such a node is said to be a {\bf leaf} iff it is not a proper prefix of any other node of the underlying BT-structure of $\langle\Phi\rangle\cst A$. There are two sorts of legal moves in every position: replicative and non-replicative. A replicative move can only be made by $\bot$, and such a move in a given position $\Phi$ should be $w$:, where $w$ is a leaf of the underlying BT-structure of $\langle\Phi\rangle\cst A$. As for non-replicative moves, they can be made by either player. Such a move by a player $\wp$ in a given position $\Phi$ should be $w.\alpha$, where $w$ is a node of the underlying BT-structure of $\langle\Phi\rangle\cst A$ and $\alpha$ is a move such that, for any infinite extension $v$ of $w$, $\alpha$ is a legal move by $\wp$ in the position $\Phi^{\preceq v}$ of $A$.  A legal run $\Gamma$ of $\cst A$ is won by $\top$ iff, for every infinite but essentially finite bitstring $v$, $\Gamma^{\preceq v}$ is a $\top$-won run of $A$.\vspace{2mm}\\
7. $\ccost A$ ({\bf countable branching corecurrence}) is defined in a symmetric way to $\cst A$, by interchanging $\top$ with $\bot$. Equivalently, it can be simply defined by $\ccost A=\neg\cst\neg A$.\vspace{3mm}

In what follows,  we explain---formally or informally---several additional concepts relevant to our proofs.

(1) {\bf Static games}: CoL restricts its attention to a special yet very wide subclass of games termed ``static". Intuitively, static games are interactive tasks where the relative speeds of the players are irrelevant, as it never hurts a player to postpone making moves. For either player $\wp$, a run $\Omega$ is said to be a {\bf $\wp$-delay} of a run $\Gamma$ iff for both players $\wp'\in\{\top,\bot\}$, the subsequence of $\wp'$-labeled moves of $\Omega$ is the same as that of $\Gamma$, and for any $n,k\geq 1$, if the $n$'th $\wp$-labeled move is made later than (is to the right of) the $k$'th $\neg\wp$-labeled move in $\Gamma$, then so is it in $\Omega$. For instance, the run $\langle\bot\alpha,\top\alpha,\top\gamma,\bot\beta\rangle$ is a $\bot$-delay of the run $\langle\bot\alpha,\top\alpha,\bot\beta,\top\gamma\rangle$. A run is said to be {\bf $\wp$-legal} iff it is not $\wp$-illegal. Finally, a game $A$ is said to be {\bf static} iff, whenever a run $\Omega$ is a $\wp$-delay of a run $\Gamma$, we have: if $\Gamma$ is a $\wp$-legal run of $A$, then so is $\Omega$; if $\Gamma$ is a $\wp$-won run of $A$, then so is $\Omega$. It is known (\cite{Jap03,Japfin,Japface}) that the class of static games is closed under the operations $\neg,\wedge,\vee,\cst,\ccost,\st,\cost$ (as well as any other game operations studied in CoL).

(2) {\bf EPM} and {\bf BMEPM}: CoL understands $\top$'s effective strategies as interactive machines.
Several sorts of such machines have been proposed and studied in CoL, all of them turning out to be equivalent in computing power once we exclusively consider static games. In this paper we will use two sorts of such machines, called the {\em easy-play machine} ({\bf EPM}) and the {\em block-move EPM} ({\bf BMEPM}). Both of them are sorts of Turing machines with the additional capability of making moves, and have two tapes\footnote{Often there is also a third tape called the {\em valuation tape}. Its function is to provide values for the variables on which a game may depend. However, as we remember, in this paper we only consider constant games --- games that do not depend on any variables. This makes it possible to safely remove the valuation tape (or leave it there but fully ignore), as this tape is no longer relevant.}: the ordinary read/write {\em work tape}, and the read-only {\em run tape}.  The run tape serves as a dynamic input, at any time (``{\bf clock cycle}") spelling the current position: every time one of the players makes a move, that move---with the corresponding label---is automatically appended to the content of this tape.  An EPM is the machine where either player can make at most one move on a given clock cycle, but the environment can move only when the machine explicitly allows it to do so (this sort of an action is called {\bf granting permission} ); an BMEPM only differs from an EPM in that either player can make any finite number of moves at once.\footnote{In another more basic sort of machines
 called the {\em hard-play machines} ({\bf HPM}), the machine can make at most one move at any time but the environment can make any number of moves (needing no ``permission'' for that). }

(3) {\bf Strategies}:
Let ${\cal M}$ be an EPM or BMEPM. A {\em configuration} of ${\cal M}$ is a full description of the current state of the machine, the contents of its two tapes, and the locations of the corresponding two scanning heads. The {\em initial configuration } is the configuration where ${\cal M}$ is in its start state and both tapes are empty. A configuration $C'$ is said to be an {\em successor} of a configuration $C$ if $C'$ can legally follow $C$ in the standard sense, based on the (deterministic) transition function of the machine and accounting for the possibility of nondeterministic updates of the content of the run tape. A {\bf computation branch} of ${\cal M}$ is a sequence of configurations of ${\cal M}$ where the first configuration is the initial configuration, and each other configuration is a successor of the previous one. Each computation branch $B$ of ${\cal M}$ incrementally spells a run $\Gamma$ on the run tape, which is called the {\bf run spelled by} $B$. Subsequently, any such run $\Gamma$ will be referred to as a {\bf run generated by} ${\cal M}$. A computation branch $B$ of ${\cal M}$ is said to be {\bf fair} iff, in it, permission has been granted infinitely many times. An {\bf algorithmic solution} ({\bf $\top$'s winning strategy}) for a given game $A$ is understood as an EPM or BMEPM ${\cal M}$ such that, whenever $B$ is a computation branch of ${\cal M}$ and $\Gamma$ the run spelled by $B$, $\Gamma$ is a $\top$-won run of $A$, where $B$ should be fair unless $\Gamma$ is a $\bot$-illegal run of $A$. When the above is the case, we say that ${\cal M}$ {\bf wins} $A$.
It is known (\cite{Japtowards}) that the two sorts of machines win the same static games. And since all games we ever deal with in this paper are static, in the following we may simply say ``a machine ${\cal M}$" without being specific about whether it is an EPM or BMEPM.\vspace{2mm}

Now about formulas and the underlying semantics. We have some fixed set of syntactic objects, called {\bf atoms}, for which $P$, $Q$, $R$ will be used as metavariables.
A {\bf formula} is built from atoms in the standard way
using the connectives $\neg$,$\vee$,$\wedge$,$\cst$,$\ccost$,$\st$,$\cost$, with $F\rightarrow G$ understood as an
abbreviation for $\neg F\vee G$ and $\neg$ limited only to atoms, where $\neg\neg F$ is understood as $F$, $\neg(F\wedge G)$ as $\neg F\vee\neg G$, $\neg(F\vee G)$ as $\neg F\wedge\neg G$, $\neg\cst F$ as $\ccost\neg F$, $\neg\ccost F$ as $\cst\neg F$, $\neg\st F$ as $\cost\neg F$, and $\neg\cost F$ as $\st\neg F$. A {\bf $(\neg,\wedge,\vee,\cst,\ccost)$-formula} is one not containing $\st$,$\cost$. Similarly, a {\bf $(\neg,\wedge,\vee,\st,\cost)$-formula} is one not containing $\cst$,$\ccost$. An {\bf interpretation} is a function $^*$ that sends every atom $P$ to a static game $P^*$, and extends to all formulas by seeing the logical connectives as the same-name game operations.
A formula $F$ is {\bf uniformly valid}, symbolically $\uvalid F$, iff there is a machine ${\cal M}$, called a {\bf uniform solution} of $F$, such that, for every interpretation $^*$, ${\cal M}$ wins $F^*$.\footnote{Another sort of validity studied in CoL is multiform validity. A formula $F$ is {\bf multiformly valid} iff, for every interpretation $^*$, there is a machine that wins $F^*$. Since uniform validity is stronger than multiform validity, all soundness-style results that we are going to establish about uniform validity automatically extend to multiform validity as well. Partly for this reason, in this paper we will be exclusively interested in uniform validity.}\vspace{2mm}


As noted in Section 1, {\bf CL15} is built in {\em cirquent calculus} for the basic $(\neg,\wedge,\vee,\st,\cost)$-fragment of CoL, whose formalism goes beyond formulas. In what follows in this paragraph, by a ``formula", we mean one of the $(\neg,\wedge,\vee,\st,\cost)$-formulas.  A {\bf cirquent} is a triple $C=(\vec{F},\vec{U},\vec{O})$ where: (1) $\vec{F}$ is a nonempty finite sequence of formulas, whose elements are said to be the {\bf oformulas} of $C$. Here the prefix ``o" is used to mean a formula together with a particular occurrence of it in $\vec{F}$. For instance, if $\vec{F}=\langle G, H, H\rangle$, then the cirquent has three oformulas while only two formulas. (2) Both $\vec{U}$ and $\vec{O}$ are nonempty finite sequences of nonempty sets of oformulas of $C$. The elements of $\vec{U}$ are said to be the {\bf undergroups} of $C$, and the elements of $\vec{O}$ are said to be the {\bf overgroups} of $C$. Again, two undergroups (resp. overgroups) may be identical as sets (have identical {\bf contents}), yet they count as different undergroups (resp. overgroups) because they occur at different places in $\vec{U}$ (resp. $\vec{O}$). (3) Additionally, every oformula is required to be in at least one undergroup and at least one overgroup.

Rather than writing cirquents as ordered tuples in the above style, we prefer to represent them through (and identify them with) {\bf diagrams}. Below is such a representation for the cirquent that has four oformulas $H, F, E, F$, three undergroups $\{H,F\}$, $\{F,E\}$, $\{F\}$ and three overgroups $\{H,F,E\}$, $\{E\}$, $\{F\}$.

\begin{center}
\begin{picture}(80,50)(0,17)\footnotesize
\put(-10,38){$H\ \ \ \ \ \ \ F\ \ \ \ \ \ \ E\ \ \ \ \ \ \ F$}
\put(-8,35){\line(3,-2){14}}\put(20,35){\line(-3,-2){14}}\put(4,23){$\bullet$}
\put(20,35){\line(3,-2){14}}\put(46,35){\line(-3,-2){14}}\put(32,23){$\bullet$}
\put(74,35){\line(-3,-2){14}}\put(58,23){$\bullet$}
\put(19,57){\line(-5,-2){25}}\put(19,57){\line(0,-1){10}}
\put(19,57){\line(5,-2){27}}\put(17,56){$\bullet$}
\put(44,56){$\bullet$}\put(71,56){$\bullet$}
\put(46,57){\line(0,-1){10}}\put(73,57){\line(0,-1){10}}
\end{picture}
\end{center}
Each group in the cirquent/diagram is represented by (and identified with) a $\bullet$, where the {\bf arcs} (lines connecting the $\bullet$ with oformulas) are pointing to the oformulas that the given group contains.\vspace{2mm}

There are ten inference rules in {\bf CL15}. Below we reproduce those rules from \cite{Japtam} with $\st$ and $\cost$ rewritten as $\cst$ and $\ccost$, respectively. To semantically differentiate the two versions of {\bf CL15} (when necessary), we may use the name ${\bf CL15}(\st)$ for the system that understands (and writes) the recurrence operator as $\st$, and use ${\bf CL15}(\cst)$ for the system that understands (and writes) the recurrence operator as $\cst$. Correspondingly, throughout the rest of this section, by a ``formula", we mean one of the $(\neg,\wedge,\vee,\cst,\ccost)$-formulas.\vspace{2mm}

{\bf Axiom (A):} Axiom is a ``rule" with no premises. It introduces the cirquent

\ \ \ \ \ \ \ $(\langle\neg F_1,F_1,\ldots,\neg F_n,F_n\rangle, \langle\{\neg F_1,F_1\},\ldots,\{\neg F_n,F_n\}\rangle, \langle\{\neg F_1,F_1\},\ldots,\{\neg F_n,F_n\}\rangle)$,\\
where $n$ is any positive integer, and $F_1,\ldots,F_n$ are any formulas. All rules other than Axiom take a single premise.

{\bf Exchange (E):} This rule comes in three versions: {\bf Undergroup Exchange}, {\bf Oformula Exchange} and {\bf Overgroup Exchange}.
The conclusion of Oformula Exchange is obtained by interchanging in the premise two adjacent oformulas $E$ and $F$, and
redirecting to $E$ (resp. $F$) all arcs that were originally pointing to $E$ (resp. $F$). Undergroup (resp. Overgroup) Exchange is the same, with the only difference that the objects interchanged are undergroups (resp. overgroups).

{\bf Duplication (D):} This rule comes in two versions: {\bf Undergroup Duplication} and {\bf Overgroup Duplication}. The conclusion of Undergroup Duplication is obtained by replacing in the premise some undergroup $U$ with two adjacent undergroups whose contents are identical to that of $U$. Similarly for Overgroup Duplication.

{\bf Merging (M):} The conclusion of this rule can be obtained from the premise by merging any two adjacent overgroups $O_1$ and $O_2$ into one overgroup $O$, and including in $O$ all oformulas that were originally contained in $O_1$ or $O_2$ or both.

{\bf Weakening (W):} For the convenience of description, we explain this and the remaining rules in the bottom-up view. The premise of this rule is obtained by deleting in the conclusion an arc between some undergroup $U$ with $\geq 2$ elements and some oformula $F$; if $U$ was the only undergroup containing $F$, then $F$ should also be deleted, together with all arcs between $F$ and overgroups; if such a deletion makes some overgroups empty, then they should also be deleted.

{\bf Contraction (C):} The premise of this rule is obtained by replacing in the conclusion an oformula $\ccost F$ by two adjacent oformulas $\ccost F$ and $\ccost F$, and including both of them in exactly the same undergroups and overgroups in which the original $\ccost F$ was contained.


{\bf Disjunction introduction ($\vee$):} The premise of this rule is obtained by replacing in the conclusion an oformula $E\vee F$ by two adjacent oformulas $E$ and $F$, and including both of them in exactly the same undergroups and overgroups in which the original $E\vee F$ was contained.

{\bf Conjunction introduction ($\wedge$):} According to this rule, if a cirquent (the conclusion) has an oformula $E\wedge F$, then the premise
can be obtained by splitting the original $E\wedge F$ into two adjacent oformulas $E$ and $F$, including both of them in exactly the same overgroups in which the original $E\wedge F$ was contained, and splitting every undergroup $\Gamma$ that originally contained $E\wedge F$ into two adjacent undergroups $\Gamma^{E}$ and $\Gamma^{F}$, where $\Gamma^{E}$ contains $E$ (but not $F$), and $\Gamma^{F}$ contains $F$ (but not $E$), with all other ($\neq E\wedge F$) oformulas of $\Gamma$ contained by both $\Gamma^{E}$ and $\Gamma^{F}$.

{\bf Recurrence introduction ($\cst$):} The premise of this rule is obtained by replacing in the conclusion an oformula $\cst F$ by $F$, with all arcs unchanged, and inserting a new overgroup $\Gamma$ that contains $F$ as its {\em only} oformula.

{\bf Corecurrence introduction ($\ccost$):} The premise of this rule is obtained by replacing in the conclusion an oformula $\ccost F$ by $F$, with all arcs unchanged, and additionally including $F$ in any (possibly zero) number of the already existing overgroups.\vspace{2mm}

Below we provide illustrations for all rules, in each case an abbreviated name of the rule standing next to the horizontal line separating the premise from the conclusion. Our illustration for the axiom (the ``{\bf A}" labeled rule) is a specific cirquent where $n=2$; our illustrations for all other rules are merely examples chosen arbitrarily. Unfortunately, no systematic ways for schematically representing cirquent calculus rules have been elaborated so far. This explains why we appeal to examples instead.

\begin{center}
\begin{picture}(80,90)(60,7)\footnotesize
\put(-70,0){\begin{picture}(80,80)
\put(20,60){\line(1,0){86}}\put(19,30){$\neg F_1\ \ \ \ F_1\ \ \ \ \ \neg F_2\ \ \ \ F_2$}\put(108,57){\bf A}
\put(28,38){\line(1,1){10}}\put(48,38){\line(-1,1){10}}
\put(28,28){\line(1,-1){10}}\put(48,28){\line(-1,-1){10}}
\put(36,15){$\bullet$}\put(36,47){$\bullet$}
\put(78,38){\line(1,1){10}}\put(98,38){\line(-1,1){10}}
\put(78,28){\line(1,-1){10}}\put(98,28){\line(-1,-1){10}}
\put(86,15){$\bullet$}\put(86,47){$\bullet$}
\end{picture}}
\put(40,0){\begin{picture}(80,80)\put(130,50){\line(1,0){45}}\put(178,47){\bf E}
\put(131,65){$E\ \ \ \ F\ \ \ \ H$}\put(134,63){\line(0,-1){8}}\put(152,63){\line(0,-1){8}}\put(170,63){\line(0,-1){8}}
\put(132,52){$\bullet$}\put(150,52){$\bullet$}\put(168,52){$\bullet$}\put(152,63){\line(-2,-1){18}}
\put(132,81){$\bullet$}\put(134,74){\line(0,1){8}}\put(159,81){$\bullet$}\put(161,83){\line(-1,-1){10}}\put(161,83){\line(1,-1){10}}

\put(131,26){$F\ \ \ \ E\ \ \ \ H$}
\put(132,42){$\bullet$}\put(159,42){$\bullet$}
\put(134,44){\line(3,-2){16}}\put(161,44){\line(1,-1){10}}\put(161,44){\line(-5,-2){25}}
\put(132,13){$\bullet$}\put(150,13){$\bullet$}\put(168,13){$\bullet$}
\put(152,24){\line(-2,-1){18}}\put(134,24){\line(0,-1){8}}\put(134,24){\line(2,-1){18}}\put(170,24){\line(0,-1){8}}
\end{picture}}

\end{picture}
\end{center}

\begin{center}
\begin{picture}(80,80)(60,20)\footnotesize

\put(-220,5){\begin{picture}(80,80)\put(130,50){\line(1,0){45}}\put(178,47){\bf D}
\put(0,0){\begin{picture}(80,80)\put(131,65){$E\ \ \ \ F\ \ \ \ H$}\put(134,63){\line(0,-1){8}}\put(170,63){\line(0,-1){8}}
\put(132,52){$\bullet$}\put(168,52){$\bullet$}\put(152,63){\line(-2,-1){18}}\put(152,63){\line(2,-1){18}}
\put(132,81){$\bullet$}\put(134,74){\line(0,1){8}}\put(159,81){$\bullet$}\put(161,83){\line(-1,-1){10}}\put(161,83){\line(1,-1){10}}\end{picture}}

\put(0,-38){\begin{picture}(80,80)\put(131,65){$E\ \ \ \ F\ \ \ \ H$}\put(134,63){\line(0,-1){8}}\put(170,63){\line(0,-1){8}}
\put(132,52){$\bullet$}\put(168,52){$\bullet$}\put(152,63){\line(-2,-1){18}}\put(152,63){\line(0,-1){8}}\put(150,52){$\bullet$}
\put(132,81){$\bullet$}\put(134,74){\line(0,1){8}}\put(159,81){$\bullet$}\put(161,83){\line(-1,-1){10}}\put(161,83){\line(1,-1){10}}
\put(134,63){\line(2,-1){18}}\put(152,63){\line(2,-1){18}}
\end{picture}}
\end{picture}}

\put(-115,5){\begin{picture}(80,80)\put(130,50){\line(1,0){45}}\put(178,47){\bf M}\put(152,63){\line(-2,-1){18}}\put(152,63){\line(2,-1){18}}
\put(0,0){\begin{picture}(80,80)\put(131,65){$G\ \ \ \ H\ \ \ \ E$}\put(134,63){\line(0,-1){8}}\put(170,63){\line(0,-1){8}}
\put(132,52){$\bullet$}\put(168,52){$\bullet$}\put(152,63){\line(0,-1){8}}\put(150,52){$\bullet$}
\put(132,81){$\bullet$}\put(134,74){\line(0,1){8}}\put(159,81){$\bullet$}\put(161,83){\line(-1,-1){10}}\put(161,83){\line(1,-1){10}}\end{picture}}

\put(0,-38){\begin{picture}(80,80)\put(131,65){$G\ \ \ \ H\ \ \ \ E$}\put(134,63){\line(0,-1){8}}\put(170,63){\line(0,-1){8}}\put(152,63){\line(-2,-1){18}}\put(152,63){\line(2,-1){18}}
\put(132,52){$\bullet$}\put(168,52){$\bullet$}\put(152,63){\line(0,-1){8}}\put(150,52){$\bullet$}
\put(150,81){$\bullet$}\put(152,82){\line(0,-1){8}}\put(152,82){\line(-2,-1){18}}\put(152,82){\line(2,-1){18}}\end{picture}}
\end{picture}}

\put(-10,5){\begin{picture}(80,80)\put(130,50){\line(1,0){45}}\put(178,47){\bf W}
\put(0,0){\begin{picture}(80,80)\put(131,65){$E\ \ \ \ F\ \ \ \ G$}\put(134,63){\line(0,-1){8}}\put(134,82){\line(2,-1){17}}\put(170,63){\line(0,-1){8}}
\put(132,52){$\bullet$}\put(168,52){$\bullet$}\put(152,63){\line(0,-1){8}}\put(150,52){$\bullet$}
\put(132,81){$\bullet$}\put(134,74){\line(0,1){8}}\put(159,81){$\bullet$}\put(161,83){\line(-1,-1){10}}\put(161,83){\line(1,-1){10}}\end{picture}}

\put(0,-38){\begin{picture}(80,80)\put(131,65){$E\ \ \ \ F\ \ \ \ G$}\put(134,63){\line(0,-1){8}}\put(152,63){\line(-2,-1){18}}\put(134,82){\line(2,-1){17}}\put(170,63){\line(0,-1){8}}
\put(132,52){$\bullet$}\put(168,52){$\bullet$}\put(152,63){\line(0,-1){8}}\put(150,52){$\bullet$}
\put(132,81){$\bullet$}\put(134,74){\line(0,1){8}}\put(159,81){$\bullet$}\put(161,83){\line(-1,-1){10}}\put(161,83){\line(1,-1){10}}\end{picture}}
\end{picture}}

\put(90,5){\begin{picture}(80,80)\put(124,50){\line(1,0){60}}\put(186,47){\bf C}
\put(0,0){\begin{picture}(80,80)\put(124,65){$E\ \  \ \ccost F\ \ \ \ccost F $}\put(129,63){\line(5,-3){15}}\put(153,63){\line(-1,-1){10}}\put(153,63){\line(1,-1){10}}\put(178,63){\line(-5,-3){16}}\put(178,63){\line(-4,-1){34}}

\put(142,51){$\bullet$}\put(160,51){$\bullet$}\put(176,83){\line(-5,-2){24}}
\put(125,82){$\bullet$}\put(127,84){\line(0,-1){10}}\put(150,82){$\bullet$}\put(152,84){\line(0,-1){10}}
\put(153,83){\line(5,-2){23}}
\put(174,82){$\bullet$}\put(176,84){\line(0,-1){10.5}}\end{picture}}

\put(0,-38){\begin{picture}(80,80)\put(129,65){$E\ \ \ \  \ccost F$}\put(134,63){\line(1,-1){10}}\put(154,63){\line(-1,-1){10}}\put(154,63){\line(1,-1){10}}
\put(142,51){$\bullet$}\put(162,51){$\bullet$}
\put(130,81){$\bullet$}\put(132,74){\line(0,1){8}}\put(152,81){$\bullet$}\put(154,83){\line(0,-1){9}}
\put(174,81){$\bullet$}\put(176,83){\line(-5,-2){22}}
\end{picture}}
\end{picture}}
\end{picture}
\end{center}

\begin{center}
\begin{picture}(80,90)(60,20)\footnotesize

\put(-220,5){\begin{picture}(80,80)\put(130,50){\line(1,0){45}}\put(178,47){\bf $\vee$}
\put(0,0){\begin{picture}(80,80)\put(131,65){$H\ \ \ \ E\ \ \ \ F$}\put(134,63){\line(0,-1){8}}\put(170,63){\line(0,-1){8}}\put(170,63){\line(-2,-1){18}}\put(152,63){\line(2,-1){18}}
\put(132,52){$\bullet$}\put(168,52){$\bullet$}\put(152,63){\line(0,-1){8}}\put(150,52){$\bullet$}
\put(132,81){$\bullet$}\put(134,74){\line(0,1){8}}\put(159,81){$\bullet$}\put(161,83){\line(-1,-1){10}}\put(161,83){\line(1,-1){10}}
\put(133,82){\line(2,-1){18}}\put(133,82){\line(4,-1){38}}\end{picture}}

\put(0,-38){\begin{picture}(80,80)\put(131,65){$H\ \ \ \ \ E\vee F$}\put(134,63){\line(0,-1){8}}\put(163,63){\line(1,-1){10}}

\put(132,52){$\bullet$}\put(170,52){$\bullet$}\put(151,52){$\bullet$}
\put(132,81){$\bullet$}\put(134,74){\line(0,1){8}}\put(161,81){$\bullet$}\put(163,82){\line(0,-1){9}}
\put(163,63){\line(-1,-1){10}}
\put(133,83){\line(3,-1){30}}
\end{picture}}
\end{picture}}

\put(-115,5){\begin{picture}(80,80)\put(130,50){\line(1,0){45}}\put(178,47){\bf $\wedge$}
\put(0,0){\begin{picture}(80,80)\put(131,65){$H\ \ \ \ E\ \ \ \ F$}\put(134,63){\line(0,-1){8}}
\put(134,63){\line(2,-1){18}}\put(134,63){\line(4,-1){36}}\put(170,63){\line(0,-1){8}}
\put(132,52){$\bullet$}\put(168,52){$\bullet$}\put(152,63){\line(0,-1){8}}\put(150,52){$\bullet$}
\put(132,81){$\bullet$}\put(134,74){\line(0,1){8}}\put(159,81){$\bullet$}\put(161,83){\line(-1,-1){10}}\put(161,83){\line(1,-1){10}}\end{picture}}

\put(0,-38){\begin{picture}(80,80)\put(131,65){$H\ \ \ \ \ E\wedge F$}\put(134,63){\line(0,-1){8}}\put(163,63){\line(0,-1){8}}\put(134,63){\line(3,-1){28}}
\put(132,52){$\bullet$}\put(161,52){$\bullet$}
\put(132,81){$\bullet$}\put(134,74){\line(0,1){8}}\put(161,81){$\bullet$}\put(163,83){\line(0,-1){9}}\end{picture}}
\end{picture}}

\put(-8,5){\begin{picture}(80,80)\put(130,50){\line(1,0){50}}\put(182,47){\bf $\cst$}
\put(0,0){\begin{picture}(80,80)\put(131,65){$H\ \ \ \ E\ \ \ \ F$}\put(134,63){\line(0,-1){8}}\put(152,63){\line(-2,-1){18}}\put(134,82){\line(2,-1){17}}\put(171,63){\line(0,-1){8}}
\put(132,52){$\bullet$}\put(169,52){$\bullet$}\put(152,63){\line(0,-1){8}}\put(150,52){$\bullet$}\put(169,81){$\bullet$}\put(171,82){\line(0,-1){9}}
\put(132,81){$\bullet$}\put(134,74){\line(0,1){8}}\put(159,81){$\bullet$}\put(161,83){\line(-1,-1){10}}\put(161,83){\line(1,-1){10}}\end{picture}}

\put(0,-38){\begin{picture}(80,80)\put(131,65){$H\ \ \ \ E\ \ \cst F$}\put(134,63){\line(0,-1){8}}\put(152,63){\line(-2,-1){18}}\put(134,82){\line(2,-1){17}}\put(171,63){\line(0,-1){8}}
\put(132,52){$\bullet$}\put(169,52){$\bullet$}\put(152,63){\line(0,-1){8}}\put(150,52){$\bullet$}
\put(132,81){$\bullet$}\put(134,74){\line(0,1){8}}\put(159,81){$\bullet$}\put(161,83){\line(-1,-1){10}}\put(161,83){\line(5,-4){10}}\end{picture}}
\end{picture}}

\put(95,5){\begin{picture}(80,80)\put(125,50){\line(1,0){55}}\put(183,47){\bf $\ccost$}
\put(-3,0){\begin{picture}(80,80)\put(131,65){$H\ \ \ \ E\ \ \ \ F$}\put(134,63){\line(0,-1){8}}\put(152,63){\line(-2,-1){18}}\put(134,82){\line(2,-1){17}}\put(171,63){\line(0,-1){8}}\put(134,82){\line(4,-1){37}}
\put(132,52){$\bullet$}\put(169,52){$\bullet$}\put(152,63){\line(0,-1){8}}\put(150,52){$\bullet$}\put(169,81){$\bullet$}\put(171,82){\line(0,-1){9}}
\put(132,81){$\bullet$}\put(134,74){\line(0,1){8}}\put(159,81){$\bullet$}\put(161,83){\line(-1,-1){10}}\put(161,83){\line(1,-1){10}}\end{picture}}

\put(-3,-38){\begin{picture}(80,80)\put(131,65){$H\ \ \ \ E\ \ \ \ccost F$}\put(134,63){\line(0,-1){8}}\put(152,63){\line(-2,-1){18}}\put(134,82){\line(2,-1){17}}
\put(173,63){\line(0,-1){8}}\put(171,81){$\bullet$}\put(173,82){\line(0,-1){9}}
\put(132,52){$\bullet$}\put(171,52){$\bullet$}\put(152,63){\line(0,-1){8}}\put(150,52){$\bullet$}
\put(132,81){$\bullet$}\put(134,74){\line(0,1){8}}\put(159,81){$\bullet$}\put(161,83){\line(-1,-1){10}}\put(161,83){\line(5,-4){12}}\end{picture}}
\end{picture}}
\end{picture}
\end{center}

\vspace{2mm}
The above are all ten rules of {\bf CL15$(\cst)$}. A {\bf CL15$(\cst)$-proof} (or simply a {\bf proof}) of a cirquent $C$ is a sequence $\langle C_1,\ldots,C_n\rangle$ of cirquents, where $n\geq 1$, such that $C_n=C$, $C_1$ is an axiom, and $C_{i}$ ($1< i\leq n$) follows from $C_{i-1}$ by one of the rules of {\bf CL15$(\cst)$}. For any formula $F$, the expression $F^{\clubsuit}$ is used to denote the cirquent $(\langle F\rangle,\langle\{F\}\rangle,\langle\{F\}\rangle)$. Then a {\bf CL15$(\cst)$-proof} (or simply a {\bf proof}) of a formula $F$ is stipulated to be  a proof of the cirquent $F^{\clubsuit}$. A formula or cirquent $X$ is {\bf provable}, symbolically {\bf CL15}$(\cst)\vdash X$, iff it has a proof.

As mentioned, {\bf CL15}$(\st)$ is the same as {\bf CL15}$(\cst)$, only with $\st,\cost$ instead of $\cst,\ccost$.

\begin{theorem}\label{niu}
{\em (Japaridze \cite{Japtam,Japtam2})} A $(\neg,\wedge,\vee,\st,\cost)$-formula is uniformly valid iff it is provable in {\bf CL15}$(\st)$.
\end{theorem}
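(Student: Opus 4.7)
The plan is to prove soundness (proofs yield uniform solutions) and completeness (non-theorems fail uniform validity) separately. I would first lift uniform validity from formulas to cirquents by interpreting $(\vec{F},\vec{U},\vec{O})$ as the parallel conjunction, over overgroups $O\in\vec{O}$, of the parallel disjunctions of the oformulas in $O$, with identical oformulas treated as the \emph{same} shared subgame. This sharing mechanism is the characteristic feature of the cirquent semantics and is exactly what makes rules \textbf{C} and \textbf{M} sound while forbidding their uncontrolled formula-level analogues.

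For soundness I would induct on proof length. The axiom cirquent is won uniformly by copycat between the matched pairs $\neg F_i, F_i$. For each rule I would exhibit a strategy-transformer turning a machine winning the premise into one winning the conclusion. The structural rules \textbf{E}, \textbf{D}, \textbf{M}, \textbf{W} are handled by simple relabeling and padding of the run tape; \textbf{C} exploits the shared-atom mechanism to service two copies of $\ccost F$ through a single underlying play; the rules for $\vee$ and $\wedge$ simply route moves across the introduced subcomponents. The recurrence rules are the most delicate: for $\st$, the conclusion's $\st F$ is attacked by $\bot$ in threads $s$, and the transformer runs, in each thread $s$, a simulation of the premise's play on $F$, using the fact that the premise isolated $F$ into its own overgroup so that this simulation is legitimate; dually, for $\ccost$, for each overgroup into which $F$ was newly inserted one selects a thread of $\cost F$ corresponding to the premise's chosen play on $F$.

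For completeness I would argue by contraposition. Given an unprovable $F^{\clubsuit}$, I would carry out a saturated bottom-up proof search, applying rules in reverse subject to a termination measure controlling the potentially iterated rules \textbf{C} and \textbf{D}, so some normal-form notion of cirquent needs to be isolated first. Unprovability then yields a saturated atomic cirquent failing the axiom pattern, so some required matched pair $\neg P,P$ is missing in some critical configuration of groups. From this failure pattern I would read off an interpretation $^*$ sending each atom to a simple game (say, a one-move game where the sole legal response determines the winner), together with an explicit $\bot$-strategy that, by exploiting the missing match across the relevant threads of $\st$-subgames, defeats every machine trying to win $F^*$.

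The main obstacle, as usual, is completeness, and within it the construction of the counter-strategy: because $\st$ generates uncountably many parallel threads and shared atoms communicate information between distant subcomponents, the counter-strategy must coordinate its plays across threads in a manner that is robust against every uniform solution simultaneously. Devising this coordination, and the accompanying well-founded termination measure for the proof search which keeps \textbf{C} and \textbf{D} from spinning forever, is where the real technical work of the theorem lies.
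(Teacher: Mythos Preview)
The paper does not prove this theorem; it merely cites it as an established result of Japaridze, with soundness proven in \cite{Japtam} and completeness in \cite{Japtam2}. So there is no ``paper's own proof'' to compare against directly. That said, the paper does reproduce (in its Theorem~\ref{mainth1}) a close analogue of the soundness argument for the $\cst$ variant, which lets one see where your sketch goes wrong.

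Your proposed cirquent semantics is incorrect, and this is not a minor slip: you interpret a cirquent as a conjunction over \emph{overgroups} of disjunctions of their oformulas. In the actual semantics (Definition~\ref{def1} and Remark~\ref{feb13a}), overgroups are generalized \emph{recurrences}, not disjunctions---each overgroup contributes one coordinate to the thread index $\vec{x}$, so that an oformula belonging to several overgroups is being played in a multi-dimensional family of copies. It is the \emph{undergroups} that act as disjunctions, and the cirquent is won iff every undergroup is won for every choice of thread $\vec{x}$. With your semantics the Recurrence Introduction rule would make no sense: the actual rule inserts a fresh overgroup containing only $F$, which is precisely how a new $\st$ layer is encoded; under your reading it would instead add a new conjunct. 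Likewise your claim that Merging and Duplication are ``simple relabeling and padding'' is wrong: because overgroups are thread-dimensions, merging two overgroups or duplicating one requires the bitstring \emph{fusion/defusion} machinery (interleaving two infinite bitstrings into one and conversely) that the paper spells out in clauses (5) and (6) of the proof of Theorem~\ref{mainth1}; this is where the real work of those cases lies. Your Contraction description (``service two copies through a single underlying play'') also misses the mechanism: one routes thread $0u$ of the conclusion's $\ccost F$ to thread $u$ of the first premise copy and thread $1u$ to the second.

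Your completeness outline is too schematic to evaluate, but note that the actual argument in \cite{Japtam2} does not proceed via a finite saturated proof search with a termination measure; the presence of Contraction and the recurrence rules means a naive bottom-up search does not terminate, and the countermodel construction uses enumeration games and a more delicate adversary argument than the ``one-move game'' you suggest.
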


\section{A new version of the countable branching recurrence}

As we have seen in the preceding section, the existing definition of $\cst$ is relatively intricate, which considerably impedes the task of understanding this sort of recurrence. So, in this section we introduce a new simplified---yet equivalent to the old---version of the countable branching recurrence. In order to avoid confusion, when necessary, we shall use $\cst_L$, $\ccost_L$ for the new versions of $\cst$,$\ccost$, and use $\cst_T$, $\ccost_T$ for the old ones. The same notation applies to any formula $F$, where $F_T$ is the result of replacing in $F$ all occurrences of $\cst$ (resp. $\ccost$) by $\cst_T$ (resp. $\ccost_T$), and $F_L$ is the result of replacing in $F$ all occurrences of $\cst$ (resp. $\ccost$) by $\cst_L$ (resp. $\ccost_L$). Here, as understood, we extend the earlier-defined concept of a formula so that now a formula may contain either version $\cst_T$,$\cst_L$ of $\cst$ and/or either version $\ccost_T$,$\ccost_L$ of $\ccost$. The semantics of formulas and, particularly, the concept of uniform validity extend to this broader class of formulas in a straightforward/expected way.

As mentioned earlier, the old version of $\cst$ only differs from the old version of $\st$ in that, when determining the winner, only essentially finite threads are relevant. On the other hand, the paper \cite{Japface} has completed the task of replacing the old ``canonical" definition of $\st$ by a new, simple and compact, definition of $\st$ as we have seen in Section 2. For these reasons, the new definition of $\cst$, that we will introduce in the following, follows the same idea of the new definition of $\st$, only with ``infinite but essentially finite bitstrings" instead of ``infinite bitstrings", when determining the winner.

\begin{definition}\label{def} Below $A$ is an arbitrary game, $\alpha$ ranges over moves,  $w$ ranges over finite bitstrings, $x$ ranges over infinite bitstrings, $v$ ranges over infinite but essentially finite bitstrings, $\Gamma$ is any run, and $\Omega$ is any legal run of the game that is being defined.\vspace{9pt}

\noindent 1. $\cst_L A$ is defined by:
\begin{quote}\begin{description}
\item[(i)] $\Gamma\in \legal{\sti_L^{\aleph_0} A}{}$ iff every move of $\Gamma$ is $w.\alpha$ for some $w,\alpha$ and, for all $x$, $\Gamma^{\preceq x}\in\legal{A}{}$.
\item[(ii)] $\win{\sti_L^{\aleph_0} A}{}\seq{\Omega}= \pp$ iff, for all $v$,  $\win{A}{}\seq{\Omega^{\preceq v}}= \pp$. \vspace{5pt}
\end{description}\end{quote}

\noindent 2. $\ccost_L A$ is defined by:
\begin{quote}\begin{description}
\item[(i)] $\Gamma\in \legal{\costi_L^{\aleph_0} A}{}$ iff every move of $\Gamma$ is $w.\alpha$ for some $w,\alpha$ and, for all $x$, $\Gamma^{\preceq x}\in\legal{A}{}$.
\item[(ii)] $\win{\costi_L^{\aleph_0} A}{}\seq{\Omega}= \pp$ iff, for some $v$,  $\win{A}{}\seq{\Omega^{\preceq v}}= \pp$. \vspace{5pt}
\end{description}\end{quote}

\end{definition}

It is obvious that $\ccost_L$ is the dual operation of $\cst_L$ with $\ccost_L A=\neg\cst_L\neg A$. In what follows, we first prove that $\cst_L$ and $\ccost_L$ preserve the static property of games, and then show that $\cst_L$ and $\ccost_L$ are logically equivalent to $\cst_T$ and $\ccost_T$, respectively.

\begin{lemma}\label{Japlegal}
{\em (Japaridze \cite{Japface})} Assume $A$ is a static game, $\Omega$ is a $\wp$-delay of $\Gamma$, and $\Omega$ is a $\wp$-illegal run of $\st A$. Then $\Gamma$ is also a $\wp$-illegal run of $\st A$.
\end{lemma}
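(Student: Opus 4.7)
The approach is to reduce the lemma about $\st A$ to the static property of the underlying game $A$ by working with projections. The key observation is that for every infinite bitstring $x$, $\Omega^{\preceq x}$ is a $\wp$-delay of $\Gamma^{\preceq x}$ in $A$: this follows because the projection $\cdot^{\preceq x}$ preserves labels and the relative order of moves, while $\Omega$ and $\Gamma$ share their $\wp$- and $\neg\wp$-labeled subsequences by the definition of $\wp$-delay.

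Let $\Omega'$ be the shortest illegal initial segment of $\Omega$ in $\st A$, with last move $\wp\mu$ (a $\wp$-labeled move, by hypothesis), and let $\Omega'^{-}$ denote its legal prefix obtained by deleting $\wp\mu$. The illegality of $\Omega'$ has two possible sources: either (i) $\mu$ is not of the form $w.\alpha$---and the ill-formatted move $\wp\mu$ thus also appears in $\Gamma$ by subsequence preservation---or (ii) $\mu=w_{0}.\alpha_{0}$ is well-formatted while some projection $(\Omega')^{\preceq x_{0}}$ is illegal in $A$; in case (ii), the legality of $(\Omega'^{-})^{\preceq x_{0}}$ forces the illegality of $(\Omega')^{\preceq x_{0}}$ to come from appending $\wp\alpha_{0}$, making $\Omega^{\preceq x_{0}}$ $\wp$-illegal in $A$, whence applying the static property of $A$ contrapositively to the $\wp$-delay between $\Omega^{\preceq x_{0}}$ and $\Gamma^{\preceq x_{0}}$ yields $\Gamma^{\preceq x_{0}}$ $\wp$-illegal in $A$. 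In both cases $\Gamma$ is illegal in $\st A$.

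The principal obstacle is then to promote ``$\Gamma$ is illegal'' to ``$\Gamma$ is $\wp$-illegal''. I would argue by contradiction: suppose the shortest illegal initial segment $\Gamma'$ of $\Gamma$ ends in a $\neg\wp$-labeled move $\neg\wp\nu$. A separate application of the static property of $A$ to individual $\wp$-moves shows that the $\wp$-witness of $\Gamma$'s illegality appears no later in the $\wp$-subsequence than $\wp\mu$; combining this with the delay property places $\neg\wp\nu$ strictly before $\wp\mu$ in $\Omega$, hence within $\Omega'^{-}$. The illegality of $\Gamma'$ then arises either from $\nu$ being ill-formatted---immediately contradicting the legality of $\Omega'^{-}$---or from some projection $(\Gamma')^{\preceq x_{2}}$ being $\neg\wp$-illegal in $A$. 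The latter sub-case is handled by a padding construction: extending the $x_{2}$-projection of the prefix of $\Omega$ just before $\neg\wp\nu$ by the $\wp$-moves delayed in $\Omega$ relative to $\Gamma$ produces a $\wp$-delay of $(\Gamma'^{-})^{\preceq x_{2}}$; the static property of $A$ applied to this delay, together with the $\neg\wp$-illegality of $(\Gamma')^{\preceq x_{2}}$, yields the required contradiction. The most delicate technical point, I anticipate, is the simultaneous handling of the possibly distinct witnesses $x_{0}$ and $x_{2}$---the padding construction is precisely designed to bring the projections to these distinct bitstrings into a common $\wp$-delay framework suited to invoking the static property of $A$.
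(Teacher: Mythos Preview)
The paper does not actually prove this lemma: it is quoted verbatim from \cite{Japface} and stated without proof, so there is no ``paper's own proof'' to compare against. Your proposal is therefore not redundant, and the approach you take---reducing to the static property of $A$ via the thread projections $(\cdot)^{\preceq x}$---is indeed the one used in \cite{Japface}.

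Your argument is essentially sound. The observation that $\Omega^{\preceq x}$ is a $\wp$-delay of $\Gamma^{\preceq x}$ for every $x$ is correct and is the engine of the proof. The case split on why the shortest illegal prefix $\Omega'$ fails (ill-formatted move versus a bad projection) is right, and in case (ii) the contrapositive of the static property of $A$ correctly yields that $\Gamma^{\preceq x_0}$ is $\wp$-illegal. One remark on exposition: in the promotion step your sentence about ``the $\wp$-witness of $\Gamma$'s illegality'' is confusing, since at that point you are assuming $\Gamma$ is $\gneg\wp$-illegal and so the witness is $\gneg\wp\nu$, not a $\wp$-labeled move. What you presumably intend is to locate the $\wp$-move in $\Gamma$ that first makes the $x_0$-projection illegal and compare its position to $\wp\mu$; this works, but say so explicitly. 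The cleaner route for the promotion (and the one closer to Japaridze's original) is symmetric to your case (ii): from $\Gamma$ being $\gneg\wp$-illegal one gets some $\Gamma^{\preceq x_2}$ that is $\gneg\wp$-illegal in $A$; since $\Gamma^{\preceq x_2}$ is a $\gneg\wp$-delay of $\Omega^{\preceq x_2}$ (Lemma~4.6 of \cite{Jap03}), the static property of $A$ forces $\Omega^{\preceq x_2}$ to be $\gneg\wp$-illegal; one then checks that the offending $\gneg\wp$-move already occurs in the legal prefix $\Omega'^{-}$, a contradiction. Your padding construction accomplishes the same thing and is correct, just heavier than necessary.
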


\begin{theorem}
The class of static games is closed under $\cst_L$ and $\ccost_L$.
\end{theorem}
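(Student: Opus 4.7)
The plan is to split the proof into two pieces: closure under $\cst_L$, and closure under $\ccost_L$. For the second piece I would invoke the easily verified fact that the class of static games is closed under $\neg$ together with the identity $\ccost_L A = \neg \cst_L \neg A$ already noted in the text, which reduces the $\ccost_L$ case to the $\cst_L$ case. Alternatively, one may just repeat the argument below verbatim, swapping the roles of $\top$ and $\bot$. Either way, the entire content of the theorem reduces to the case of $\cst_L$.

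So assume $A$ is static, and let $\Omega$ be a $\wp$-delay of a run $\Gamma$ of $\cst_L A$. I need to verify the two clauses in the definition of staticity. For the legality clause, simply observe that the legality condition in Definition \ref{def}(1)(i) for $\cst_L A$ is literally identical to the legality condition for $\st A$ given in Section 2. Thus legality preservation is supplied for free by Lemma \ref{Japlegal}: contrapositively, if $\Gamma$ is $\wp$-legal in $\cst_L A$ then so is $\Omega$.

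The main work is the winning clause, and it rests on the following key step, which I would isolate as a sublemma: for \emph{every} infinite bitstring $x$ (in particular for every essentially finite $v$), $\Omega^{\preceq x}$ is a $\wp$-delay of $\Gamma^{\preceq x}$ as runs of $A$. This is essentially a bookkeeping check. The projection $(\cdot)^{\preceq x}$ keeps exactly those labmoves of the form $u.\beta$ with $u$ a finite prefix of $x$, strips the prefix, and preserves the relative order of the surviving moves. Since $\Omega$ and $\Gamma$ have identical subsequences of $\wp'$-labelled moves for each $\wp' \in \{\top,\bot\}$, the same is true after we filter by the condition ``prefix $\preceq x$'', which is independent of the label. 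Moreover, if the $n$-th $\wp$-move of $\Gamma^{\preceq x}$ is to the right of the $k$-th $\neg\wp$-move there, then the corresponding two labmoves in $\Gamma$ (at whatever their absolute positions are) are also in this order; by hypothesis the same order persists in $\Omega$, and both moves are retained by the projection, so the order survives in $\Omega^{\preceq x}$ as well. I expect this step to be the main obstacle, not because it is deep, but because stating it cleanly requires care about how ``$n$-th'' and ``$k$-th'' reindex under projection.

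With the sublemma in hand, the winning clause is routine. If $\Gamma$ is $\top$-won in $\cst_L A$, then by Definition \ref{def}(1)(ii) we have $\win{A}{}\langle\Gamma^{\preceq v}\rangle = \top$ for every essentially finite $v$; applying the sublemma and staticity of $A$ to each such $v$ gives $\win{A}{}\langle\Omega^{\preceq v}\rangle = \top$, hence $\Omega$ is also $\top$-won. If instead $\Gamma$ is $\bot$-won, then by bivalence of $\win{\cst_L A}{}$ there exists an essentially finite $v$ with $\win{A}{}\langle\Gamma^{\preceq v}\rangle = \bot$; the sublemma and staticity of $A$ again yield $\win{A}{}\langle\Omega^{\preceq v}\rangle = \bot$ for the same $v$, so $\Omega$ is $\bot$-won. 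This completes the staticity of $\cst_L A$, and the staticity of $\ccost_L A$ follows by the duality noted at the outset.
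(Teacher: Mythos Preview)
Your overall strategy matches the paper's: reduce $\ccost_L$ to $\cst_L$ via duality and closure under $\neg$, handle the legality clause by observing that $\cst_L A$ and $\st A$ have identical legal runs and invoking Lemma~\ref{Japlegal}, and for the winning clause use the fact that $\Omega^{\preceq v}$ is a $\wp$-delay of $\Gamma^{\preceq v}$ (which the paper simply calls ``obvious'' while you spell out the bookkeeping).

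There is, however, a genuine gap in your winning-clause argument. You invoke Definition~\ref{def}(1)(ii) to pass between ``$\Gamma$ is $\wp$-won in $\cst_L A$'' and ``$\Gamma^{\preceq v}$ is $\wp$-won in $A$ for all/some essentially finite $v$'', but that clause of the definition specifies ${\bf Wn}^{\cst_L A}$ only on \emph{legal} runs. If $\Gamma$ happens to be $\neg\wp$-illegal (which still makes it $\wp$-won), your first implication does not follow; and even when $\Gamma$ is legal, you have not verified that $\Omega$ is legal before concluding ``hence $\Omega$ is also $\wp$-won'' from the thread-by-thread wins. The paper closes this gap by first disposing of the case where $\Omega$ is $\neg\wp$-illegal (automatic $\wp$-win), and then arguing that in the remaining case both $\Gamma$ and $\Omega$ are fully legal: $\Gamma$ is $\wp$-legal because it is $\wp$-won; $\Gamma$ is $\neg\wp$-legal because $\Gamma$ is a $\neg\wp$-delay of the $\neg\wp$-legal $\Omega$ (via Lemma~4.6 of \cite{Jap03} together with Lemma~\ref{Japlegal}); and finally $\Omega$ is $\wp$-legal by the legality clause already established. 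Only once both runs are known to be legal does the paper apply the projection-and-staticity step. Your argument can be patched in exactly this way, but as written it is incomplete.
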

\begin{proof}
Since $\ccost_L$ can be expressed through $\cst_L$ and $\neg$, with $\neg$ already known (\cite{Jap03}) to preserve the static property of games, we need only to consider $\cst_L$. In what follows, $A$ is a static game. We want to show that $\cst_L A$ is also static.

Assume $\Gamma$ is a $\wp$-legal run of $\cst_L A$, and $\Omega$ is a $\wp$-delay of $\Gamma$. We need to show that $\Omega$ is also a $\wp$-legal run of $\cst_L A$. Since the legal runs of $\cst_L A$, by Definition \ref{def}, are the same as the legal runs of $\st A$, Lemma \ref{Japlegal} still holds with $\cst_L A$ instead of $\st A$. So, by this lemma with $\cst_L A$ instead of $\st A$, we immediately get that $\Omega$ is a $\wp$-legal run of $\cst_L A$.

Assume $\Gamma$ is a $\wp$-won run of $\cst_L A$, and $\Omega$ is a $\wp$-delay of $\Gamma$. We will show that $\Omega$ is also a $\wp$-won run of $\cst_L A$, thus completing our proof of the promise. If $\Omega$ is a $\neg\wp$-illegal run of $\cst_L A$, then $\Omega$ is won by $\wp$ as promised. Assume that $\Omega$ is not $\neg\wp$-illegal, i.e., $\Omega$ is $\neg\wp$-legal. Then we claim that $\Gamma$ is also $\neg\wp$-legal. First, by Lemma 4.6 of \cite{Jap03}, $\Omega$ is a $\wp$-delay of $\Gamma$ implies that $\Gamma$ is a $\neg\wp$-delay of $\Omega$. Next, if $\Gamma$ is a $\neg\wp$-illegal run of $\cst_L A$, by Lemma \ref{Japlegal} with $\cst_L A$ instead of $\st A$, $\Omega$ is also a $\neg\wp$-illegal run of $\cst_L A$, contrary to our assumption. Hence, $\Gamma$ is $\neg\wp$-legal. On the other hand, since $\Gamma$ is a $\wp$-won run of $\cst_L A$, it is obvious that $\Gamma$ is a $\wp$-legal run of $\cst_L A$. Then, by the previously proven fact, $\Omega$ is also a $\wp$-legal run of $\cst_L A$. Thus, both $\Gamma$ and $\Omega$ are legal runs of $\cst_L A$.
But $\Gamma$ being a legal, $\wp$-won run of $A$ means that, for every (if $\wp=\top$) or some (if $\wp=\bot$) infinite but essentially finite bitstring $v$, $\Gamma^{\preceq v}$ is a $\wp$-won run of $A$. Therefore, as $A$ is static and $\Omega^{\preceq v}$ is obviously a $\wp$-delay of $\Gamma^{\preceq v}$, $\Omega^{\preceq v}$ is also a $\wp$-won run of $A$. Hence $\Omega$ is a $\wp$-won run of $\cst_L A$.
\end{proof}

\begin{theorem}\label{atom}
For any formula $F$, the formulas $\st_T^{\aleph_0} F\rightarrow\st_L^{\aleph_0} F$ and $\st_L^{\aleph_0} F\rightarrow\st_T^{\aleph_0} F$ are uniformly valid.
\end{theorem}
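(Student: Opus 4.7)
The plan is to construct, for each of the two implications, a machine witnessing its uniform validity; both machines use a copy-cat strategy between the two components of the resulting parallel disjunction, and the two directions differ only in which component admits the machine to make replicative moves.

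For $\cst_T F \rightarrow \cst_L F$, which unfolds as $\neg\cst_T F \vee \cst_L F$, the machine on the left component plays the role of $\bot$ in $\cst_T F$ and hence may make replicative moves. Its strategy is to propagate every non-replicative move between the two components: when the environment plays $w.\alpha$ on the right (where any $w$ is allowed in $\cst_L F$), the machine first inserts the chain of replicative moves $\emptyset{:}, b_1{:}, b_1 b_2{:}, \ldots, b_1\cdots b_{|w|-1}{:}$ (writing $w = b_1\cdots b_{|w|}$, skipping any that have already been played) needed to make $w$ a node of the left BT-structure, then plays the mirror $w.\alpha$ there; conversely, when the environment plays $w.\alpha$ on the left at an already existing node, the machine simply plays the same move on the right. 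I would verify winning by observing that for any essentially finite infinite bitstring $v$, the operation $\Gamma^{\preceq v}$ discards all replicative moves (they have the wrong syntactic form) and leaves identical sequences of non-replicative moves on the two sides, up to delay; since $F^*$ is static, $\Gamma_L^{\preceq v}$ and $\Gamma_R^{\preceq v}$ are thus won by the same player in $F^*$, from which it is immediate that the machine wins at least one component of the disjunction on every play.

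The reverse direction $\cst_L F \rightarrow \cst_T F$ is more delicate because now the machine plays $\top$ in $\cst_T F$ on the right and cannot make replicative moves---only the environment can. The intended strategy still mirrors: the environment's $\bot$-labeled moves on the right translate freely to $\bot$-labeled mirrors on the left (legal in $\cst_L F$ for any $w$), and its $\top$-labeled moves on the left are mirrored to the right whenever the corresponding $w$ is currently a node of the right BT-structure, being queued otherwise until the environment's subsequent replicative moves promote them to nodes. The hard part, I expect, is the case in which the environment plays a $\top$-labeled $w.\alpha$ on the left for some $w$ that it never promotes to a node on the right: then the queued move is never played on the right, and the naive matching between $\Gamma_L^{\preceq v}$ and $\Gamma_R^{\preceq v}$ fails for those essentially finite threads $v$ that extend $w$. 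Resolving this will require the machine to issue a compensating $\top$-labeled move on the right at the deepest currently available prefix-node of $w$, and then to verify by a careful thread-by-thread analysis---using that only essentially finite threads decide the winner in both $\cst_T$ and $\cst_L$---that the machine still wins the disjunction in every play.
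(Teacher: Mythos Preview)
Your treatment of the first direction $\cst_T F\rightarrow\cst_L F$ is essentially the paper's own argument: the machine inserts the needed replicative moves in the $\ccost_T\neg A$ component so as to make every address $w$ used on the $\cst_L A$ side a node of the BT-structure, and otherwise copy-cats.  That part is fine.

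The second direction, however, contains a genuine gap.  Your proposed strategy---queue an environment move $w.\alpha$ on the left until the environment's replications on the right promote $w$ to a node, and otherwise ``compensate'' by playing at the deepest available prefix-node $w'$ of $w$---does not yield a correct machine.  First, the decision ``the environment will never promote $w$'' cannot be made at any finite stage, so a strategy that waits and then compensates is not implementable.  Second, compensating immediately at $w'$ is unsound: if the environment plays $0.\alpha$ and $1.\beta$ on the left while the right BT-structure is still just the root $\epsilon$, your rule forces the machine to play both $\epsilon.\alpha$ and $\epsilon.\beta$ on the right.  For a static game $A$ in which exactly one of $\alpha,\beta$ may be legally (or winningly) played by $\top$, this either makes the machine's second move illegal or loses every thread on the right, while on the left every essentially finite thread sees a single $\top$-winning move in $A$ and hence a $\bot$-winning run of $\neg A$, so the machine loses the $\ccost_L\neg A$ component as well.

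The paper avoids this dilemma by abandoning the idea that thread $v$ on the right should match thread $v$ on the left.  Instead it maintains a dynamic map $f$ from the leaves of the right BT-structure to finite bitstrings on the left: when the environment replicates a leaf $w$ on the right, $f$ splits accordingly ($f(w0)=f(w)0$, $f(w1)=f(w)1$); when the environment plays $w.\beta$ on the left at an address properly extending $f(x)$ for the unique relevant leaf $x$, the machine plays $x.\beta$ on the right and updates $f(x):=w$ \emph{only if} the extension consists entirely of $0$'s, and \emph{ignores} the move otherwise.  The crucial payoff is that for every essentially finite thread $v$ on the right there is a corresponding thread $z$ on the left with $\Sigma^{\preceq z}=\neg\Pi^{\preceq v}$, and $z$ is obtained from $v$ by inserting only $0$'s, hence is itself essentially finite---which is exactly what is needed to transfer a loss on the right to a win on the left.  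Your compensating rule has no mechanism that singles out the $0$-extensions and discards the $1$-extensions, and without that asymmetry the essential-finiteness bookkeeping cannot be made to work.
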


\begin{proof}Our proof here almost literally follows the proof of Theorem 4.1 of \cite{Japface}.

Firstly, we prove the uniform validity of $\st_T^{\aleph_0} F\rightarrow\st_L^{\aleph_0} F$, which means that we should construct an EPM ${\cal M}_1$ such that, for any static game $A$, ${\cal M}_1$ wins $\st_T^{\aleph_0} A\rightarrow\st_L^{\aleph_0} A$, i.e. $\cost_T^{\aleph_0}\neg A\vee\st_L^{\aleph_0} A$. Such an EPM (strategy) ${\cal M}_1$ can be constructed as a machine that repeats the following routine over and over again (possibly infinitely many times). At any step of the strategy, $\Psi$ denotes $\Phi^{1.}$, where $\Phi$ is the then-current position of the play. That is, $\Psi$ is the then-current position in the  $\cost_T^{\aleph_0}\neg A$ component of the overall game.\vspace{2mm}

ROUTINE: Keep granting permission until the adversary makes a move $\alpha$ satisfying the conditions of one of the following two cases, and then act as prescribed in that case.

{\bf Case 1}: $\alpha$ is a move $w.\beta$ in $\cost_T^{\aleph_0}\neg A$, where $w$ is a finite bitstring. Make the same move $w.\beta$ in $\st_L^{\aleph_0} A$.

{\bf Case 2}: $\alpha$ is a move $w.\beta$ in $\st_L^{\aleph_0} A$, where $w$ is a finite bitstring. Make a series of replicative moves in $\cost_T^{\aleph_0}\neg A$ if necessary, so that $w$ becomes a node of the underlying BT-structure of $\langle\Psi\rangle\cost_T^{\aleph_0}\neg A$. Then make the move $w.\beta$ in $\cost_T^{\aleph_0}\neg A$.\vspace{2mm}

Let $\Gamma$ be any run that could be generated by such ${\cal M}_1$. According to the description of ROUTINE, ${\cal M}_1$ (in the role of $\top$) does not make any illegal moves unless its adversary does so first. So, if $\Gamma$ is an illegal run of $\cost_T^{\aleph_0}\neg A\vee\st_L^{\aleph_0} A$, then it is $\bot$-illegal and hence $\top$ is the winner. Suppose now $\Gamma$ is a legal run of $\cost_T^{\aleph_0}\neg A\vee\st_L^{\aleph_0} A$. Let $\Sigma=\Gamma^{1.}$ and $\Pi=\Gamma^{2.}$. In other words, $\Sigma$ is the run that took place in the $\cost_T^{\aleph_0}\neg A$ component, and $\Pi$ is the run that took place in the $\st_L^{\aleph_0} A$ component. If for all infinite but essentially finite bitstrings $v$, $\Pi^{\preceq v}$ is a $\top$-won run of $A$, then $\top$ wins the whole game $\cost_T^{\aleph_0}\neg A\vee\st_L^{\aleph_0} A$ because it wins the component $\st_L^{\aleph_0} A$. Now assume there exists an infinite but essentially finite bitstring $v$ such that $\Pi^{\preceq v}$ is a $\bot$-won run of $A$. From the above strategy we can see that the run taking place in thread $v$ of $\neg A$ is the same as the run taking place in thread $v$ of $A$, with the only difference that $\top$ and $\bot$ are interchanged. That is, $\Sigma^{\preceq v}=\neg\Pi^{\preceq v}$. Therefore, $\Sigma^{\preceq v}$ is a $\top$-won run of $\neg A$, and hence $\Sigma$ is a $\top$-won run of $\cost_T^{\aleph_0}\neg A$, and hence $\Gamma$ a $\top$-won run of the overall game $\cost_T^{\aleph_0}\neg A\vee\st_L^{\aleph_0} A$.

Secondly, we show that the formula $\st_L^{\aleph_0} F\rightarrow\st_T^{\aleph_0} F$ is uniformly valid, meaning that there exists an EPM/strategy ${\cal M}_2$ that wins $\cost_L^{\aleph_0}\neg A\vee\st_T^{\aleph_0} A$ for any static game $A$. Such a strategy ${\cal M}_2$ repeats the following routine over and over again. At any step of the strategy, $\Psi$ denotes $\Phi^{2.}$, where $\Phi$ is the then-current position of the play. In other words, $\Psi$ is the then-current position of the component $\st_T^{\aleph_0} A$. Furthermore, a function $f$ from the leaves $v$ of the underlying BT-structure of $\langle\Psi\rangle\st_T^{\aleph_0} A$ to finite bitstrings $f(v)$ is maintained by ${\cal M}_2$ such that, for any two leaves $v_1\neq v_2$, $f(v_1)$ is not a prefix of $f(v_2)$. At the beginning, i.e. when $\Psi$ is empty, of the play, the empty string $\epsilon$ is the only leaf of the underlying BT-structure of $\langle\Psi\rangle\st_T^{\aleph_0} A$, and the value of $f(\epsilon)$ is initialized to $\epsilon$.\vspace{2mm}

ROUTINE: Keep granting permission until the adversary makes a move $\alpha$ satisfying the conditions of one of the following three cases, and then act as that case prescribes. In what follows, $w$ ranges over finite bitstrings.

{\bf Case 1:} $\alpha$ is a replicative move $w$: in $\st_T^{\aleph_0} A$. Let $v=f(w)$. Then update $f$ by setting $f(w0)=v0, f(w1)=v1$, with the value of $f$ on any other leaves of the underlying BT-structure of $\langle\Psi\rangle\st_T^{\aleph_0} A$ unchanged, and do not make any moves.

{\bf Case 2:} $\alpha$ is a non-replicative move $w.\beta$ in $\st_T^{\aleph_0} A$. Let $u_1,\ldots,u_n$ be all leaves $u$ of the underlying BT-structure of $\langle\Psi\rangle\st_T^{\aleph_0} A$ such that $w$ is a prefix of $u$, and let $v_1=f(u_1),\ldots,v_n=f(u_n)$. Then make the series of  moves $v_1.\beta,\ldots,v_n.\beta$ in $\cost_L^{\aleph_0}\neg A$, leaving the value of $f$ unchanged.

{\bf Case 3:} $\alpha$ is a move $w.\beta$ in $\cost_L^{\aleph_0}\neg A$. First assume that there is a unique leaf $x$ in the underlying BT-structure of $\langle\Psi\rangle\st_T^{\aleph_0} A$ such that $w$ is a proper extension of $f(x)$. Let $v=f(x)$, and $w=vu$ for some nonempty finite bitstring $u$. If there is a ``1" in $u$, then ignore the move $\alpha$, leaving the value of $f$ unchanged and making no moves. If there is no ``1" in $u$, i.e. there are only ``0"s in $u$, then update $f$ by letting $f(x)=w$ without changing the value of $f$ on any other leaves, and make the move $x.\beta$ in $\st_T^{\aleph_0} A$. Now assume that there is no leaf $x$ in the underlying BT-structure of $\langle\Psi\rangle\st_T^{\aleph_0} A$ such that $w$ is a proper extension of $f(x)$. Let $y_1,\ldots,y_n$ (possibly $n=0$) be all leaves $y$ of the underlying BT-structure of $\langle\Psi\rangle\st_T^{\aleph_0} A$ such that $w$ is a prefix of $f(y)$. Then make the series of moves $y_1.\beta,\ldots,y_n.\beta$ in $\st_T^{\aleph_0} A$ and leave the value of $f$ unchanged.\vspace{2mm}

Consider any run $\Gamma$ that could be generated by the above machine ${\cal M}_2$. We may, again, assume that $\Gamma$ is a legal run of $\cost_L^{\aleph_0}\neg A\vee\st_T^{\aleph_0} A$, for otherwise it is $\bot$-illegal and hence $\top$-won. Let $\Sigma=\Gamma^{1.}$ and $\Pi=\Gamma^{2.}$. In other words, $\Sigma$ is the run that took place in $\cost_L^{\aleph_0}\neg A$, and $\Pi$ is the run that took place in $\st_T^{\aleph_0} A$. For a number $i$ such that ROUTINE is iterated at least $i$ times, we use $f_i$ to denote the value of $f$ at the beginning of the $i$'th iteration, and use $\Psi_i$ to denote the position reached by that time in the $\st_T^{\aleph_0} A$ component.

Let $v$ be any infinite but essentially finite bitstring. Suppose that $\Pi^{\preceq v}$ is a $\bot$-won run of $A$ (if there is no such $v$, then $\top$ wins the overall game $\cost_L^{\aleph_0}\neg A\vee\st_T^{\aleph_0} A$, as desired). Let $z$ be an infinite bitstring satisfying the following condition: for any $i$ such that ROUTINE is iterated at least $i$ times, we have that $f_i(v_i)$ is a prefix of $z$, where $v_i$ is the unique prefix of $v$ such that $v_i$ is a leaf of the underlying BT-structure of $\langle\Psi_i\rangle\st_T^{\aleph_0} A$.
From the description of ROUTINE, we see that the following property of $f$ is maintained: for any two finite bitstrings $x_1$ and $x_2$, if $x_1\preceq x_2$, then $f(x_1)\preceq f(x_2)$. For any given $v$ and any $i$ in the previous sense, we have $v_i\preceq v_{i+1}$, and hence $f_i(v_i)\preceq f_{i+1}(v_{i+1})$. Therefore, a $z$ satisfying the above condition indeed exists.

From the description of ROUTINE we can see that what happened in thread $z$ of $\neg A$ is the same as what happened in thread $v$ of $A$ with $\top$ interchanged with $\bot$. Namely, $\Sigma^{\preceq z}=\neg\Pi^{\preceq v}$.  Therefore, $\Sigma^{\preceq z}$ is a $\top$-won run of $\neg A$. All that is left to show is that $z$ is essentially finite. According to the description of ROUTINE, the steps making $z$ different from $v$ could occur in Case 3. But Case 3 could only make $z$ different from $v$ in that $z$ can be obtained by inserting in $v$ some ``0"s between some two ``1"s. Namely, the number of ``1"s in $z$ is the same as that of ``1"s in $v$. So, due to the essential finiteness of $v$, $z$ is also essentially finite. Hence $\Sigma$ is a $\top$-won run of $\cost_L^{\aleph_0}\neg A$, and hence $\Gamma$ a $\top$-won run of $\cost_L^{\aleph_0}\neg A\vee\st_T^{\aleph_0} A$, which ends our proof.
\end{proof}

\begin{lemma}\label{lemma1}
Any formula of the form $\st F\rightarrow\cst_L F$ is uniformly valid.  
\end{lemma}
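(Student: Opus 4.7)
The plan is to exploit the fact that, by construction, $\st$ (in its new-style definition from Section 2) and $\cst_L$ (as just defined in this section) have \emph{the same legal runs}: a move is legal in either one precisely when it has the form $w.\alpha$ and, for every infinite bitstring $x$, its effect on thread $x$ is legal in the underlying game. The two operators differ only in the winning condition, with $\cst_L$ quantifying over the proper sub-collection of essentially finite bitstrings. Since winning in $\cst_L$ demands strictly less than winning in $\st$, a straightforward copy-cat strategy should suffice, without any need for the auxiliary bookkeeping function $f$ that was required in the second half of Theorem~\ref{atom}.

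Concretely, I would reduce the target formula to $\cost\neg F\mld\cst_L F$ and construct an EPM ${\cal M}$ running the following routine forever: grant permission until the adversary makes a move $\alpha$ in one of the two components; if $\alpha=w.\beta$ was made in $\cost\neg F$, then make the same move $w.\beta$ in $\cst_L F$; symmetrically, if $\alpha=w.\beta$ was made in $\cst_L F$, then make the same move $w.\beta$ in $\cost\neg F$. Because the two components share the identical notion of legal move, and because ${\cal M}$ never initiates an illegal action on its own, any run $\Gamma$ generated by ${\cal M}$ that is not already $\bot$-illegal in the overall game is in fact legal in both components, and the two projections $\Sigma=\Gamma^{1.}$ (in $\cost\neg F$) and $\Pi=\Gamma^{2.}$ (in $\cst_L F$) satisfy $\Sigma=\neg\Pi$ as sequences of labmoves, so $\Sigma^{\preceq x}=\neg\Pi^{\preceq x}$ for every infinite bitstring $x$.

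From there the winning argument is immediate. Suppose $\Gamma$ is a legal run and $\top$ does not win the $\cst_L F$ component. Then, by Definition~\ref{def}(1ii), there is some infinite but essentially finite bitstring $v$ with $\win{F^*}{}\seq{\Pi^{\preceq v}}=\oo$, hence $\win{\neg F^*}{}\seq{\Sigma^{\preceq v}}=\pp$. Since every essentially finite bitstring is in particular an infinite bitstring, this single witness $v$ is enough, by the definition of $\cost$, to make $\Sigma$ a $\top$-won run of $\cost\neg F$, and hence $\Gamma$ a $\top$-won run of $\cost\neg F\mld\cst_L F$.

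The only ``obstacle'' is the bureaucratic check that the copy-cat move is legal in its target component whenever the adversary's move was legal in the source; but this is immediate from the coincidence of the legal-run definitions of $\st$ and $\cst_L$ together with the mirror relation $\Sigma=\neg\Pi$ maintained by the strategy. No tracking of nodes of a BT-structure, no preservation of essential finiteness of witness bitstrings, and no analogue of Case~3 of Theorem~\ref{atom} are required here, precisely because the asymmetry between $\st$ and $\cst_L$ points the ``easy'' direction: the premise $\st F$ wins in every thread, so the player can freely restrict attention to the essentially finite ones.
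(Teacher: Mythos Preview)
Your proposal is correct and is essentially identical to the paper's own proof: both construct the obvious copy-cat EPM on $\cost\gneg F\mld\cst_L F$, mirroring each adversary move $w.\beta$ from one component to the other, and then argue that any essentially finite bitstring $v$ witnessing a $\oo$-win of $F$ in the $\cst_L F$ component simultaneously witnesses a $\pp$-win of $\gneg F$ in the $\cost\gneg F$ component. Your added remark that $\st$ and $\cst_L$ share exactly the same set of legal runs (so the mirror move is automatically legal) is a helpful clarification that the paper leaves implicit.
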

\begin{proof}
To prove the uniform validity of $\st F\rightarrow\cst_L F$, we should construct an EPM ${\cal M}$ such that, for any static game $A$, ${\cal M}$ wins $\st A\rightarrow\cst_L A$, i.e. $\cost\neg A\vee\cst_L A$. The work of such an EPM (strategy) ${\cal M}$ is very simple. It keeps granting permission, and whenever the adversary makes a move $1.w.\alpha$ for some finite bitstring $w$ and some move $\alpha$, it makes the move $2.w.\alpha$, and vice versa: whenever the adversary makes a move $2.w.\alpha$, it makes the move $1.w.\alpha$.

Consider any run $\Gamma$ generated by ${\cal M}$. It is obvious that ${\cal M}$ never makes illegal moves unless its adversary does so first. Hence we may safely assume that $\Gamma$ is a legal run of $\cost\neg A\vee\cst_L A$. Let $\Sigma=\Gamma^{1.}$ and $\Pi=\Gamma^{2.}$. In other words, $\Sigma$ is the run that took place in the $\cost\neg A$ component, and $\Pi$ is the run that took place in the $\cst_L A$ component.
If for all infinite but essentially finite bitstrings $v$, $\Pi^{\preceq v}$ is a $\top$-won run of $A$, then $\top$ wins the whole game $\cost\neg A\vee\cst_L A$ because it wins the $\cst_L A$ component. Now assume there exists an infinite but essentially finite bitstring $v$ such that $\Pi^{\preceq v}$ is a $\bot$-won run of $A$. From the above strategy we can see that the run took place in thread $v$ of $\neg A$ is the same as the run that took place in thread $v$ of $A$, with the only difference that $\top$ and $\bot$ are interchanged. Namely, $\Sigma^{\preceq v}=\neg\Pi^{\preceq v}$. Therefore, $\Sigma^{\preceq v}$ is a $\top$-won run of $\neg A$, and hence $\Sigma$ is a $\top$-won run of $\cost\neg A$, and hence $\Gamma$ a $\top$-won run of the overall game $\cost\neg A\vee\cst_L A$.
\end{proof}

\begin{lemma}\label{lemma2}
Any formula of the form $\cst_L(E\rightarrow F)\rightarrow (\cst_L E\rightarrow\cst_L F)$ is uniformly valid.  
\end{lemma}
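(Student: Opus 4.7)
The plan is to establish uniform validity by constructing an EPM ${\cal M}$ implementing a double copy-cat strategy, in the spirit of the proof of Lemma~\ref{lemma1}. Fix any static games $A,B$ interpreting $E,F$. Unfolding the abbreviations, the formula to be won becomes
$$\ccost_L(A\wedge\neg B)\ \vee\ \ccost_L\neg A\ \vee\ \cst_L B,$$
whose three outer disjuncts are addressed by the prefixes $1.$, $2.1.$, $2.2.$ respectively. Inside the first disjunct, the $A$- and $\neg B$-subgames are further addressed by $1.$ and $2.$, so a move in thread $w$ of $\ccost_L(A\wedge\neg B)$ has one of the forms $1.w.1.\beta$ or $1.w.2.\beta$, while moves in the other two disjuncts have the forms $2.1.w.\beta$ and $2.2.w.\beta$.

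The strategy ${\cal M}$ permanently grants permission and reacts to each $\bot$-move as follows: a move $1.w.1.\beta$ is echoed by $2.1.w.\beta$ and vice versa, while $1.w.2.\beta$ is echoed by $2.2.w.\beta$ and vice versa. Thus two independent copy-cat subroutines run in parallel on disjoint address spaces: one pairs the $A$-subgame of $\ccost_L(A\wedge\neg B)$ with $\ccost_L\neg A$, the other pairs its $\neg B$-subgame with $\cst_L B$. Since ${\cal M}$ only mirrors moves that $\bot$ has already made legally, ${\cal M}$ never initiates illegality, so one may restrict attention to legal runs $\Gamma$. Writing $\Sigma,\Delta,\Pi$ for the projections of $\Gamma$ onto the three top-level disjuncts, the construction guarantees the following copy-cat identities for every infinite bitstring $x$: the $A$-projection of $\Sigma^{\preceq x}$ equals $\neg(\Delta^{\preceq x})$, and the $\neg B$-projection of $\Sigma^{\preceq x}$ equals $\neg(\Pi^{\preceq x})$.

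The winning argument is then a three-way case split. If $\cst_L B$ is already $\top$-won, that is, $\Pi^{\preceq v}$ is a $\top$-won run of $B$ for every essentially finite $v$, then $\top$ wins. Otherwise pick an essentially finite $v_0$ with $\Pi^{\preceq v_0}$ a $\bot$-won run of $B$; by the second copy-cat identity, the $\neg B$-projection of $\Sigma^{\preceq v_0}$ is a $\top$-won run of $\neg B$. Now either $\ccost_L\neg A$ is $\top$-won---some essentially finite $v_1$ makes $\Delta^{\preceq v_1}$ a $\top$-won run of $\neg A$, closing the case---or for every essentially finite $v$ the run $\Delta^{\preceq v}$ is $\bot$-won in $\neg A$; specializing at $v=v_0$ and applying the first copy-cat identity, the $A$-projection of $\Sigma^{\preceq v_0}$ is a $\top$-won run of $A$. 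Hence thread $v_0$ of $\Sigma$ is a $\top$-won run of $A\wedge\neg B$, and since $v_0$ is essentially finite, $\ccost_L(A\wedge\neg B)$ is $\top$-won.

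The only substantive verification is the copy-cat identities together with the claim that ${\cal M}$'s echoes remain legal whenever the triggering $\bot$-moves are, which is entirely parallel to the verification in the proof of Lemma~\ref{lemma1}; the new ingredient is just that two copy-cats now run simultaneously on disjoint address spaces---the pairings $1.w.1.\leftrightarrow 2.1.w.$ and $1.w.2.\leftrightarrow 2.2.w.$---so they cannot interfere, and no genuinely new obstacle arises.
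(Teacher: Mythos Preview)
Your proposal is correct and follows essentially the same approach as the paper: the EPM you describe is identical to the paper's (the two simultaneous copy-cats via the pairings $1.w.1.\leftrightarrow 2.1.w.$ and $1.w.2.\leftrightarrow 2.2.w.$), and the copy-cat identities you state match. The only cosmetic difference is the direction of the case analysis: the paper assumes the $\ccost_L(A\wedge\neg B)$ component loses and derives that $\ccost_L\neg A\vee\cst_L B$ wins, whereas you assume the latter two components lose and derive that the first wins---logically the same argument.
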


\begin{proof}
To prove the uniform validity of $\cst_L(E\rightarrow F)\rightarrow (\cst_L E\rightarrow\cst_L F)$, we should construct an EPM ${\cal M}$ such that, for any static games $A$ and $B$,  ${\cal M}$ wins $\cst_L(A\rightarrow B)\rightarrow (\cst_L A\rightarrow\cst_L B)$, i.e. $\ccost_L(A\wedge\neg B)\vee(\ccost_L\neg A\vee\cst_L B)$. Such an EPM ${\cal M}$ works as follows. It keeps granting permission. Whenever the adversary makes a move $1.w.1.\alpha$, where $w$ is some finite bitstring and $\alpha$ is some move, it makes a move $2.1.w.\alpha$; whenever the adversary makes a move $1.w.2.\alpha$, it makes a move $2.2.w.\alpha$. And vice versa: whenever the adversary makes a move $2.1.w.\alpha$ for some finite bitstring $w$ and some move $\alpha$, it makes a move $1.w.1.\alpha$; whenever the adversary makes a move $2.2.w.\alpha$, it makes a move $1.w.2.\alpha$.

Consider any run generated by ${\cal M}$ when playing the overall game $\ccost_L(A\wedge\neg B)\vee(\ccost_L\neg A\vee\cst_L B)$.  We may assume that $\Gamma$ is a legal run of the overall game because ${\cal M}$ never makes illegal moves unless its adversary does so first. Let $\Sigma=\Gamma^{1.}$ and $\Pi=\Gamma^{2.}$. Namely, $\Sigma$ is the run that took place in the $\ccost_L(A\wedge\neg B)$ component, and $\Pi$ is the run that took place in the $\ccost_L\neg A\vee\cst_L B$ component. If there exists an infinite but essentially finite bitstring $v$ such that $\Sigma^{\preceq v}$ is a $\top$-won run of $A\wedge\neg B$, then $\top$ is the winner in the $\ccost_L(A\wedge\neg B)$ component, and hence $\top$ wins the overall game. If for every infinite but essentially finite bitstring $v$, $\Sigma^{\preceq v}$ is a $\bot$-won run of $A\wedge\neg B$, then $\bot$ wins at least $A$ or $\neg B$ in thread $v$. But the run that took place in $A$ (resp. $\neg B$) in the thread $v$ of $A\wedge\neg B$ is the same as the run that took place in the thread $v$ of $\neg A$ (resp. $B$) in the $\ccost_L\neg A$ (resp. $\cst_L B$) component, only with $\top$ interchanged with $\bot$. Hence we have that for every infinite but essentially finite bitstring $v$, at least $(\Pi^{1.})^{\preceq v}$ is a $\top$-won run of $\neg A$, or $(\Pi^{2.})^{\preceq v}$ is a $\top$-won run of $B$. This means that $\top$ is the winner in the $\ccost_L\neg A\vee\cst_L B$ component, and hence the winner in the overall game.
\end{proof}

\begin{theorem}\label{formula}
For any formula $F$, the formulas $F_T\rightarrow F_L$ and $F_L\rightarrow F_T$ are uniformly valid. 
\end{theorem}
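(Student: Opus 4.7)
The plan is to prove both $F_T \rightarrow F_L$ and $F_L \rightarrow F_T$ simultaneously by induction on the complexity of $F$. The base case, $F$ a literal, is immediate because $F_T = F_L = F$ and a copycat strategy wins $F \rightarrow F$ under any interpretation. The inductive steps for $\wedge$, $\vee$, $\st$ and $\cost$ are routine: given the uniform solutions of the subformula implications furnished by the induction hypothesis, one propagates them through each of these operators by standard parallel-play (or thread-by-thread) constructions, relying on the fact that the class of static games is closed under all the connectives in question.

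The nontrivial cases are $F = \cst G$ and $F = \ccost G$. For $\cst G$, I factor $\cst_T G_T \rightarrow \cst_L G_L$ through the intermediate formula $\cst_L G_T$. The first leg $\cst_T G_T \rightarrow \cst_L G_T$ is exactly an instance of Theorem \ref{atom}. The second leg $\cst_L G_T \rightarrow \cst_L G_L$ comes from a monotonicity property of $\cst_L$: the induction hypothesis gives $G_T \rightarrow G_L$ uniformly valid, so running the associated strategy independently inside each thread yields a uniform solution of $\cst_L(G_T \rightarrow G_L)$, and combining this with Lemma \ref{lemma2} via a modus-ponens-style strategy composition produces $\cst_L G_T \rightarrow \cst_L G_L$. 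A second composition of the two legs gives $\cst_T G_T \rightarrow \cst_L G_L$, and the reverse implication $\cst_L G_L \rightarrow \cst_T G_T$ is handled symmetrically, using the other direction of the induction hypothesis and the other half of Theorem \ref{atom}.

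The case $F = \ccost G$ is treated by duality. Since $\ccost_X H$ is definitionally $\neg \cst_X \neg H$, the implication $\ccost_T G_T \rightarrow \ccost_L G_L$ is, after unfolding and negation-normalizing, the same formula (up to trivial $\vee$-commutativity, which is won by a copycat) as $\cst_L \neg G_L \rightarrow \cst_T \neg G_T$. A $\ccost$-analogue of Theorem \ref{atom} follows from Theorem \ref{atom} itself applied to $\neg H$, and a $\ccost_L$-analogue of Lemma \ref{lemma2} (monotonicity of $\ccost_L$ under uniformly valid implications) follows by applying Lemma \ref{lemma2} to the contrapositive implication; with these duals in hand, the $\ccost$ case repeats the two-leg argument used in the $\cst$ case.

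The main obstacle is the $\cst$ case: one must simultaneously swap the version of the operator ($T$ versus $L$) at the head and lift the induction hypothesis under the scope of that operator. Theorem \ref{atom} and Lemma \ref{lemma2} are precisely the two ingredients designed to let us perform these swaps separately and then glue them together; the remainder is bookkeeping built on the standard closure properties of static games and the composability of uniform winning strategies.
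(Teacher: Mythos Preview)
Your proposal is correct and follows essentially the same inductive argument as the paper, including the key factorization $\cst_T G_T \rightarrow \cst_L G_T \rightarrow \cst_L G_L$ via Theorem~\ref{atom} and Lemma~\ref{lemma2}, and the treatment of $\ccost$ by duality. The one place where you diverge is in obtaining $\uvalid\,\cst_L(G_T\rightarrow G_L)$ from the induction hypothesis: the paper routes this through $\st$, first invoking the known closure $\uvalid A\Rightarrow\uvalid\st A$ from \cite{Jap03} and then applying Lemma~\ref{lemma1} ($\st F\rightarrow\cst_L F$) with modus ponens, whereas you claim it directly by ``running the associated strategy independently inside each thread.'' That phrasing is loose---moves in $\cst_L A$ have the form $w.\alpha$ for finite $w$ and act on uncountably many threads at once, so one cannot literally run an independent simulation per thread---but the conclusion is correct, and in fact follows immediately from the paper's route once one observes that $\cst_L A$ and $\st A$ share the same set of legal runs while $\cst_L A$ has the weaker winning condition.
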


\begin{proof}
We prove this theorem by induction on the complexity of $F$.

(i) The basis of induction is trivial: when $F$ is an atom $P$, we have $F_T=F_L=P$. It is known (\cite{Japfin}) that affine logic is sound with respect to uniform validity, and that the formula $P\rightarrow P$ is provable in affine logic. So, we have $\uvalid P\rightarrow P$.

(ii) In this and the remaining clauses of this proof, when affine logic proves a formula $A$, we may simply say that $\uvalid A$ for the reason explained in the preceding clause. Assume that $F=\neg E$ for some formula $E$. Now we should show that $\uvalid\neg E_T\rightarrow\neg E_L$ and $\uvalid\neg E_L\rightarrow\neg E_T$. By the induction hypothesis, we have $\uvalid E_T\rightarrow E_L\ (1)$ and $\uvalid E_L\rightarrow E_T\ (2)$. So, by (1) (resp. (2)), $\uvalid(A\rightarrow B)\rightarrow(\neg B\rightarrow\neg A)$ and modus ponens, which was proved in \cite{Japfin} to hold with respect to uniform validity\footnote{Strictly speaking, the sort of formulas for which this fact was proven in \cite{Japfin} is not the same as formulas in our present sense. However, this is irrelevant because the proof of \cite{Japfin} automatically goes through for any class of formulas.}, we have $\uvalid\neg E_L\rightarrow\neg E_T$ (resp. $\uvalid\neg E_T\rightarrow\neg E_L$).

(iii) Assume that $F=E\wedge G$ for some formulas $E$ and $G$. Our goal is to show that $\uvalid E_T\wedge G_T\rightarrow E_L\wedge G_L$ and $\uvalid E_L\wedge G_L\rightarrow E_T\wedge G_T$. By the induction hypothesis, we have $\uvalid E_T\rightarrow E_L\ (1)$, \ \ $\uvalid G_T\rightarrow G_L\ (2)$, \ \ $\uvalid E_L\rightarrow E_T\ (3)$, \ \ $\uvalid G_L\rightarrow G_T\ (4)$. By $(1)$, $\uvalid (A\rightarrow A')\rightarrow ((B\rightarrow B')\rightarrow (A\wedge B\rightarrow A'\wedge B'))$ and modus ponens, we have $\uvalid (G_T\rightarrow G_L)\rightarrow (E_T\wedge G_T\rightarrow E_L\wedge G_L)\ (5)$. Again, by $(2), (5)$, and modus ponens, we have $\uvalid E_T\wedge G_T\rightarrow E_L\wedge G_L$. Similarly, $\uvalid E_L\wedge G_L\rightarrow E_T\wedge G_T$.

(iv) Assume that $F=E\vee G$ for some formulas $E$ and $G$. This case can be proven in a similar way to the preceding clause, with the only difference that in this case we depend on ``$\uvalid (A\rightarrow A')\rightarrow ((B\rightarrow B')\rightarrow (A\vee B\rightarrow A'\vee B'))$" instead of ``$\uvalid (A\rightarrow A')\rightarrow ((B\rightarrow B')\rightarrow (A\wedge B\rightarrow A'\wedge B'))$".

(v) Assume that $F=\cst E$ for some formula $E$. Below we should show that $\uvalid \cst_T E_T\rightarrow \cst_L E_L$ and $\uvalid \cst_L E_L\rightarrow \cst_T E_T$. By the induction hypothesis, we have $\uvalid E_T\rightarrow E_L\ (1)$,\ \ $\uvalid E_L\rightarrow E_T\ (2)$. By the known fact that if $\uvalid A$, then $\uvalid \st A$ (proven in \cite{Jap03}), $(1)$ implies that $\uvalid \st (E_T\rightarrow E_L)\ (3)$. On the other hand, by Lemma \ref{lemma1}, we have $\uvalid \st(E_T\rightarrow E_L)\rightarrow \cst_L(E_T\rightarrow E_L)\ (4)$. So, by $(3),(4)$, and modus ponens, we have $\uvalid \cst_L(E_T\rightarrow E_L)\ (5)$. Next, by $(5)$, Lemma \ref{lemma2} and modus ponens, we obtain $\uvalid \cst_L E_T\rightarrow \cst_L E_L\ (6)$. In addition, by Theorem \ref{atom}, $\uvalid \cst_T E_T\rightarrow \cst_L E_T\ (7)$. Finally, by $(6),(7)$, $\uvalid (A\rightarrow B)\rightarrow ((B\rightarrow C)\rightarrow(A\rightarrow C))$ and modus ponens, we get $\uvalid \cst_T E_T\rightarrow \cst_L E_L$ as one of our desired results. In a similar way, we can show that $\uvalid \cst_L E_L\rightarrow \cst_T E_T$.

(vi) Assume that $F=\ccost E$ for some formula $E$. By the induction hypothesis, $\uvalid E_T\rightarrow E_L\ (1)$. By $(1)$ and $\uvalid (A\rightarrow B)\rightarrow(\neg B\rightarrow\neg A)$, we have $\uvalid \neg E_L\rightarrow\neg E_T\ (2)$. Then, from $(2)$, as in the preceding clause, we get $\uvalid \cst_L\neg E_L\rightarrow\cst_T\neg E_T\ (3)$, i.e. $\uvalid \neg\ccost_L E_L\rightarrow\neg\ccost_T E_T\ (4)$. Again, by $(4)$ and $\uvalid (A\rightarrow B)\rightarrow(\neg B\rightarrow\neg A)$, we get one of the desired results: $\uvalid \ccost_T E_T\rightarrow \ccost_L E_L$. Similarly, we have $\uvalid \ccost_L E_L\rightarrow \ccost_T E_T$.

(vii) Assume that $F=\st E$ for some formula $E$. By the induction hypothesis, we have $\uvalid E_T\rightarrow E_L\ (1)$, $\uvalid E_L\rightarrow E_T\ (2)$. Then, by $(1)$ and the known fact (\cite{Jap03}) that $\uvalid A$ implies $\uvalid \st A$, we get $\uvalid\st(E_T\rightarrow E_L)\ (3)$. But it is known (\cite{Japfin}) that, for any formulas $A$ and $B$ in affine logic, $\uvalid\st(A\rightarrow B)\rightarrow(\st A\rightarrow\st B)\ (4)$.  So, by (3),(4), and modus ponens, we have $\uvalid\st E_T\rightarrow\st E_L$. Similarly, we have $\uvalid\st E_L\rightarrow\st E_T$.

(viii) Assume that $F=\cost E$ for some formula $E$. By the induction hypothesis and clause (ii), we have $\uvalid\neg E_L\rightarrow\neg E_T\ (1)$. Then, from (1), as in the preceding clause, we get $\uvalid\st\neg E_L\rightarrow\st\neg E_T$, i.e. $\uvalid\neg\cost E_L\rightarrow\neg\cost E_T\ (2)$. Finally, by (2), $\uvalid(A\rightarrow B)\rightarrow(\neg B\rightarrow\neg A)$ and modus ponens, we get $\uvalid\neg\neg\cost E_T\rightarrow\neg\neg\cost E_L$, i.e. $\uvalid\cost E_T\rightarrow\cost E_L$. In a similar way, we get that $\uvalid\cost E_L\rightarrow\cost E_T$.
\end{proof}

\begin{corollary}\label{equivalent}
For any formula $F$, $F_T$ is uniformly valid iff so is $F_L$.
\end{corollary}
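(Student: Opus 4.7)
The plan is to derive this corollary directly from Theorem \ref{formula}, which does all the real work. The corollary is essentially a restatement of that theorem at the level of uniform validity rather than at the level of implications, so my job reduces to applying modus ponens.

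Concretely, I would argue as follows. Suppose $F_T$ is uniformly valid. By Theorem \ref{formula}, $F_T \rightarrow F_L$ is also uniformly valid. Since modus ponens was shown in \cite{Japfin} to preserve uniform validity (and, as noted in clause (ii) of the proof of Theorem \ref{formula}, that proof automatically goes through for the broader class of formulas we now allow, containing both versions of the branching recurrences), we conclude that $F_L$ is uniformly valid. The converse direction is entirely symmetric: from $\uvalid F_L$ and $\uvalid F_L \rightarrow F_T$ (again supplied by Theorem \ref{formula}), modus ponens yields $\uvalid F_T$.

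I do not anticipate any obstacle here, since every nontrivial ingredient, namely the preservation of uniform validity under the two implications $F_T \rightarrow F_L$ and $F_L \rightarrow F_T$, has already been established in Theorem \ref{formula}, and modus ponens for uniform validity is an off-the-shelf fact from \cite{Japfin}. The only thing worth remarking on is that the extended notion of ``formula'' (allowing both $\cst_T,\ccost_T$ and $\cst_L,\ccost_L$) is exactly the setting in which Theorem \ref{formula} was stated, so no additional care is needed when passing to the corollary.
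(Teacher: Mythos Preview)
Your proposal is correct and follows essentially the same approach as the paper: the paper's proof simply states that the corollary is immediate from Theorem~\ref{formula} together with the fact from \cite{Japfin} that uniform validity is closed under modus ponens, which is exactly what you do (with a bit more detail spelled out).
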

\begin{proof}
Immediately form Theorem \ref{formula} and the fact (\cite{Japfin}) that uniform validity is closed under modus ponens.
\end{proof}

In view of Corollary \ref{equivalent}, from now on, when studying the fragments of CoL involving $\cst$ and $\ccost$, we can safely exclusively focus on the new version of $\cst$ and $\ccost$. So, let us agree that, for the rest of the paper, $\cst$ and $\ccost$ always mean $\cst_L$ and $\ccost_L$, respectively.

\section{The soundness of {\bf CL15} with countable branching recurrence}
To prove the soundness of {\bf CL15}$(\st^{\aleph_0})$, we first need to extend the earlier-described semantics from formulas to cirquents. In this section, unless otherwise specified, by a ``formula'' we mean a $(\neg,\wedge,\vee,\cst,\ccost)$-formula.

Let $\Gamma$ be a run, $a$ be a positive integer, and $\vec{x}=x_1,\ldots,x_n$ be a nonempty sequence of $n$ infinite bitstrings. The notation
\begin{center}
$\Gamma^{\preceq a;\vec{x}}$
\end{center}
will be used to indicate the result of deleting from $\Gamma$ all moves (together with their labels) except those that look like $a;u_1,\ldots,u_n.\beta$ for some move $\beta$ and some finite initial segments $u_1,\ldots,u_n$ of $x_1,\ldots,x_n$, respectively, and
then further deleting the prefix ``$a;u_1,\ldots,u_n.$" from such moves.
For instance, $\langle\bot 1;100,11.\alpha,\ \top 1;01,100.\beta,\ \bot 1;1,1.\gamma,\ \bot 2;100,111.\delta\rangle^{1;100\ldots,111\ldots}=\langle\bot\alpha,\ \bot\gamma\rangle$.


\begin{definition}\label{def1}
Let $^*$ be an interpretation, and $C=(\langle F_1,\ldots,F_k\rangle,\langle U_1,\ldots,U_m\rangle,\langle O_1,\ldots,O_n\rangle)$ be a cirquent. Then $C^*$ is the game defined as follows, where $\Gamma$ is an arbitrary run and $\Omega$ is any legal run of $C^*$.\vspace{2mm}\\
{\bf (i)} $\Gamma\in {\bf Lr}^{C^*}$ iff the following two conditions are satisfied:
\begin{itemize}
\item Every move of $\Gamma$ looks like $a;\vec{u}.\alpha$, where $\alpha$ is some move, $a\in\{1,\ldots,k\}$, and $\vec{u}=u_1,\ldots,u_n$ is a sequence of $n$ finite bitstrings such that, whenever an overgroup $O_j$ $(1\leq j\leq n)$ does not contain the oformula $F_a$, $u_j=\epsilon$.
\item For every $a\in\{1,\ldots,k\}$ and every sequence $\vec{x}$ of $n$ infinite bitstrings, $\Gamma^{\preceq a;\vec{x}}\in {\bf Lr}^{F_a^{*}}$.
\end{itemize}
{\bf (ii)} ${\bf Wn}^{C^*}\langle\Omega\rangle=\top$ iff, for every $i\in\{1,\ldots,m\}$ and every sequence $\vec{x}$ of $n$ infinite but essentially finite bitstrings, there is an $a\in\{1,\ldots,k\}$ such that the undergroup $U_i$ contains the oformula $F_a$ and ${\bf Wn}^{F_a^*}\langle\Omega^{\preceq a;\vec{x}}\rangle=\top$.
\end{definition}

\begin{remark}\label{feb13a}
Intuitively, any legal run $\Omega$ of $C^*$ consists of parallel plays of countably infinite copies/threads of each of the games $F_{a}^{*}$ ($1\leq a\leq k$). To every sequence $\vec{x}$ of $n$ infinite but essentially finite bitstrings corresponds a thread of $F_a^*$, and $\Omega^{\preceq a;\vec{x}}$ is the run played in that thread. We shall simply say {\bf the thread $\vec{x}$} of $F_a^*$ to mean the copy of $F_a^*$ which corresponds to the sequence $\vec{x}$. Now, consider a given undergroup $U_i$. $\top$ is the winner in $U_i$ iff, for every sequence $\vec{x}$ of $n$ infinite but essentially finite bitstrings, there is an oformula $F_a$ in $U_i$ such that $\Omega^{\preceq a;\vec{x}}$ is won by $\pp$. Finally, $\pp$ wins the overall game $C^*$ iff it wins in all undergroups of $C$. In fact, overgroups can be seen as generalized $\cst$s, with the only main difference that the former can be shared by several oformulas; undergroups can be seen as generalized disjunctions, with the only main difference that the former may have shared arguments with other undergroups.

\end{remark}

We say that a cirquent $C$ is
{\bf uniformly valid} iff there is a machine $\cal M$, called a {\bf uniform solution} of $C$, such that, for every interpretation $^*$, $\cal M$ wins $C^*$.\vspace{3mm}

\begin{lemma}\label{lemma3}
The formula $\cst P\rightarrow P$ is uniformly valid.
\end{lemma}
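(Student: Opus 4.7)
The plan is to construct an EPM $\mathcal{M}$ that, for every static game $A$, wins the game $\cst_L A\rightarrow A$, i.e.\ $\ccost_L \neg A \mld A$. The strategy is a simple copycat between the right component $A$ and a single fixed ``essentially finite'' thread of the left component $\ccost_L\neg A$. Fix once and for all the infinite but essentially finite bitstring $v=0^{\omega}$. The machine keeps granting permission and reacts as follows:

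\smallskip\noindent\textbf{Case 1:} Whenever the adversary makes a move $2.\alpha$ in the $A$-component, $\mathcal{M}$ responds with $1.\epsilon.\alpha$ in $\ccost_L\neg A$. Because $\epsilon$ is a prefix of every infinite bitstring, this reproduces the move $\alpha$ in every thread of $\ccost_L\neg A$, in particular in the thread $v$.

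\smallskip\noindent\textbf{Case 2:} Whenever the adversary makes a move $1.w.\alpha$ in $\ccost_L\neg A$, $\mathcal{M}$ checks whether $w\preceq v$ (equivalently, whether $w$ consists entirely of $0$s). If so, it responds with $2.\alpha$ in $A$; otherwise it ignores the move.

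\smallskip
Let $\Gamma$ be any run generated by $\mathcal{M}$. Since the machine never makes an illegal move unless the adversary has done so first, we may assume $\Gamma$ is a legal run of $\ccost_L\neg A\mld A$; otherwise $\Gamma$ is $\bot$-illegal and $\top$ wins trivially. Write $\Sigma=\Gamma^{1.}$ and $\Pi=\Gamma^{2.}$. By inspecting the two cases of the strategy, a move appears in $\Sigma^{\preceq v}$ if and only if it is either (a) the machine's response $\epsilon.\alpha$ to an adversary move $2.\alpha$ in $A$, contributing a $\pp$-labeled $\alpha$, or (b) an adversary move $1.w.\alpha$ with $w\preceq v$, contributing an $\oo$-labeled $\alpha$. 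These are exactly the moves of $\Pi$ with their $\pp/\oo$ labels interchanged. Hence $\Sigma^{\preceq v}$ is $\neg\Pi$, possibly up to a $\wp$-delay that the static game assumption on $A$ (and hence on $\neg A$) allows us to absorb.

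If $\Pi$ is a $\pp$-won run of $A$, then $\mathcal{M}$ wins through the right disjunct. Otherwise $\Pi$ is $\oo$-won in $A$, so $\Sigma^{\preceq v}=\neg\Pi$ is $\pp$-won in $\neg A$; and since $v$ is an infinite but essentially finite bitstring, Definition~\ref{def}(2)(ii) tells us that $\Sigma$ is a $\pp$-won run of $\ccost_L\neg A$, so the left disjunct is won. In either case $\Gamma$ is $\pp$-won, proving the lemma.

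The only delicate point, and what I expect to be the main obstacle, is justifying that ignoring the adversary's moves $1.w.\alpha$ with $w\not\preceq v$ does not cause the machine to produce an illegal run in $A$, and that the identity $\Sigma^{\preceq v}=\neg\Pi$ holds even after accounting for possible reordering of moves between the two components. Both are handled by the static game hypothesis: the ignored moves leave thread $v$ of $\ccost_L\neg A$ untouched, and any apparent discrepancy in the order of $\pp$- versus $\oo$-moves between $\Pi$ and $\Sigma^{\preceq v}$ is a $\wp$-delay of the sort that preserves both legality and the winner of a static game.
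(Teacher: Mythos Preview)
Your copycat idea is right, but there is a genuine gap, and it is not the one you flagged. The problem is with your Case~1, not Case~2: the response $1.\epsilon.\alpha$ need not be a \emph{legal} move of $\ccost_L\neg A$, because legality must hold in \emph{every} thread $x$, not just in $v=0^\omega$. Concretely, take a static game $A$ whose only legal positions are $\langle\rangle$, $\langle\top\beta\rangle$, $\langle\top\beta,\bot\alpha\rangle$. Now let the adversary play $\bot\,1.0.\beta$ (legal; you copy it to $2.\beta$) and then $\bot\,2.\alpha$ (legal, since $\Pi'=\langle\top\beta\rangle$). Your response $\top\,1.\epsilon.\alpha$ makes the position in thread $x=1^\omega$ equal to $\langle\top\alpha\rangle$ in $\neg A$, i.e.\ $\langle\bot\alpha\rangle$ in $A$, which is $\bot$-illegal; hence the overall run is $\top$-illegal and $\mathcal M$ loses. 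The static hypothesis does not rescue this: having the same $\bot$-subsequence as a legal position does not make $\langle\Delta,\bot\alpha\rangle$ legal in general, and the game above is static.

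The paper sidesteps exactly this difficulty by proving the lemma for the \emph{old} version $\cst_T$ (explicitly calling it ``one exception where we prefer to deal with the old version''). In $\ccost_T\neg A$ it is $\top$ who makes replicative moves; since the machine never replicates, the BT-structure stays $\{\epsilon\}$, so every adversary move in the left component is forced to be $1.\epsilon.\beta$. All threads of $\neg A$ are then identical, the invariant $\Sigma^{\preceq v}=\neg\Pi$ holds for \emph{every} $v$, and the legality of each machine move follows immediately. Your approach can be repaired while staying with $\cst_L$ (e.g.\ respond with $1.0^{m}.\alpha$ for $m$ exceeding the length of every bitstring the adversary has used so far in the left component, so that any thread receiving your move also received exactly the adversary's moves that went to $v$), but as written the argument is incomplete.
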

\begin{proof}
This is one exception where we prefer to deal with the old version $\cst_T$ of $\cst$. Our goal is to show that there exists an EPM ${\cal M}$ such that, for any static game $A$, ${\cal M}$ wins $\cst_T A\rightarrow A$, i.e. $\ccost_T\neg A\vee A$. Such an EPM ${\cal M}$ works as follows. It never makes any replicative moves in the left component. Whenever the environment makes a move $1.\epsilon.\alpha$ for some move $\alpha$, it makes the move $2.\alpha$; and whenever the environment makes a move $2.\beta$ for some move $\beta$, it makes the move $1.\epsilon.\beta$.

Consider any run $\Gamma$ generated by ${\cal M}$. As earlier, we assume that $\Gamma$ is a legal run of the overall game. Let $\Sigma=\Gamma^{1.}$ and $\Pi=\Gamma^{2.}$. That is, $\Sigma$ is the run that took place in the $\ccost_T\neg A$ component, and $\Pi$ is the run that took place in the $A$ component. If there is an infinite but essentially finite bitstring $v$ such that $\Sigma^{\preceq v}$ is a $\top$-won run of $\neg A$, then ${\cal M}$ wins the $\ccost_T\neg A$ component, and hence wins the overall game. Now assume that, for every infinite but essentially finite bitstring $v$, $\Sigma^{\preceq v}$ is a $\bot$-won run of $\neg A$. But from the description of the work of ${\cal M}$, one can easily see that $\Sigma^{\preceq v}=\neg\Pi$ for every such $v$. Therefore, $\Pi$ is a $\top$-won run of $A$, and hence $\Gamma$ is won by ${\cal M}$.
\end{proof}

It should be acknowledged that the following proofs in the present section very closely follow
the proofs of \cite{Japtam}.
\begin{lemma}\label{apr14a}
There is an effective function $f$ from machines to machines such that, for every machine ${\cal M}$, formula $F$ and interpretation $^*$, if ${\cal M}$ wins $\cst F^*$, then $f({\cal M})$ wins $F^*$.
\end{lemma}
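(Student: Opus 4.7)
The plan is to build $f({\cal M})$ as a machine that plays $F^*$ by simulating ${\cal M}$ and routing all interaction through a single fixed essentially finite thread of $\cst F^*$. Fix once and for all the infinite but essentially finite bitstring $v=000\ldots$ (which contains no $1$s and is therefore essentially finite). The machine $f({\cal M})$ internally runs ${\cal M}$ as though ${\cal M}$ were playing $\cst F^*$, carrying the simulated work tape and run tape on its own work tape. When the real environment plays a move $\alpha$ in $F^*$, $f({\cal M})$ appends $\bot\epsilon.\alpha$ to the simulated run tape of ${\cal M}$. When the simulated ${\cal M}$ outputs a move $\top w.\beta$, $f({\cal M})$ checks whether the finite bitstring $w$ is a prefix of $v$, i.e.\ whether $w$ consists solely of $0$s; if so, $f({\cal M})$ plays $\beta$ in $F^*$, and otherwise silently discards the move. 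Permission is granted in $F^*$ exactly when ${\cal M}$ grants permission in the simulation.

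The core verification is that, letting $\Gamma$ denote the full simulated run of $\cst F^*$ and $\Delta$ denote the run generated by $f({\cal M})$ in $F^*$, we have $\Delta = \Gamma^{\preceq v}$. Every $\bot$-labeled move of $\Gamma$ has the form $\bot\epsilon.\alpha$ (fed in as the translation of an environment move), and every $\top$-labeled move of $\Gamma$ has the form $\top w.\beta$ (produced by ${\cal M}$). Projecting $\Gamma$ along $v$ retains exactly those moves whose prefix $w$ is an initial segment of $v$ and strips that prefix, which coincides move-for-move with $\Delta$.

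From this identity the winning analysis is immediate. If the real environment plays only legal moves in $F^*$, then $\Gamma$ is a legal run of $\cst F^*$, so by the winning property of ${\cal M}$ the simulated branch is fair and $\Gamma$ is $\top$-won; since $v$ is essentially finite, Definition \ref{def} yields that $\Delta = \Gamma^{\preceq v}$ is a $\top$-won run of $F^*$, and the fairness of the simulated branch translates into fairness of $f({\cal M})$ by the one-to-one matching of permission grants. If instead the environment makes an illegal move $\alpha$ in $F^*$, then $\bot\epsilon.\alpha$ makes $\Gamma^{\preceq v}$ --- and hence $\Gamma$ --- $\bot$-illegal in $\cst F^*$; since $\Delta=\Gamma^{\preceq v}$ is illegal for the same reason, $f({\cal M})$ wins $F^*$ vacuously with no fairness obligation. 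The construction is evidently uniform in ${\cal M}$ and requires no knowledge of $F$ or $^*$, so $f$ is effective as required. The only step that needs care is the bookkeeping that establishes $\Delta = \Gamma^{\preceq v}$ precisely --- including the matching of $\top/\bot$ labels and the correct handling of the $\epsilon$-prefix convention --- after which every other part of the argument falls into place mechanically.
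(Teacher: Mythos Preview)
Your direct construction is a different route from the paper's proof, which does not build $f({\cal M})$ by simulation at all: the paper invokes Lemma~\ref{lemma3} to obtain a fixed machine ${\cal N}_0$ winning $\cst F^*\rightarrow F^*$ for every $F$ and $^*$, and then sets $f({\cal M})=h({\cal N}_0,{\cal M})$ using the effective modus-ponens combinator $h$ from \cite{Jap03}. That approach sidesteps exactly the delicate legality bookkeeping you are attempting.

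Your argument has a genuine gap at the sentence ``If the real environment plays only legal moves in $F^*$, then $\Gamma$ is a legal run of $\cst F^*$.'' This is not justified and can fail. By Definition~\ref{def}, legality of $\Gamma$ in $\cst F^*$ requires $\Gamma^{\preceq x}\in{\bf Lr}^{F^*}$ for \emph{every} infinite bitstring $x$, not just for $x=v$. When you feed the simulated adversary the move $\epsilon.\alpha$, that move lands in every thread simultaneously. But ${\cal M}$ may already have played different moves in different threads (say $0.\beta_0$ in threads starting with $0$ and $1.\beta_1$ in threads starting with $1$), so that $\alpha$ is legal at the current position of thread $v=000\ldots$ (which matches the real play) yet illegal at the current position of some other thread. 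In that case $\Gamma$ becomes $\oo$-illegal in $\cst F^*$. The hypothesis that ${\cal M}$ wins $\cst F^*$ then places no further constraint on ${\cal M}$'s behavior: it may now emit a move $0^k.\gamma$ with $\gamma$ illegal in $F^*$ at the real position, your $f({\cal M})$ will dutifully play $\gamma$, and $\Delta$ becomes $\pp$-illegal---so $f({\cal M})$ loses even though the real environment played legally throughout.

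The approach is salvageable: instead of feeding $\epsilon.\alpha$, feed $0^k.\alpha$ where $k$ strictly exceeds the length of every bitstring prefix ${\cal M}$ has used so far. Then every thread extending $0^k$ has, up to that point, exactly the same projection as thread $v$, so the fed move is legal in $\cst F^*$ whenever $\alpha$ is legal in $F^*$, and the identity $\Delta=\Gamma^{\preceq v}$ is preserved. With that amendment the rest of your analysis goes through. As written, however, the proof does not establish the lemma.
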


\begin{proof}
Lemma \ref{lemma3} almost immediately implies that there is a machine ${\cal N}_0$ such that ${\cal N}_0$ wins $\cst F^*\rightarrow F^*$ for any formula $F$ and interpretation $^*$. Furthermore, by Proposition 21.3 of \cite{Jap03}, there is an effective procedure that, for any pair $({\cal N},{\cal M})$ of machines, returns a machine $h({\cal N},{\cal M})$ such that, for any static games $A$ and $B$, if ${\cal N}$ wins $A\rightarrow B$ and ${\cal M}$ wins $A$, then $h({\cal N},{\cal M})$ wins $B$. So, let $f({\cal M})$ be the function satisfying  $f({\cal M})=h({\cal N}_0,{\cal M})$. Then $f({\cal M})$ wins $F^*$.
\end{proof}

\begin{lemma}\label{apr14b}
There is an effective function $g$ from machines to machines such that, for every machine ${\cal M}$, formula $F$ and interpretation $^*$, if ${\cal M}$ wins $(F^{\clubsuit})^*$, then $g({\cal M})$ wins $F^*$.
\end{lemma}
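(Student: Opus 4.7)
The plan is to reduce this lemma to Lemma \ref{apr14a} by observing that, for every formula $F$ and interpretation $^*$, the game $(F^{\clubsuit})^*$ is essentially the same as $\cst F^*$ modulo a trivial relabelling of moves. Once this is done, we obtain $g$ by composition with the function $f$ supplied by Lemma \ref{apr14a}.

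First I would unwind Definition \ref{def1} in the case $C = F^{\clubsuit} = (\langle F\rangle,\langle\{F\}\rangle,\langle\{F\}\rangle)$, so $k=m=n=1$. Clause (i) says that every move of a legal run must be of the form $1;u.\alpha$ with $u$ a finite bitstring, and that for every infinite bitstring $x$, $\Gamma^{\preceq 1;x}\in\legal{F^*}{}$. Clause (ii) says $\top$ wins iff for every infinite but essentially finite bitstring $v$, $\Omega^{\preceq 1;v}$ is a $\top$-won run of $F^*$. Comparing this to Definition \ref{def}, these are exactly the legality and winning conditions of $\cst F^*$, except that moves are prefixed by ``$1;$'' rather than simply written as ``$w.\alpha$''. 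Hence the identity of runs (after stripping/adding the prefix ``$1;$'' from every move) is a bijection between legal runs of $(F^{\clubsuit})^*$ and legal runs of $\cst F^*$ preserving winners.

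Next I would define an effective function $t$ on machines as follows: $t({\cal M})$ is the machine that simulates ${\cal M}$ while translating moves in both directions. Whenever the environment of $t({\cal M})$ makes a move $w.\alpha$ on the run tape of $t({\cal M})$ (the one for $\cst F^*$), the simulation writes $1;w.\alpha$ on the simulated run tape of ${\cal M}$; whenever the simulated ${\cal M}$ outputs a move $1;w.\alpha$, $t({\cal M})$ outputs $w.\alpha$. Granting permission is inherited from ${\cal M}$. This is plainly effective. By the bijection described above, any run $\Gamma$ generated by $t({\cal M})$ is the image of some run $\Gamma'$ generated by ${\cal M}$, and $\Gamma$ is a $\top$-won run of $\cst F^*$ iff $\Gamma'$ is a $\top$-won run of $(F^{\clubsuit})^*$; fairness is likewise preserved. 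Consequently, if ${\cal M}$ wins $(F^{\clubsuit})^*$, then $t({\cal M})$ wins $\cst F^*$.

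Finally, I would set $g({\cal M}) = f(t({\cal M}))$, where $f$ is the effective function from Lemma \ref{apr14a}. Then $g$ is effective, and whenever ${\cal M}$ wins $(F^{\clubsuit})^*$, $t({\cal M})$ wins $\cst F^*$, so by Lemma \ref{apr14a} $g({\cal M}) = f(t({\cal M}))$ wins $F^*$, as required. The only point requiring care is the move-by-move verification that the translation preserves legality, winnership and fairness; I do not expect any genuine obstacle here, since the translation is purely syntactic and the games differ only by an invertible prefix on moves.
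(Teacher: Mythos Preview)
Your proposal is correct and follows essentially the same route as the paper's own proof: both observe that $(F^{\clubsuit})^*$ differs from $\cst F^*$ only by the fixed prefix ``$1;$'' on moves, build the obvious prefix-stripping/adding simulator (your $t$, the paper's intermediate construction) to pass from a winner of $(F^{\clubsuit})^*$ to a winner of $\cst F^*$, and then invoke Lemma~\ref{apr14a} to finish. The only cosmetic difference is that the paper phrases the legality argument via the usual ``${\cal M}_B$ never makes an illegal move unless the environment or the simulated ${\cal M}$ does so first'' idiom rather than through an explicit bijection of runs, but the content is the same.
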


\begin{proof}
Every legal move of $(F^{\clubsuit})^*$ looks like $1;w.\alpha$ for some finite bitstring $w$ and move $\alpha$, while the corresponding legal move of $(\cst F)^*$ simply looks like $w.\alpha$, and vice versa. Consider an arbitrary EPM ${\cal M}$ and an arbitrary  interpretation $^*$. Below we  show the existence of an effective function $f$ such that, if ${\cal M}$ wins $(F^{\clubsuit})^*$, then (the strategy) $f({\cal M})$ wins $(\cst F)^*$.

We construct an EPM $f({\cal M})$ that plays $(\cst F)^*$ by simulating and mimicking a play of $(F^{\clubsuit})^*$ (called the {\bf imaginary play}) by ${\cal M}$ as follows. Throughout simulation, $f({\cal M})$ grants permission whenever the simulated ${\cal M}$ does so, and feeds its environment's response---in a slightly modified form described below---back to the simulated $\cal M$ as the response of ${\cal M}$'s imaginary adversary (this detail of simulation will no longer be explicitly mentioned later in similar situations).  Whenever the environment makes a move $w.\alpha$ for some finite bitstring $w$ and move $\alpha$, $f({\cal M})$  translates it as the move $1;w.\alpha$ made by the imaginary adversary of ${\cal M}$, and ``vice versa":
whenever the simulated ${\cal M}$ makes a move $1;w.\alpha$ for some finite bitstring $w$ and move $\alpha$ in the imaginary play of $(F^{\clubsuit})^*$, $f({\cal M})$ translates it as its own move $w.\alpha$ in the real play of $(\cst F)^*$.
The effect achieved by $f({\cal M})$'s strategy can be summarized by saying that it synchronizes every thread $x$ of $F^*$ in the real play of $(\cst F)^*$ with the ``same thread" $x$ of $F^*$ in the imaginary play of $(F^{\clubsuit})^*$.

Let $\Gamma$ be an arbitrary run generated by $f({\cal M})$, and $\Omega$ be the corresponding run in the imaginary play of $(F^{\clubsuit})^*$ by ${\cal M}$. From our description of  $f({\cal M})$ it is clear that the latter never makes illegal moves unless its environment or the simulated ${\cal M}$ does so first. Hence we may safely assume that $\Gamma$ is a legal run of $(\cst F)^*$ and $\Omega$ is a legal run of $(F^{\clubsuit})^*$, for otherwise either $\Gamma$ is a $\bot$-illegal run of $(\cst F)^*$ and thus $f({\cal M})$ is an automatic winner in $(\cst F)^*$, or $\Omega$ is a $\top$-illegal run of $(F^{\clubsuit})^*$ and thus ${\cal M}$ does not win $(F^{\clubsuit})^*$. Now, it is not hard to see that, for any infinite but essentially finite bitstring $x$, we have $\Gamma^{\preceq x}=\Omega^{\preceq 1;x}$. Therefore, $f({\cal M})$ wins $(\cst F)^*$ as long as ${\cal M}$ wins $(F^{\clubsuit})^*$.

Finally, in view of Lemma \ref{apr14a}, the existence of function $g$ satisfying the promise of the present lemma is obviously guaranteed.
\end{proof}

A rule of {\bf CL15}$(\cst)$ (other than Axiom) is said to be {\bf uniform-constructively sound} iff there is an effective procedure that takes any instance $(A,B)$ (i.e. a particular premise-conclusion pair) of the rule, any machine ${\cal M}_A$ and returns a machine ${\cal M}_B$ such that, for any interpretation $^*$, whenever ${\cal M}_A$ wins $A^*$, ${\cal M}_B$ wins $B^*$. Axiom is uniform-constructively sound iff there is an effective procedure that takes any instance $B$ of (the ``conclusion" of) Axiom and returns a uniform solution ${\cal M}_B$ of $B$.

\begin{theorem}\label{mainth1}
All rules  of {\bf CL15}$(\cst)$ are uniform-constructively sound.
\end{theorem}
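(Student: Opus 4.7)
The plan is to handle each of the ten rules individually, in each case producing an effective procedure that converts a machine ${\cal M}_A$ winning the premise $A^*$ into a machine ${\cal M}_B$ winning the conclusion $B^*$; for Axiom, the procedure produces a uniform solution from scratch. Throughout, I would exploit the picture of Remark~\ref{feb13a}: a legal run of $C^*$ decomposes into parallel threads indexed by pairs $(a,\vec{x})$, overgroups behave like generalized $\cst$'s, and undergroups like generalized disjunctions.

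For the Axiom, I would implement a copycat strategy between each paired $\neg F_i$ and $F_i$: whenever the adversary makes a move $(2i-1);\vec{u}.\alpha$ on $\neg F_i$, the machine mirrors it as $(2i);\vec{u}.\alpha$ on $F_i$, and symmetrically. For every $\vec{x}$, the resulting runs $\Omega^{\preceq (2i-1);\vec{x}}$ and $\Omega^{\preceq (2i);\vec{x}}$ are mutual negations, so exactly one of them is $\top$-won; since both oformulas lie in undergroup $i$, that undergroup is satisfied for $\top$ on every $\vec{x}$. For the purely structural rules --- the three versions of Exchange, the two versions of Duplication, Merging, and Weakening --- the construction is a simple renaming or padding of the move language. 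Exchange just permutes indices. Overgroup Duplication calls for inserting a copy of one coordinate of $\vec{u}$ (or $\vec{x}$) to accommodate the new, content-identical overgroup, while Undergroup Duplication requires no change at all, since duplicated undergroups are satisfied by the same winning $\vec{x}$'s. Merging is handled by combining two overgroup coordinates into a single one in a way that preserves essential finiteness. Weakening lets the conclusion carry additional arcs/oformulas/overgroups that only make life easier for $\top$, so ${\cal M}_A$ embeds trivially.

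For Contraction and the propositional introduction rules ($\vee$) and ($\wedge$), the standard cirquent-calculus translation applies: a move on one of the split oformulas (two copies of $\ccost F$, or $E$ and $F$) in the premise is translated to a move on the appropriate subformula of the merged oformula in the conclusion, and vice versa; the only extra care for ($\wedge$) is that every undergroup of the conclusion containing $E\wedge F$ is split into two counterparts in the premise, and a winning $\vec{x}$ for that undergroup must be witnessed on one of the two sides. The genuinely novel cases are the two recurrence introduction rules. For ($\cst$), the fresh overgroup inserted in the premise, whose sole member is $F$, gives $F$ an extra coordinate $u_{n+1}$, and my construction identifies that coordinate with the internal bitstring $w$ appearing in moves $a;\vec{u}.w.\alpha$ on $\cst F$ in the conclusion; each essentially finite thread $\vec{x}$ of $\cst F$ in the conclusion then corresponds to the essentially finite thread $(\vec{x},x_{n+1})$ of $F$ in the premise, so the premise win transfers. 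For ($\ccost$), the dual situation arises: the premise places $F$ into any number of already existing overgroups, and a move $a;\vec{u}.w.\alpha$ on $\ccost F$ in the conclusion must be distributed into those overgroups when translated back.

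The main obstacle I anticipate is the ($\ccost$) case, because its winning condition is existential over essentially finite threads while, in the premise, the corresponding $F$ has its thread-index spread across several overgroup coordinates. I must show that \emph{some} winning essentially finite thread on the conclusion's $\ccost F$ translates to an essentially finite tuple on which ${\cal M}_A$ wins the premise. The required manipulation resembles Case~3 of the second strategy in the proof of Theorem~\ref{atom}, where blocks of ``0''s may be freely inserted without spoiling essential finiteness. I would adapt that device: distribute the bitstring $w$ across the premise's overgroups that contain $F$ by placing its nontrivial portion in one designated slot and padding the remaining overgroup coordinates with strings of ``0''s (or with $\epsilon$), so that the induced tuple in the premise is essentially finite exactly when $w$ is, and the single existential witness obtained from ${\cal M}_A$'s win on that tuple supplies the existential witness required by $\ccost F$ in the conclusion.
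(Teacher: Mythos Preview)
Your high-level plan is sound and matches the paper's approach: treat each rule by an explicit move-translation from the imaginary play of $A$ (run by ${\cal M}_A$) to the real play of $B$, and then verify the win condition thread by thread. Your treatments of Axiom, Exchange, Undergroup Duplication, Weakening, Contraction, ($\vee$), ($\wedge$), and ($\cst$) are essentially correct and in line with the paper.

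The genuine gap is in your handling of Corecurrence Introduction, and the same missing idea also undermines your sketches for Overgroup Duplication and Merging. Your ``one designated slot plus padding with $0$'s'' device does not work. Fix a thread $\vec{x}=(\vec{y},x_1,\ldots,x_n)$ of the conclusion $B$ and an undergroup $U_i^B$; you must exhibit an oformula in $U_i^B$ that is $\top$-won on $\vec{x}$. Your plan is to invoke ${\cal M}_A$'s win on the \emph{padded} tuple $(\vec{y},x_1,0^\infty,\ldots,0^\infty)$ in $A$. If the winning oformula there happens to be $F$, your translation does produce an essentially finite internal thread of $\ccost F$ that is won. But if the winning oformula is some $F_c\neq F$, you only learn that $\Omega^{\preceq c;\,\vec{y},x_1,0^\infty,\ldots,0^\infty}$ is $\top$-won, whereas what you need is that $\Gamma^{\preceq c;\vec{x}}=\Omega^{\preceq c;\vec{x}}$ is $\top$-won. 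The last $n$ overgroups are ordinary pre-existing overgroups of both cirquents, so $F_c$ may belong to some of them; then $\Omega^{\preceq c;\vec{x}}$ genuinely depends on $x_2,\ldots,x_n$ and need not agree with its padded counterpart. In short, padding collapses the last $n$ coordinates for $F$ but not for the other oformulas sharing those overgroups, so the tuple on which you cite ${\cal M}_A$'s win does not match the tuple on which you must certify ${\cal M}_B$'s win. The analogy you draw with Case~3 of Theorem~\ref{atom} is misplaced: that trick inserts $0$'s along a \emph{single} branch of a BT-structure, not across independent overgroup coordinates that other oformulas also read.

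The missing idea, used by the paper for all three of ($\ccost$), Overgroup Duplication, and Merging, is a bijective \emph{fusion/defusion} between one bitstring and an $n$-tuple of bitstrings (interleave the bits: the $(jn-n+i)$'th bit of the fused string is the $j$'th bit of $x_i$). Crucially, the fusion of $n$ infinite bitstrings is essentially finite iff every component is. For ($\ccost$), a move $a;\vec{w},\epsilon,\ldots,\epsilon.u.\alpha$ on $\ccost F$ in $B$ is translated to $a;\vec{w},u_1,\ldots,u_n.\alpha$ on $F$ in $A$, where $(u_1,\ldots,u_n)$ is the $n$-defusion of $u$; conversely a move by ${\cal M}_A$ with coordinates $(u_1,\ldots,u_n)$ is sent to the block of moves indexed by all fusions of $(u_1,\ldots,u_n)$. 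One then gets $(\Gamma^{\preceq a;\vec{x}})^{\preceq z}=\Omega^{\preceq a;\vec{x}}$ with $z$ the fusion of $x_1,\ldots,x_n$, so you may invoke ${\cal M}_A$'s win on the \emph{same} tuple $\vec{x}$; the ``other oformula'' case is then immediate. The same device (fusion in one direction, defusion in the other) is exactly what is needed for Overgroup Duplication and Merging; ``inserting a copy of one coordinate'' does not suffice, for precisely the mismatch reason above.
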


\begin{proof}
In what follows, $A$ is the premise of an arbitrary instance of a given rule of {\bf CL15}$(\cst)$, and $B$ is the corresponding conclusion, except the case of Axiom where we only have $B$. We will prove that each rule of {\bf CL15}$(\cst)$ is uniform-constructively sound by showing that  an EPM  ${\cal M}_B$ can be constructed effectively from an arbitrary EPM (or BMEPM in some cases) ${\cal M}_A$ such that, for whatever interpretation $^*$,
whenever ${\cal M}_A$ wins $A^{\ast}$,  ${\cal M}_B$ wins $B^{\ast}$. Since an interpretation $^{\ast}$ is never relevant in such proofs, we may safely omit it, writing simply $A$ instead of $A^{\ast}$ to represent a game. Next, in all cases the assumption that ${\cal M}_A$ wins $A$ will be implicitly made, even though it should be pointed out that the construction of ${\cal M}_B$ never depends on this assumption. Correspondingly, it will be assumed that ${\cal M}_A$ never makes illegal moves. Further, as in the proof of Lemma \ref{apr14b}, we shall always implicitly assume that ${\cal M}_B$'s adversary never makes illegal moves either. To summarize, when analyzing ${\cal M}_B$, ${\cal M}_A$ and the games they play, we safely pretend that illegal runs never occur.\vspace{2mm}

{\bf (1)}\ Assume that $B$ is an axiom with $2n$ oformulas. An EPM ${\cal M}_B$ that wins $B$ can be constructed as follows. It keeps granting permission. Whenever the environment makes a move $a;\vec{w}.\alpha$, where $1\leq a\leq 2n$ and $\vec{w}$ is a sequence of $n$ finite bitstrings, ${\cal M}_B$ makes the move $b;\vec{w}.\alpha$, where $b=a+1$ if $a$ is odd, and $b=a-1$ if $a$ is even. Then, for any run $\Gamma$ of $B$ generated by ${\cal M}_B$ and any sequence $\vec{x}$ of $n$ infinite but essentially finite bitstrings, we have $\Gamma ^{\preceq a;\vec{x}}=\neg\Gamma^{\preceq b;\vec{x}}$. It is obvious that $\Gamma$ is a $\top$-won run of $B$, so that ${\cal M}_B$ wins $B$. \vspace{2mm}

{\bf (2)}\ Assume that $B$ follows from $A$ by Overgroup Exchange, where the $i$'th ($i\geq 1$) and the $(i+1)$'th overgroups of $A$ have been swapped when obtaining $B$ from $A$. The EPM ${\cal M}_B$ works by simulating and mimicking ${\cal M}_A$ as follows. Let $n$ be the number of overgroups of either cirquent, and $a$ be a positive integer not exceeding the number of oformulas of either cirquent. For any move (by either player) $a;\vec{w_1},u_1,u_2,\vec{w_2}.\alpha$  in the real play of $B$,  where $\vec{w_1}$ and $\vec{w_2}$ are any sequences of $i-1$ and $n-i-1$ finite bitstrings, respectively, and $u_1,u_2$ are two finite bitstrings,  ${\cal M}_B$ translates it as the move $a;\vec{w_1},u_2,u_1,\vec{w_2}.\alpha$ (by the same player) in the imaginary play of $A$, and vice versa, with all other moves not reinterpreted.

Let $\Gamma$ be any run of $B$ generated by ${\cal M}_B$, and $\Omega$ be the corresponding run generated by ${\cal M}_A$ in the imaginary play of $A$. It is obvious that, for any sequence $\vec{x}$ of $n$ infinite but essentially finite bitstrings, $\Gamma^{\preceq a;\vec{x}}=\Omega^{\preceq a;\vec{y}}$, where $\vec{y}$ is the result of swapping in $\vec{x}$ the $i$'th and $(i+1)$'th bitstrings. Hence ${\cal M}_B$ wins $B$ because ${\cal M}_A$ wins $A$.

In the case of Oformula Exchange, a similar method can be used to construct ${\cal M}_B$, with the only difference that the reinterpreted objects are the occurrences of two adjacent oformulas rather than the occurrences of two adjacent overgroups.

As for Undergroup Exchange, its conclusion, as a game, is the same as its premise. So, the machine ${\cal M}_B={\cal M}_A$ does the job.\vspace{2mm}

In the subsequent clauses, as in the present one, without any further indication, $\Gamma$ will stand for an arbitrary run of $B$ generated by ${\cal M}_B$, and $\Omega$ will stand for the run of $A$ generated by the simulated machine ${\cal M}_A$ in the corresponding scenario. \vspace{2mm}

{\bf (3)}\ Assume $B$ is obtained from $A$ by Weakening. If no oformula of $B$ was deleted when moving from $B$ to $A$, then ${\cal M}_B$ works exactly as ${\cal M}_A$ does and succeeds, because every $\top$-won run of $A$ is also a $\top$-won run of $B$ (but not necessarily vice versa). If, when moving from $B$ to $A$, an oformula $F_a$ of $B$ was deleted, then ${\cal M}_B$ can be constructed as a machine that works by simulating and mimicking ${\cal M}_A$. What ${\cal M}_B$ needs to do during its work is to ignore the moves within $F_a$, and play exactly as ${\cal M}_A$ does in all other oformulas. Again, it is obvious that every $\top$-won run of $A$ is also a $\top$-won run of $B$, which means that ${\cal M}_B$ wins $B$ as long as ${\cal M}_A$ wins $A$.\vspace{2mm}

{\bf (4)}\ Since Exchange has already been proven to be uniform-constructively sound, in this and the remaining clauses of the present proof, we may safely assume that the oformulas and overgroups affected by a rule are at the end of the corresponding lists of objects of the corresponding cirquents.

Assume $B$ follows from $A$ by Contraction, and the contracted oformula $\ccost F$ is at the end of the list of oformulas of $B$. Let $a$ be the number of oformulas of $B$, and let $b=a+1$. Thus, the $a$'th oformula of $B$ is $\ccost F$, and the $a$'th and $b$'th oformulas of $A$ are $\ccost F$ and $\ccost F$. Let $n$ be the number of overgroups in either cirquent.
In this case, we assume that ${\cal M}_A$ is a BMEPM rather than an EPM.
As always, we let ${\cal M}_B$ be an EPM that works by simulating and mimicking ${\cal M}_A$. Namely, let $\vec{w}$ be any sequence of $n$ finite bitstrings. If the moves take place within the oformulas other than $\ccost F$, then nothing should be reinterpreted. If the moves take place in $\ccost F$, then we have:

\begin{itemize}
  \item For any move $a;\vec{w}.0u.\alpha$ made by the environment in the real play of $B$, ${\cal M}_B$ translates it as the move $a;\vec{w}.u.\alpha$ by the imaginary adversary of ${\cal M}_A$ in the play of $A$; whenever the simulated ${\cal M}_A$ makes a move $a;\vec{w}.u.\alpha$ in the imaginary play of $A$, ${\cal M}_B$ makes the move $a;\vec{w}.0u.\alpha$ in the real play of $B$.
  \item For any move $a;\vec{w}.1u.\alpha$ made by the environment in the real play of $B$, ${\cal M}_B$ translates it as the move $b;\vec{w}.u.\alpha$ by the imaginary adversary of ${\cal M}_A$ in the play of $A$; whenever the simulated ${\cal M}_A$ makes a move $b;\vec{w}.u.\alpha$ in the imaginary play of $A$, ${\cal M}_B$ makes the move $a;\vec{w}.1u.\alpha$ in the real play of $B$.
  \item If the environment makes a move $a;\vec{w}.\epsilon.\alpha$ in the real play of $B$, ${\cal M}_B$ translates it as a block of the two moves $a;\vec{w}.\epsilon.\alpha$ and $b;\vec{w}.\epsilon.\alpha$ by the imaginary adversary of ${\cal M}_A$ in the play of $A$, and vice versa.
\end{itemize}

Note that if ${\cal M}_A$ makes a block of several moves at once (because it is a BMEPM), ${\cal M}_B$ still works as described above, with the only difference that it will correspondingly make several consecutive moves in the real play, rather than only one move. In the remaining clauses of the present proof, whenever ${\cal M}_A$ is assumed to be a BMEPM, for simplicity we may assume that it never makes more than one move at once. For, otherwise, a block of several moves made by ${\cal M}_A$ at once will be translated through several consecutive moves by ${\cal M}_B$ as noted above.

Below we show that ${\cal M}_B$ wins $B$, i.e., ${\cal M}_B$ is the winner in every undergroup of $B$. Let $U_i^{B}$ be any $i$'th undergroup of $B$ and $U_i^{A}$ be the corresponding $i$'th undergroup of $A$, and let $\vec{x}$ be any sequence of $n$ infinite but essentially finite bitstrings. Since ${\cal M_A}$ wins $A$, $U_i^{A}$ is won by ${\cal M}_A$. So, for the sequence $\vec{x}$, there is an oformula $F_j$ ($1\leq j\leq b$) in $U_i^{A}$ such that $\Omega^{\preceq j;\vec{x}}$ is a $\top$-won run of $F_j$.
Next, if such $F_j$ is not one of the two contracted oformulas $\ccost F$ and $\ccost F$, then, for $\vec{x}$, the corresponding oformula $F_j$ of $B$ is won by ${\cal M}_B$, i.e. $\Gamma^{\preceq j;\vec{x}}$ is a $\top$-won run of $F_j$, because ${\cal M}_B$ plays in the thread $\vec{x}$ of $F_j$ exactly as ${\cal M}_A$ does. This means that $U_i^B$ is won by ${\cal M}_B$. If $F_j$ is one of the two contracted oformulas $\ccost F$ and $\ccost F$, below let us assume that $F_j$ is the right $\ccost F$, with the case of the left $\ccost F$ being similar. Then there is an infinite but essentially finite bitstring $w$ such that the thread $w$ of $F$ within the thread $\vec{x}$ of the right $\ccost F$ is won by ${\cal M}_A$, i.e. $(\Omega^{\preceq j;\vec{x}})^{\preceq w}$ is a $\top$-won run of $F$. But, according to the above description, ${\cal M}_B$ plays in the thread $1w$ of $F$ within the thread $\vec{x}$ of $\ccost F$ in $B$ exactly as ${\cal M}_A$ plays in the thread $w$ of $F$ within the thread $\vec{x}$ of the right $\ccost F$ in $A$, i.e. $(\Gamma^{\preceq j;\vec{x}})^{\preceq 1w}=(\Omega^{\preceq j;\vec{x}})^{\preceq w}$. Therefore, $(\Gamma^{\preceq j;\vec{x}})^{\preceq 1w}$ is a $\top$-won run of $F$, which means that $\Gamma^{\preceq j;\vec{x}}$ is a $\top$-won run of $\ccost F$ in $B$, and hence the $\ccost F$-containing undergroup $U_i^{B}$ is won by ${\cal M}_B$.\vspace{2mm}

{\em Remark}\hspace{1pt}:\  In the remaining clauses, just as in the preceding one, when talking about playing, winning, etc. in $A$ (resp. $B$) or any of its components, it is to be understood in the context of $\Omega$ (resp. $\Gamma$). Furthermore, if $A$ and $B$ have the same number $n$ of overgroups, then the context will additionally include some arbitrary but fixed sequence $\vec{x}$ of $n$ infinite but essentially finite bitstrings.\vspace{2mm}

{\bf (5)}\ Undergroup Duplication does not modify the game associated with the cirquent, so we only need to consider Overgroup Duplication.

Assume $B$ is obtained from $A$ by Overgroup Duplication. We assume that the duplicated overgroup is at the end of the list of overgroups of $A$. Let $n+1$ be the number of overgroups of $A$. Thus, every legal move of $A$ (resp. $B$) looks like $a;\vec{w},u.\alpha$ (resp. $a;\vec{w},u_1,u_2.\alpha$), where $a$ is a positive integer not exceeding the number of oformulas of $A$, $\vec{w}$ is a sequence of $n$ finite bitstrings, and $u,u_1,u_2$ are finite bitstrings.

Let $x$ and $y$ be any two---finite or infinite---bitstrings, a bitstring $z$ is a {\bf fusion} of $x$ and $y$ iff $z$ is a shortest bitstring satisfying that, for any natural numbers $i,j$ such that $x$ has at least $i$ bits and $y$ has at least $j$ bits, we have: (1) the $(2i-1)$'th bit of $z$ exists and it is the $i$'th bit of $x$; (2) the $(2j)$'th bit of $z$ exists and it is the $j$'th bit of $y$. Here and later the count of bits starts from $1$, and goes from left to right. For instance, if $x=001$ and $y=110$, then they have only one fusion $z=010110$; if $x=01$ and $y=110$, then they have two fusions $z_1=011100$, $z_2=011110$. Note that when both $x$ and $y$ are infinite, they have only one fusion. The {\bf defusion} of a bitstring $z$ is the pair $(x,y)$ where $x$ (resp. $y$) is the result of deleting from $z$ all bits except those that are found in odd (resp. even) positions. For instance, the defusion of $100110101$ is $(10111,0100)$. It is obvious that if $x$ and $y$ are infinite but essentially finite bitstrings, then their unique fusion $z$ is also essentially finite, and vice versa.

In the present case, we assume that ${\cal M}_A$ is a BMEPM.
As before, ${\cal M}_B$ works by simulating ${\cal M}_A$. Whenever ${\cal M}_A$ makes a move $a;\vec{w},u.\alpha$ in $A$,  ${\cal M}_B$ makes the move $a;\vec{w},u_1,u_2.\alpha$ in the real play of $B$, where $(u_1,u_2)$ is the defusion of $u$. And whenever the environment makes a move $a;\vec{w},u_1,u_2.\alpha$ in the real play of $B$, ${\cal M}_B$ translates it as a block of ${\cal M}_A$'s imaginary adversary's moves $a;\vec{w},v_1.\alpha, \ldots, a;\vec{w},v_k.\alpha$ in $B$, where $v_1, \ldots, v_k$ are all the fusions of $u_1$ and $u_2$.

For every oformula $F_a$ of either cirquent, every sequence $\vec y$ of $n$ infinite but essentially finite bitstrings and any infinite but essentially finite bitstrings $x_1$ and $x_2$, we have $\Gamma^{\preceq a;\vec{y},x_1,x_2}=\Omega^{\preceq a;\vec{y},x}$, where $x$ is the fusion of $x_1$ and $x_2$. So it is obvious that ${\cal M}_B$ wins $B$ as long as ${\cal M}_A$ wins $A$.\vspace{2mm}

{\bf (6)}\ Assume $B$ follows from $A$ by Merging. Let us assume that $A$ has $n+2$ overgroups, and $B$ is the result of merging in $A$ the two adjacent overgroups $O_{n+1}$ and $O_{n+2}$. Then every legal move of $A$ (resp. $B$) looks like $a;\vec{w},u_1,u_2.\alpha$ (resp. $a;\vec{w},u.\alpha$), where $a$ is a positive integer not exceeding the number of oformulas in either cirquent, $\vec{w}$ is a sequence of $n$ finite bitstrings, and $u,u_1,u_2$ are finite bitstrings. We still assume that ${\cal M}_A$ is a BMEPM.
The EPM ${\cal M}_B$ works as follows.

If the $a$'th oformula of $A$ is neither in $O_{n+1}$ nor in $O_{n+2}$, then ${\cal M}_B$ interprets every move $a;\vec{w},\epsilon,\epsilon.\alpha$ made by ${\cal M}_A$ in the imaginary play of $A$ as the move $a;\vec{w},\epsilon.\alpha$ in the real play of $B$, and vice versa.

If the $a$'th oformula of $A$ is in $O_{n+1}$ but not in $O_{n+2}$, ${\cal M}_B$ interprets every move $a;\vec{w},u,\epsilon.\alpha$ made by ${\cal M}_A$ in the imaginary play of $A$ as the move $a;\vec{w},u.\alpha$ in the real play of $B$, and vice versa. Namely, ${\cal M}_B$ interprets every move $a;\vec{w},u.\alpha$ by its environment in the real play of $B$ as the move $a;\vec{w},u,\epsilon.\alpha$ by the imaginary adversary of ${\cal M}_A$ in the play of $A$.

The case of the $a$'th oformula of $A$ being in $O_{n+2}$ but not in $O_{n+1}$ is similar.

Finally, suppose that the $a$'th oformula of $A$ is in both $O_{n+1}$ and $O_{n+2}$. Whenever the environment makes a move $a;\vec{w},u.\alpha$ in the real play of $B$, ${\cal M}_B$ translates it as the move $a;\vec{w},u_1,u_2.\alpha$ by the imaginary adversary of ${\cal M}_A$ in the play of $A$, where $(u_1,u_2)$ is the defusion of $u$. Next, whenever ${\cal M}_A$ makes a move $a;\vec{w},u_1,u_2.\alpha$ in the imaginary play of $A$, ${\cal M}_B$ translates it as a series of moves $a;\vec{w},v_1.\alpha, \ldots, a;\vec{w},v_k.\alpha$ in the real play of $B$, where $v_1, \ldots, v_k$ are all the fusions of $u_1$ and $u_2$.

For every oformula $F_a$ of either cirquent, every sequence $\vec y$ of $n$ infinite but essentially finite bitstrings and any infinite but essentially finite bitstring $x$, we have $\Gamma^{\preceq a;\vec{y},x}=\Omega^{\preceq a;\vec{y},x_1,x_2}$, where $x_1,x_2$ are infinite but essentially finite bitstrings satisfying that $x_1=x$ (when $F_a$ is contained in $O_{n+1}$ but not $O_{n+2}$), or $x_2=x$ (when $F_a$ is contained in $O_{n+2}$ but not $O_{n+1}$), or $(x_1,x_2)$ is the defusion of $x$ (when $F_a$ is contained in both $O_{n+1}$ and $O_{n+2}$, or is contained in neither of them). So it is obvious that ${\cal M}_B$ wins $B$ as long as ${\cal M}_A$ wins $A$.\vspace{2mm}

{\bf (7)}\ In this and the remaining clauses of the present proof, we will limit our descriptions to what moves ${\cal M}_B$ needs to properly reinterpreted and how, with any unmentioned sorts of moves implicitly assumed to remain unchanged.

Assume $B$ is obtained from $A$ by Disjunction Introduction. Let us assume that the last ($a$'th) oformula of $B$ is $E\vee F$, and the last two ($a$'th and $b$'th, where $b=a+1$) oformulas of $A$ are $E$ and $F$. As always, ${\cal M}_B$ reinterprets every move $a;\vec{w}.\alpha$ (resp. $b;\vec{w}.\alpha$) by either player in the imaginary play of $A$ as the move $a;\vec{w}.1.\alpha$ (resp. $a;\vec{w}.2.\alpha$) by the same player in the real play of $B$, and vice versa.

Consider any undergroup $U_i^{B}$ of $B$, and let $U_i^{A}$ be the corresponding undergroup of $A$. As before, ${\cal M}_A$'s winning $A$ means that $U_i^{A}$ is won by ${\cal M}_A$, which, in turn, means that there is an oformula $G$ in $U_i^{A}$ that is won by ${\cal M}_A$. If $G$ is neither $E$ nor $F$, then the oformula $G$ of $B$ is also won by ${\cal M}_B$, because ${\cal M}_B$ plays in $G$ exactly as ${\cal M}_A$ does. Hence $U_i^{B}$ is won by ${\cal M}_B$. If $G$ is $E$, then its being $\top$-won means that ${\cal M}_B$ wins the $E$ component of $E\vee F$, because ${\cal M}_B$ plays in the $E$ component of $E\vee F$ exactly as ${\cal M}_A$ plays in $E$. Therefore, $E\vee F$ is won by ${\cal M}_B$, and hence so is the $E\vee F$-containing undergroup $U_i^{B}$. The case of $G$ being $F$ is similar.\vspace{2mm}

{\bf (8)}\ Assume $B$ follows from $A$ by Conjunction Introduction. We also assume that the last ($a$'th) oformula of $B$ is $E\wedge F$, and the last two ($a$'th and $b$'th, where $b=a+1$) oformulas of $A$ are $E$ and $F$. As the case of Disjunction Introduction, ${\cal M}_B$ reinterprets every move $a;\vec{w}.\alpha$ (resp. $b;\vec{w}.\alpha$) by either player in the imaginary play of $A$ as the move $a;\vec{w}.1.\alpha$ (resp. $a;\vec{w}.2.\alpha$) by the same player in the real play of $B$, and vice versa.

Let $U_i$ be any undergroup of $B$. If $U_i$ does not contain $E\wedge F$, then the corresponding undergroup $V_i$ of $A$ contains neither $E$ nor $F$. In this case, $U_i$ is won by ${\cal M}_B$ for the same reason as in the preceding clause. If $U_i$ contains $E\wedge F$, then there are two undergroups $V_i^{E}$, $V_i^{F}$ of $A$ corresponding to $U_i$, where $V_i^{E}$ contains $E$ (but not $F$), and $V_i^{F}$ contains $F$ (but not $E$), with all other ($\neq E\wedge F$) oformulas of $U_i$ contained by both $V_i^{E}$ and $V_i^{F}$. Of course, both $V_i^E$ and $V_i^F$ are won by ${\cal M}_A$ because ${\cal M}_A$ wins the overall game $A$. This means that there is an oformula $G_1$ (resp. $G_2$) in $V_i^E$ (resp. $V_i^F$) such that ${\cal M}_A$ wins it. If at least one oformua $G\in\{G_1,G_2\}$ is neither $E$ nor $F$, then the corresponding oformula $G$ of $B$ is won by ${\cal M}_B$, because ${\cal M}_B$ plays in $G$ exactly as ${\cal M}_A$ does. Hence the $G$-containing undergroup $U_i$ of $B$ is won by ${\cal M}_B$. If $G_1$ is $E$ and $G_2$ is $F$, then ${\cal M}_A$ winning them means that ${\cal M}_B$ wins both the $E$ and the $F$ components of $E\wedge F$, because ${\cal M_B}$ plays in the $E$ (resp. $F$) component of $E\wedge F$ exactly as ${\cal M_A}$ does in $E$ (resp. $F$).  Hence $E\wedge F$ is won by ${\cal M}_B$, and hence so is the $E\wedge F$-containing undergroup $U_i$.\vspace{2mm}

{\bf (9)}\ Assume $B$ is obtained from $A$ by Recurrence Introduction. That is, the last ($a$'th) oformula of $B$ is $\cst F$, and the last ($a$'th) oformula of $A$ is $F$. We further assume that the number of overgroups of $B$ is $n$, and thus the number of overgroups of  $A$ is  $n+1$. In what follows,  $\vec{w}$ is any sequence of $n$ finite bitstrings, and $b$ is a positive integer not exceeding the number of oformulas of either cirquent.
If $b\neq a$, then ${\cal M}_B$ simply reinterprets every move $b;\vec{w},\epsilon.\alpha$ by either player in the imaginary play of $A$ as the move $b;\vec{w}.\alpha$ by the same player in the real play of $B$, and vice versa. If $b=a$, then ${\cal M}_B$ reinterprets, for any finite bitstring $u$, every move $a;\vec{w},u.\alpha$ by either player in the imaginary play of $A$ as the move $a;\vec{w}.u.\alpha$ by the same player in the real play of $B$, and vice versa.

Consider any undergroup $U_i^{B}$ of $B$. Let $\vec{x}=x_1,\ldots,x_n$ be any sequence of $n$ infinite but essentially finite bitstrings. ${\cal M}_A$'s winning $A$ means that $\Omega$ is a $\top$-won run of $A$ and that the corresponding undergroup $U_i^{A}$ of $A$ is won by ${\cal M}_A$. Then, for any sequence $\vec{y}=x_1,\ldots,x_n,x$, where $x$ is any infinite but essentially finite bitstring,  there is an oformula $F_b$ in $U_i^{A}$ such that $\Omega^{\preceq b;\vec{y}}$ is a $\top$-won run of $F_b$. If such $F_b$ is not the $a$'th oformula $F$, then, in the context of $\vec{x}$, the oformula $F_b$ of $B$ is also won by ${\cal M}_B$, i.e. $\Gamma^{\preceq b;\vec{x}}$ is a $\top$-won run of $F_b$, because ${\cal M}_B$ plays in the thread $\vec{x}$ of $F_b$ in $B$ exactly as ${\cal M}_A$ does in the thread $\vec{y}$ of $F_b$ in $A$. Hence $U_i^{B}$ is won by ${\cal M}_B$. If $F_b$ is the $a$'th oformula $F$, then, in the context of $\vec{x}$, the corresponding oformula $\cst F$ of $B$ is won by ${\cal M}_B$ as well, i.e. $\Gamma^{\preceq a;\vec{x}}$ is a $\top$-won run of $\cst F$. This is so because ${\cal M}_B$ plays in the thread $x$ of $F$ within the thread $\vec{x}$ of $\cst F$ exactly as ${\cal M}_A$ does in the thread $\vec{y}$ of $F$ in $A$. Namely, $(\Gamma^{\preceq a;\vec{x}})^{\preceq x}=\Omega^{\preceq a;\vec{y}}$. Since $\Omega^{\preceq a;\vec{y}}$ is a $\top$-won run of $F$, so is $(\Gamma^{\preceq a;\vec{x}})^{\preceq x}$. Further, due to the arbitrariness of $x$, $\Gamma^{\preceq a;\vec{x}}$ is a $\top$-won run of $\cst F$. Therefore, the $\cst F$-containing undergroup $U_i^{B}$ is won by ${\cal }M_B$.\vspace{2mm}

{\bf (10)}\ Finally, assume that $B$ is obtained from $A$ by Corecurrence Introduction. Let us assume that the last ($a$'th) oformula of $B$ is $\ccost F$, and the last ($a$'th) oformula of $A$ is $F$. And assume that $n$ $(n\geq 0)$ is the number of the {\em new} overgroups $U_j$ in which the $a$'th oformula $F$ was included when moving from $B$ to $A$. Let us further assume that all of such $n$ overgroups are at the end of the list of overgroups of either cirquent. In what follows, let $\vec{w}$ be any sequence of $m$ finite bitstrings, where $m$ is the total number of overgroups of either cirquent minus $n$. We construct the EPM ${\cal M}_B$ as follows.

If, when moving from $B$ to $A$, no new overgroups emerged to include the $a$'th oformula (i.e. $n=0$), then ${\cal M}_B$'s work is simple. What it should do is to ``synchronize" one single (fixed) thread of $F$ within each thread $\vec{x}$ of $\ccost F$ with the same thread $\vec{x}$ of $F$ in $A$.  Specifically, let $z$ be the infinite bitstring $000\ldots$ (note that it is essentially finite). ${\cal M}_B$ translates every move $a;\vec{w}.\alpha$ made by ${\cal M_A}$ in the imaginary play of $A$ as its own move $a;\vec{w}.u.\alpha$ in the real play of $B$, where $u$ is a finite initial segment of $z$ such that $u$ is not a proper prefix of any other finite bitstring $v$ already used in the real play within some move $a;\vec{w'}.v.\beta$. And whenever the environment makes a move $a;\vec{w}.v.\beta$ in the real play of $B$, if $v$ is a prefix of $z$, ${\cal M}_B$ translates it as the move $a;\vec{w}.\beta$ by the imaginary adversary of ${\cal M}_A$ in the play of $A$, otherwise (i.e. $v$ is not a prefix of $z$), ${\cal M}_B$ simply ignores it.

If, when moving from $B$ to $A$, the $a$'th oformula was included by some overgroups of $A$ (i.e. $n\geq 1$), then as always ${\cal M}_B$ works by simulating ${\cal M}_A$. To describe its work, we need to generalize the concepts of fusion and defusion from the case of $n=2$ to the case of $n\geq 1$.

Let $x_1,\ldots,x_n$ be any $n$---finite or infinite---bitstrings. A bitstring $z$ is a {\bf fusion} of $x_1,\ldots,x_n$ iff $z$ is a shortest bitstring such that, for any $i\in\{1,\ldots,n\}$ and any positive integer $j$ not exceeding the length of $x_i$, the following condition is satisfied: the $(jn-n+i)$'th bit of $z$ exists and it is the $j$'th bit of $x_i$. For instance, if $x_1=000$, $x_2=11$, and $x_3=001$, then the fusions of $x_1,x_2,x_3$ are $010010001$ and $010010011$. Note that when all $n$ bitstrings are infinite, they have a unique fusion, as before. The {\bf $n$-defusion} of a bitstring $z$ is the $n$-tuple $(x_1,\ldots,x_n)$, where each $x_i$ is the result of deleting from $z$ all bits except those that were found in positions $j$ such that $j$ modulo $n$ equals $i$. For instance, the $4$-defusion of $00110101101001111$ is $(00101,0101,1011,1101)$. It is obvious that the generalized concepts of fusion and defusion also preserve the essentially finiteness.

Now about the work of ${\cal M}_B$. Whenever the environment makes a move $a;\vec{w},\epsilon,\ldots,\epsilon.u.\alpha$ ($n$ occurrences of $\epsilon$ after $\vec{w}$) in the real play of $B$, ${\cal M}_B$ translates it as the move $a;\vec{w},u_1,\ldots,u_n.\alpha$ made by the imaginary adversary of ${\cal M}_A$ in the play of $A$, where $(u_1,\ldots,u_n)$ is the $n$-defusion of $u$. Next, whenever ${\cal M}_A$ makes a move $a;\vec{w},u_1,\ldots,u_n.\alpha$ in the imaginary play of $A$, ${\cal M}_B$ translates it as a series of its own moves $a;\vec{w},\epsilon,\ldots,\epsilon.v_1.\alpha,\ \ \ldots,\ \ a;\vec{w},\epsilon,\ldots,\epsilon.v_k.\alpha$ in the real play of $B$, where $v_1,\ldots,v_k$ are all the fusions of $u_1,\ldots,u_n$.

As usual, consider any undergroup $U_i^{B}$ of $B$, and let $\vec{x}=\vec{y},x_1,\ldots,x_n$ be any sequence of $(m+n)$ infinite but essentially finite bitstrings, where $\vec{y}$ is any sequence of $m$ infinite but essentially finite bitstrings. Then the corresponding undergroup $U_i^A$ of $A$ is won by ${\cal M}_A$, which, in turn, means that there is an oformula $F_b$ ($1\leq b\leq a$) in $U_i^{A}$ such that ${\cal M}_A$ wins it. If such $F_b$ is not the $a$'th oformula $F$, then the corresponding oformula $F_b$ of $B$ is also won by ${\cal M}_B$, because ${\cal M}_B$ plays in $F_b$ of $B$ exactly as ${\cal M}_A$ does in $F_b$ of $A$. Therefore, the $F_b$-containing undergroup $U_i^{B}$ is won by ${\cal M}_B$. If $F_b$ is the $a$'th oformula $F$, then the corresponding oformula $\ccost F$ of $B$ is won by ${\cal M}_B$ as well. This is so because ${\cal M}_B$ plays in at least one thread of $F$ within $\ccost F$ of $B$ exactly as ${\cal M}_A$ does in $F$ of $A$. Precisely, we have
$(\Gamma^{\preceq a;\vec{y},x_1,\ldots,x_n})^{\preceq x}=\Omega^{\preceq a;\vec{y},x_1,\ldots,x_n}$, where $x$ is the fusion of $(x_1,\ldots,x_n)$. Thus the $\ccost F$-containing undergroup $U_i^{B}$ is won by ${\cal M}_B$.
\end{proof}

\begin{theorem}\label{mainth2}
Every cirquent provable in {\bf CL15}$(\cst)$ is uniformly valid.

Furthermore, there is an effective procedure that takes an arbitrary {\bf CL15}$(\cst)$-proof of an arbitrary cirquent $C$ and constructs a uniform solution of $C$.
\end{theorem}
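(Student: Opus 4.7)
The plan is to proceed by a straightforward induction on the length $n$ of a given \textbf{CL15}$(\cst)$-proof $\langle C_1,\ldots,C_n\rangle$ of $C$, relying on Theorem \ref{mainth1} to handle each step. The entire theorem, including its constructive ``Furthermore'' clause, will be obtained as a single procedure that walks through the proof from $C_1$ to $C_n=C$ and builds a uniform solution along the way.

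For the base case $n=1$, the cirquent $C_1$ must be an instance of Axiom. By the Axiom clause of Theorem \ref{mainth1}, there is an effective procedure that takes $C_1$ and returns a uniform solution $\mathcal{M}_1$ of it. For the inductive step, suppose we have already constructed (effectively, from the initial segment $\langle C_1,\ldots,C_{i-1}\rangle$) a uniform solution $\mathcal{M}_{i-1}$ of $C_{i-1}$. Since $C_i$ follows from $C_{i-1}$ by one of the nine non-axiom rules of \textbf{CL15}$(\cst)$, Theorem \ref{mainth1} applied to that rule supplies an effective procedure that, given the instance $(C_{i-1},C_i)$ and the machine $\mathcal{M}_{i-1}$, returns a machine $\mathcal{M}_i$ that wins $C_i^{*}$ under every interpretation $^{*}$. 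Iterating this up to $i=n$ yields the desired uniform solution $\mathcal{M}_n$ of $C=C_n$, and the uniform validity of $C$ follows.

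The ``Furthermore'' clause is then immediate: the procedure just described is manifestly effective, being the composition of finitely many effective procedures whose selection is determined by reading off, at each step $i$, which rule of \textbf{CL15}$(\cst)$ was applied to pass from $C_{i-1}$ to $C_i$ (information that can be read directly from the proof). One small bookkeeping point worth flagging is that some of the clauses in the proof of Theorem \ref{mainth1} (notably Contraction, Overgroup Duplication, Merging, and Corecurrence Introduction) assumed the input machine to be a BMEPM rather than an EPM; however, since EPMs and BMEPMs compute the same static games and the conversion between them is effective, this causes no obstacle to composing the procedures along the proof.

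There is no real ``hard part'' here---all of the genuine work has already been carried out in Theorem \ref{mainth1}. The only thing requiring a modicum of care is verifying that ``uniform solution'' is preserved by each rule's construction in the sense required for composition, namely that the procedure provided by Theorem \ref{mainth1} is interpretation-independent and needs only the premise's machine (not its correctness) as input, so that the inductive chain $\mathcal{M}_1 \rightsquigarrow \mathcal{M}_2 \rightsquigarrow \cdots \rightsquigarrow \mathcal{M}_n$ indeed produces, at the end, a single machine that wins $C^{*}$ for every interpretation $^{*}$.
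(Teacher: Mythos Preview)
Your proposal is correct and follows exactly the same approach as the paper, which dispatches the theorem in a single line: ``Immediately from Theorem \ref{mainth1} by induction on the lengths of {\bf CL15}$(\cst)$-proofs.'' Your version is simply a more explicit unpacking of that induction, and your remark about the EPM/BMEPM conversion is a reasonable point of care that the paper leaves implicit.
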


\begin{proof}
Immediately  from Theorem \ref{mainth1} by induction on the lengths of {\bf CL15}$(\cst)$-proofs.
\end{proof}

\begin{theorem}\label{mainth3}
For any formula $F$, if {\bf CL15}$(\cst)\vdash F$, then $F$ is uniformly valid.

Furthermore, there is an effective procedure which takes any {\bf CL15}$(\cst)$-proof of any formula $F$ and constructs a uniform solution of $F$.
\end{theorem}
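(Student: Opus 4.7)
The plan is to combine Theorem \ref{mainth2} with Lemma \ref{apr14b} in a direct way. By the very definition given in the excerpt, a {\bf CL15}$(\cst)$-proof of the formula $F$ is, by stipulation, a {\bf CL15}$(\cst)$-proof of the cirquent $F^{\clubsuit}=(\langle F\rangle,\langle\{F\}\rangle,\langle\{F\}\rangle)$. So assuming {\bf CL15}$(\cst)\vdash F$ gives us, for free, a proof of the cirquent $F^{\clubsuit}$.

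Next, I would apply Theorem \ref{mainth2} to the cirquent $F^{\clubsuit}$. That theorem not only asserts the uniform validity of any provable cirquent but, in its ``furthermore'' clause, provides an effective procedure that turns any {\bf CL15}$(\cst)$-proof of $F^{\clubsuit}$ into a uniform solution $\cal M$ of $F^{\clubsuit}$. In particular, $\cal M$ wins $(F^{\clubsuit})^*$ for every interpretation $^*$.

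Finally, I would invoke Lemma \ref{apr14b}, which furnishes an effective function $g$ on machines such that $g({\cal M})$ wins $F^*$ whenever $\cal M$ wins $(F^{\clubsuit})^*$. Since the conclusion of Lemma \ref{apr14b} is uniform in the interpretation $^*$, the single machine $g({\cal M})$ is a uniform solution of $F$, yielding the uniform validity of $F$. Composing the two effective procedures (first the one from Theorem \ref{mainth2}, then $g$ from Lemma \ref{apr14b}) produces the desired effective procedure from {\bf CL15}$(\cst)$-proofs of $F$ to uniform solutions of $F$.

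There is no real obstacle here: the whole content lies in Theorem \ref{mainth2} (which was established via Theorem \ref{mainth1} by induction on proof length) and in Lemma \ref{apr14b}. The present theorem is merely the ``formula-level'' corollary obtained by passing from the cirquent $F^{\clubsuit}$ back to $F$, and the proof amounts to one sentence citing the two preceding results.
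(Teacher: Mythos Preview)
Your proposal is correct and matches the paper's own proof, which simply states ``Immediately from Theorem \ref{mainth2} and Lemma \ref{apr14b}.'' You have merely spelled out in full the two-step composition that the paper leaves implicit.
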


\begin{proof}
Immediately from Theorem \ref{mainth2} and Lemma \ref{apr14b}.
\end{proof}

Below, a {\bf uniformly valid $(\neg,\wedge,\vee,\cst,\ccost)$-principle} means the result of replacing every occurrence of the operator $\cst$ (resp. $\ccost$) by the symbol $!$ (resp. $?$) in some uniformly valid $(\neg,\wedge,\vee,\cst,\ccost)$-formula. Similarly, a {\bf uniformly valid $(\neg,\wedge,\vee,\st,\cost)$-principle} means the result of replacing every occurrence of the operator $\st$ (resp. $\cost$) by the symbol $!$ (resp. $?$) in some uniformly valid $(\neg,\wedge,\vee,\st,\cost)$-formula. The reason for introducing these technical concepts is merely to make it possible to directly compare the otherwise syntactically nonidentical $(\neg,\wedge,\vee,\cst,\ccost)$-formulas with $(\neg,\wedge,\vee,\st,\cost)$-formulas.

\begin{theorem}\label{superset}
The set of uniformly valid $(\neg,\wedge,\vee,\cst,\ccost)$-principles is a proper superset of the set of uniformly valid $(\neg,\wedge,\vee,\st,\cost)$-principles.
\end{theorem}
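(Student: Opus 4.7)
The proof divides cleanly into a routine inclusion and a harder strictness claim.

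For the inclusion, I would observe that the inference rules of {\bf CL15}$(\st)$ and {\bf CL15}$(\cst)$ are syntactically identical; only the intended reading of the recurrence symbol differs. Hence any {\bf CL15}$(\st)$-proof of a $(\neg,\wedge,\vee,\st,\cost)$-formula $F_{\st}$ transcribes verbatim, via the substitution $\st\mapsto\cst$, $\cost\mapsto\ccost$, into a {\bf CL15}$(\cst)$-proof of the corresponding $(\neg,\wedge,\vee,\cst,\ccost)$-formula $F_{\cst}$. Combining the completeness half of Theorem~\ref{niu} with Theorem~\ref{mainth3}, we conclude that whenever $F_{\st}$ is uniformly valid, so is $F_{\cst}$. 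Thus every uniformly valid $(\neg,\wedge,\vee,\st,\cost)$-principle is also a uniformly valid $(\neg,\wedge,\vee,\cst,\ccost)$-principle.

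For strictness, the plan is to exhibit a single formula $F$ in the $(\neg,\wedge,\vee,\cst,\ccost)$-signature which is uniformly valid under the $(\cst,\ccost)$-reading, but whose $(\st,\cost)$-analogue $F'$ fails uniform validity. By the completeness half of Theorem~\ref{niu}, the latter reduces to showing that $F'$ is unprovable in {\bf CL15}$(\st)$; more concretely, one may exhibit a specific interpretation $^{*}$ under which no machine wins $(F')^{*}$. Verifying uniform $\cst$-validity of $F$ would amount to constructing an explicit EPM whose strategy leverages the fact that the essentially finite bitstrings form a recursively enumerable set, so that the machine can process the relevant threads one by one in some fair enumeration --- something no strategy can do with continuum-many arbitrary threads. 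A principle of roughly this form is the content of Conjecture~6.4 of \cite{Japtam}, which the present paper aims to settle.

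The main obstacle is the strictness half: the inclusion half is syntactic bookkeeping made trivial by Theorems~\ref{niu} and~\ref{mainth3}. The real work comprises three pieces: (i) choosing the witness $F$; (ii) designing and verifying a winning EPM for its $(\cst,\ccost)$-interpretation, whose central idea is to enumerate and service the essentially finite threads sequentially; and (iii) refuting uniform $(\st,\cost)$-validity of $F'$, either by direct game-theoretic refutation at a carefully chosen elementary interpretation of the atoms, or by a structural analysis showing that no {\bf CL15}$(\st)$-proof of $F'$ can exist.
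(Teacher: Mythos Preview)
Your inclusion argument is correct and is exactly the paper's: combine the completeness half of Theorem~\ref{niu} with Theorem~\ref{mainth3}, using that {\bf CL15}$(\st)$ and {\bf CL15}$(\cst)$ are syntactically the same calculus.

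For strictness, the paper takes a shortcut you do not: rather than constructing a witness and verifying both halves in situ, it simply cites \cite{Japsep}, where the specific formula
\[
P\ \wedge\ \cst(P\rightarrow P\wedge P)\ \wedge\ \cst(P\vee P\rightarrow P)\ \rightarrow\ \cst P
\]
is shown to be uniformly valid while its $(\st,\cost)$-counterpart is not. Your three-step plan --- choose a witness $F$, build an EPM exploiting the recursive enumerability of essentially finite bitstrings for the positive half, and refute the $\st$-version either semantically or via {\bf CL15}$(\st)$-unprovability --- is the right shape for a self-contained argument and is in spirit how \cite{Japsep} proceeds. But as written your proposal for this half is a plan, not a proof: you neither name the witness nor carry out steps (ii) and (iii). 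The paper sidesteps all of that by delegating the hard work to prior literature, so the entire proof of Theorem~\ref{superset} there is two sentences long.
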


\begin{proof} The fact that the set of uniformly valid $(\neg,\wedge,\vee,\cst,\ccost)$-principles is a {\em superset} of the set of uniformly valid  $(\neg,\wedge,\vee,\st,\cost)$-principles is immediate from Theorems \ref{niu} and \ref{mainth3}. Furthermore, the former set is in fact  a {\em proper} superset of the latter set because, as proven in \cite{Japsep}, the formula $P\wedge\cst(P\rightarrow P\wedge P)\wedge\cst(P\vee P\rightarrow P)\rightarrow\cst P$ is uniformly valid while its counterpart $P\wedge\st(P\rightarrow P\wedge P)\wedge\st(P\vee P\rightarrow P)\rightarrow\st P$ is not.
\end{proof}

\section{A further result}
To make our investigation of the relationship between $\cst$ and $\st$ more comprehensive, in this section we show that $\cst$ is strictly weaker than $\st$ (and thus $\ccost$ is strictly stronger than $\cost$) in the sense that the formula $\st P\rightarrow\cst P$ is uniformly valid while its converse $\cst P\rightarrow\st P$ is not. The first part of this statement is immediate from Lemma \ref{lemma1} of Section 3. So, we only need to prove the second part.

\begin{theorem}\label{notuni}
The formula $\cst P\rightarrow\st P$ is not uniformly valid.
\end{theorem}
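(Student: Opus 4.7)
The plan is to derive a contradiction by combining the assumed uniform validity of $\cst P\rightarrow\st P$ with already-established facts to prove the uniform validity of a formula that \cite{Japsep} shows is not uniformly valid. The decisive external input is the separation already invoked in the proof of Theorem \ref{superset}: the formula $G':=P\wedge\cst(P\rightarrow P\wedge P)\wedge\cst(P\vee P\rightarrow P)\rightarrow\cst P$ is uniformly valid, whereas its $\st$-counterpart $G:=P\wedge\st(P\rightarrow P\wedge P)\wedge\st(P\vee P\rightarrow P)\rightarrow\st P$ is not.

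The chain of uniformly valid implications I would build is as follows. Applying Lemma \ref{lemma1} with $F:=P\rightarrow P\wedge P$ and then with $F:=P\vee P\rightarrow P$ gives $\st(P\rightarrow P\wedge P)\rightarrow\cst(P\rightarrow P\wedge P)$ and $\st(P\vee P\rightarrow P)\rightarrow\cst(P\vee P\rightarrow P)$. Since affine logic is sound with respect to uniform validity (as exploited throughout the proof of Theorem \ref{formula}), the monotonicity-of-conjunction scheme $(A\rightarrow A')\rightarrow((B\rightarrow B')\rightarrow(A\wedge B\rightarrow A'\wedge B'))$ --- applied iteratively together with the trivial $P\rightarrow P$ --- yields $P\wedge\st(P\rightarrow P\wedge P)\wedge\st(P\vee P\rightarrow P)\rightarrow P\wedge\cst(P\rightarrow P\wedge P)\wedge\cst(P\vee P\rightarrow P)$. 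Chaining with $G'$ via the transitivity scheme $(A\rightarrow B)\rightarrow((B\rightarrow C)\rightarrow(A\rightarrow C))$ and modus ponens (both preserving uniform validity, as used in the proofs of Theorem \ref{formula} and Corollary \ref{equivalent}) gives $P\wedge\st(P\rightarrow P\wedge P)\wedge\st(P\vee P\rightarrow P)\rightarrow\cst P$, and one more transitivity step with the assumed $\cst P\rightarrow\st P$ produces $G$. This contradicts the non-uniform-validity of $G$ from \cite{Japsep}.

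The main obstacle is merely bookkeeping: confirming that each step is a documented uniform-validity-preserving inference; all the ingredients --- Lemma \ref{lemma1}, soundness of affine logic, and closure under modus ponens --- have already been invoked elsewhere in the paper. A significantly harder alternative would be a direct proof constructing a specific interpretation $P^{*}=A$ and a winning $\bot$-strategy in $\ccost\neg A\vee\st A$, essentially exhibiting a game $A$ for which $\cst A$ is $\top$-winnable but $\st A$ is not; the indirect route via \cite{Japsep} bypasses any need for such a game-theoretic construction.
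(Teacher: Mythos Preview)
Your argument is correct and is, in fact, precisely the alternative proof the paper itself sketches in the paragraph immediately following its main proof of Theorem~\ref{notuni}. The paper's primary proof, however, takes the direct route you describe as ``significantly harder'': it fixes an arbitrary EPM ${\cal M}$, describes an environment counterstrategy that, in the $\st P$ component, plays a fresh number in each successive finite-bitstring thread so as to make all threads $\Gamma^{\preceq x}$ pairwise distinct; then, by cardinality, some thread $\Gamma^{\preceq y}$ of $\st P$ is not matched (up to label-flip) by any essentially finite thread of $\ccost\neg P$, and one interprets $P$ as the enumeration game whose sole $\bot$-won legal run is $\Gamma^{\preceq y}$.

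The trade-offs are clear. Your indirect route is shorter and reuses only already-established closure facts (Lemma~\ref{lemma1}, affine-logic soundness, modus ponens) plus the separation result of \cite{Japsep}, so it is logically economical---but it is non-constructive and its correctness ultimately rests on the game-theoretic work done in \cite{Japsep}. The paper's direct proof is self-contained and constructive in the sense that it exhibits, for each purported solution ${\cal M}$, an explicit interpretation and counterstrategy; this gives more information (one sees exactly why countably many essentially finite threads cannot cover uncountably many $\st$-threads) at the cost of a little more work. One small inaccuracy in your final sentence: the direct approach does not fix a single game $A$ and then produce a $\bot$-strategy; rather, the interpretation is chosen \emph{after} the run against ${\cal M}$ is determined, which is exactly what refuting uniform (as opposed to multiform) validity permits.
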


\begin{proof}
Let ${\cal M}$ be an arbitrary EPM, i.e. strategy of the machine $(\top)$. Below we construct a counterstrategy ${\cal C}$ such that, when the environment $(\bot)$ follows it, ${\cal M}$ loses $\cst P\rightarrow\st P$ with $P$ interpreted as a certain enumeration game. Here, an {\bf enumeration game} (\cite{Japsep}) is a game where any natural number, identified with its decimal representation, is a legal move by either player at any time (and there are no other legal moves). It should be noted that, as shown in \cite{Japtam2}, every enumeration game is static, and hence is a legitimate value of an interpretation $^*$ on any atom. Hence, due to the arbitrariness of ${\cal M}$, $\cst P\rightarrow\st P$ (i.e. $\ccost\neg P\vee\st P$) is not uniformly valid.

Since $P$ is going to be interpreted as an enumeration game and its legal moves are known even before we actually define that interpretation, in certain contexts we may identify formulas with games without creating any confusion. The work of ${\cal C}$ consists in repeating the following interactive routine over and over again (infinitely many times), where $i$ is the number of the iteration. In our description below, a {\em fresh number} means a natural number that has not yet been chosen in the play by either player as a move in any thread/copy of $P$.\vspace{2mm}

LOOP($i$): Whenever permission is granted by the machine ${\cal M}$, make the move $2.w.u$, where $u$ is a fresh number and $w$ is the $i$th finite bitstring of the lexicographic list of all finite bitstrings.
\vspace{2mm}

Consider the run $\Delta$ generated by ${\cal M}$ in the scenario when its adversary follows the above counterstrategy. Let $\Omega=\Delta^{1.}$ and $\Gamma=\Delta^{2.}$. That is, $\Omega$ is the (sub)run that took place in the $\ccost\neg P$ component, and $\Gamma$ is the (sub)run that took place in the $\st P$ component. From some analysis of the work of LOOP, details of which are left to the reader, one can see that  $\Gamma^{\preceq x_1}\neq\Gamma^{\preceq x_2}$ for any two different infinite bitstrings $x_1$ and $x_2$. Hence, as there are uncountably many infinite bitstrings while only countably many infinite but essentially finite bitstrings, there is an infinite bitstring $y$ such that, for every infinite but essentially finite bitstring $v$,  $\Omega^{\preceq v}\neq\neg\Gamma^{\preceq y}$. Fix this $y$.


Now we select an interpretation $^*$ that interprets $P$ as the enumeration game such that, for any legal run $\Theta$ of the game $P$, $Wn^{P}\langle\Theta\rangle=\bot$ iff $\Theta=\Gamma^{\preceq y}$. We claim that ${\cal M}$ loses the overall game under this interpretation. First, it is obvious that ${\cal M}$ loses the game $P$ in the thread $y$, which means that it loses the $\st P$ component. Next, ${\cal M}$ also loses the $\ccost\neg P$ component because it loses in every essentially finite thread of $\neg P$ within $\ccost\neg P$. This is so because the run that took place in any essentially finite thread of $\neg P$ within $\ccost\neg P$ is won by $\top$ iff it is $\neg\Gamma^{\preceq y}$, which, however, is impossible (due to the above analysis).
\end{proof}

An alternative albeit non-constructive and less direct proof of Theorem \ref{notuni} would rely on Theorem \ref{superset}. Namely, one could show that, if $\cst P\rightarrow\st P$ was uniformly valid and hence (in view of the already proven fact of the uniform validity of the converse of this formula) $\st P$ and $\cst P$ were ``logically equivalent", then they would induce identical logics, in the precise sense that the set of uniformly valid $(\neg,\wedge,\vee,\cst,\ccost)$-principles would coincide with the set of uniformly valid $(\neg,\wedge,\vee,\st,\cost)$-principles, contrary to what Theorem \ref{superset} asserts.

\end{document}